\newif\ifrenderfigures
\theoremstyle{theorem}
\newtheorem{theorem}{Theorem}
\newtheorem{corollary}[theorem]{Corollary}
\newtheorem{lemma}[theorem]{Lemma}
\newtheorem{remark}[theorem]{Remark}
 \newtheorem{definition}[theorem]{Definition}
\def\rem#1{{\marginpar{\raggedright\scriptsize #1}}}
\providecommand{\setN}{\mathbb{N}}
\providecommand{\setZ}{\mathbb{Z}}
\providecommand{\setR}{\mathbb{R}}
\newcommand{\E}{\mathop{\mathbb{E}}}
\newcommand{\p}{\mathsf{P}}
\newcommand{\Prec}{\mathsf{prec}}
\newcommand{\Intervals}{c\mathsf{-intervals}}
\newcommand{\dup}{\mathsf{dup}}
\newcommand{\Cmax}{C_{\mathsf{max}}}
\newcommand{\wjcj}{\sum w_j C_j}
        \def\drawRect#1#2#3#4#5{
           \FPeval{\x2}{(#2) + #4} 
           \FPeval{\y2}{(#3) + #5} 
           \pspolygon[#1](#2,#3)(\x2,#3)(\x2,\y2)(#2,\y2)
        }
\DeclareMathAlphabet{\pazocal}{OMS}{zplm}{m}{n}
\title{Scheduling with Communication Delays \\ via LP Hierarchies and Clustering}
\author{Sami Davies\thanks{University of Washington, Seattle. Email: {\tt\{daviess,rothvoss,yihaoz93\}@uw.edu}. Thomas Rothvoss is supported by NSF CAREER grant 1651861 and a David \& Lucile Packard Foundation Fellowship.} \qquad Janardhan Kulkarni\thanks{Microsoft Research, Redmond. Email: {\tt\{jakul,jatarnaw\}@microsoft.com}.} \qquad Thomas Rothvoss\footnotemark[1] \\ Jakub Tarnawski\footnotemark[2] \qquad Yihao Zhang\footnotemark[1]}
\date{\today, \currenttime}
\begin{document}

\maketitle

\begin{abstract}
  We consider the classic problem of scheduling jobs with precedence constraints on identical machines
  to minimize  makespan,
  in the presence of {\em communication delays}.
  In this setting,
  denoted by $\p \mid \Prec, c \mid \Cmax$,
  if two dependent jobs are scheduled on different machines,
  then at least $c$ units of time must pass between their executions.
  Despite its relevance to many applications, this model remains one of the most poorly understood  in scheduling theory.
  Even for a special case where an unlimited number of machines is available, the best known approximation ratio is
  $2/3 \cdot (c+1)$, whereas  Graham's greedy list scheduling algorithm already gives a $(c+1)$-approximation in that setting.
  An outstanding open problem in the top-10 list by Schuurman and Woeginger and its recent update by Bansal
  asks whether there exists a constant-factor approximation algorithm.
  
  In this work we give a polynomial-time $O(\log c \cdot \log m)$-approximation algorithm
  for this problem,
  where $m$ is the number of machines and $c$ is the communication delay.
  Our approach is based on a Sherali-Adams lift of a linear programming relaxation
  and a randomized clustering of the semimetric space induced by this lift.

\end{abstract}

\section{Introduction}

Scheduling jobs with precedence constraints is a fundamental problem in approximation algorithms and combinatorial optimization.
In this problem we are given $m$ identical  machines and a set  $J$ of $n$ jobs, where each job $j$ has a processing length $p_j  \in \setZ_+$.
The jobs have precedence constraints, which are given by a partial order $\prec$. 
A constraint $j \prec j'$ encodes that job $j'$ can only start after job $j$ is completed.
The goal is to find a schedule of jobs that minimizes {\em makespan}, which is the completion time of the last job.
This problem is denoted\footnote{
	Throughout the paper we use the standard scheduling three-field notation~\cite{GLLR79,VeltmanLL90}.
	The respective fields
	denote:
	\textbf{(1)
	number of identical machines:} $\p \infty$: unlimited; $\p$: number $m$ of machines given as input; $\p m$: constant number $m$ of machines,
	\textbf{(2) job properties:} $\Prec$: precedence constraints; $p_j=1$: unit-size jobs; $c$:~communication delays of length $c$ (can be $c_{jk}$ if dependent on jobs $j \prec k$); $\Intervals$: see Section~\ref{sec:approximation_for_pinfty}; $\dup$: allowed duplication of jobs,
	\textbf{(3) objective:} $\Cmax$: minimize makespan; $\wjcj$: minimize weighted sum of completion times.
}
by $\p \mid \Prec \mid \Cmax$.
In a seminal result from 1966, Graham ~\cite{GrahamListScheduling1966} showed that the greedy list scheduling algorithm achieves a $\left(2- \frac{1}{m}\right)$-approximation. 
By now, our understanding of the approximability of this basic problem is almost complete: 
it had been known since the late `70s, due to a result by Lenstra and Rinnooy Kan~\cite{LR78}, that it is NP-hard to obtain better than $4/3$-approximation, and in 2010 Svensson~\cite{Svensson10} showed that, assuming a variant of the Unique Games Conjecture~\cite{BansalK10}, it is NP-hard to get a $(2-\varepsilon)$-approximation for any $\varepsilon > 0$.

The above precedence-constrained scheduling problem models the task of distributing workloads onto multiple processors or servers, which is ubiquitous in computing.
This basic setting takes the dependencies between work units into account, but not the data transfer costs between machines, which is critical in applications. 
A precedence constraint $j \prec j'$ typically implies that the input to $j'$ depends on the output of $j$.
In many real-world scenarios, especially in the context of scheduling in data centers, if $j$ and $j'$ are executed on different machines, then the {\em communication delay} due to transferring this output to the other machine cannot be ignored.
This is an active area of research in applied data center scheduling literature, where several new abstractions have been proposed to deal with communication delays~\cite{Chowdhury, guo2012spotting,hong2012finishing,shymyrbay2018meeting,zhang2012optimizing,zhao2015rapier,luo2016towards}.
 Another timely example is found in the parallelization of Deep Neural Network training
(the machines being accelerator devices such as GPUs, TPUs, or FPGAs).
There, when training the network on one sample/minibatch per device in parallel, the communication costs incurred by synchronizing the weight updates in fact dominate the overall running time~\cite{narayanan2018pipedream}.
Taking these costs into account,
it turns out that
it is better to split the network onto multiple devices, forming a ``model-parallel'' computation pipeline~\cite{huang2019gpipe}.
In the resulting \emph{device placement} problem,
the optimal split crucially depends on the communication costs
between dependent layers/operators.

\medskip

A classic model that captures the effect of data transfer latency on scheduling decisions is the problem of {\em scheduling jobs with precedence and communication delay constraints}, introduced by Rayward-Smith~\cite{RAYWARDSMITH1987} and Papadimitriou and Yannakakis~\cite{PapadimitriouY90}.
The setting, denoted by $\p \mid \Prec, c \mid \Cmax$, is similar to the makespan minimization problem described earlier, except for one crucial difference.
Here  we are  given a {\em communication delay parameter} $c \in  \setZ_{\ge 0}$, and the output schedule must satisfy the property that if $j \prec j'$ and $j$, $j'$ are scheduled on {different machines}, then $j'$ can only start executing at least $c$ time units after $j$ had finished. 
On the other hand,  if $j$ and $j'$ are scheduled on the {same machine}, then $j'$ can start executing immediately after $j$ finishes. 
In a closely related problem, denoted by $\p \infty \mid \Prec, c \mid \Cmax$, a schedule can use as many machines as desired.
The goal is to schedule jobs {\em non-preemptively} so as to minimize the makespan.  In a non-preemptive schedule, each job $j$ needs to be  assigned to a single machine and executed during $p_j$ consecutive timeslots.
The problems $\p \mid \Prec, c \mid \Cmax$ and $\p \infty \mid \Prec, c \mid \Cmax$ are the focus of this paper.

Despite its theoretical significance and practical relevance, very little is known about the communication delay setting.
A direct application of Graham's~\cite{GrahamListScheduling1966} list scheduling algorithm yields a $(c+2)$-approximation,
and no better algorithm is known for the problem.
Over the years, the problem has attracted significant attention, but all known results,
which we discuss below in Section~\ref{sec:related_work}, concern special settings, small communication delays, or hardness of approximation. 
To put this in perspective, we note that the current best algorithm for general~$c$~\cite{GiroudeauKMP08}, which achieves an approximation factor of $2/3 \cdot (c+1)$, only marginally improves on Graham's algorithm while requiring the additional assumptions that the number of machines is unbounded and $p_j = 1$.
This is in sharp contrast to the basic problem  $\p \mid \Prec \mid \Cmax$ (which would correspond to the case $c=0$), where the approximability of the problem is completely settled under a variant of the Unique Games Conjecture.
This situation hints that incorporating communication delays in scheduling  decisions requires fundamentally new algorithmic ideas compared to the no-delay setting.
Schuurman and Woeginger~\cite{SW99a} placed the quest for getting better algorithms to the problem in their influential list of top-10 open problems in scheduling theory.
In a recent MAPSP 2017 survey talk,  Bansal~\cite{Bansalmapsp} highlighted the lack of progress on this model, describing it as ``not understood at all; almost completely open'', and suggested that this is due to the lack of promising LP/SDP relaxations.

\subsection{Our Contributions}

The main result of this paper is the following:

\begin{theorem} \label{thm:main}
There is a randomized $O(\log c \cdot \log m)$-approximation algorithm for $\p \mid \Prec, c \mid \Cmax$ with expected polynomial running time, where $c, p_j \in \setN$.
\end{theorem}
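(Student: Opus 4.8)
The plan is to combine a time-indexed linear programming relaxation, strengthened by a small (constant, or at most $\mathrm{poly}(\log)$) number of rounds of the Sherali--Adams hierarchy, with a randomized clustering of the jobs that is then turned into an actual schedule. After guessing the optimal makespan $T$ by binary search, I would write an LP with assignment variables $x_{j,i,t}$ indicating that job $j$ runs on machine $i$ starting at time $t$, together with the usual constraints: each job is fully assigned, each machine runs at most one job per time slot, and precedence $j \prec k$ is respected. The communication-delay requirement---if $j \prec k$ and they land on different machines then $k$ starts at least $c$ after $j$ finishes---is not expressible as a linear inequality in the $x$ variables alone, so the point of the lift is to introduce joint variables for pairs (and short tuples) of assignments; in the lifted space one \emph{can} enforce, e.g., that conditioned on $j$ finishing at time $s$ on machine $i$, the mass of $k$ on machines other than $i$ before time $s+c$ vanishes. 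The first thing to check is that a bounded number of rounds---enough to condition along a precedence edge together with a machine---suffices to make this constraint meaningful, and that the resulting LP still has polynomial size and is solvable in polynomial time.

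Next I would extract from an optimal lifted solution a semimetric on a suitable ground set, morally the jobs (or job/time-interval pairs). The distance $d(j,k)$ should be small exactly when the relaxation is confident that $j$ and $k$ sit on the same machine at nearby times and should grow when they are likely separated; concretely one can take a normalization of the LP probability that $j$ and $k$ are assigned to different machines, perhaps combined with a time-difference term, and verify the triangle inequality from the Sherali--Adams local consistency of the pairwise marginals. The key structural point to establish here is that along any precedence chain the accumulated distances are controlled by $T$---roughly, a fractional schedule of makespan $T$ cannot spread a chain over metric distance much larger than $T$---so that bounded-diameter clusters in this metric correspond to chains that the optimum essentially keeps together on one machine.

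With the metric in hand, I would run a randomized low-diameter decomposition of Calinescu--Karloff--Rabani / Fakcharoenphol--Rao--Talwar type at geometrically increasing scales. Each resulting cluster is to be scheduled on a single machine (list-scheduled internally, say via Graham's algorithm), and precedence edges crossing clusters are ``paid for'' by the communication delay $c$, which is affordable precisely because crossing edges connect well-separated points and the decomposition produces few of them in expectation. The loss is $O(\log m)$ from the stretch of the decomposition over the $m$ machines, plus an additional $O(\log c)$ from running the argument over the $\Theta(\log c)$ scales that interpolate between the regime where the delay dominates and where processing dominates, giving the claimed $O(\log c \cdot \log m)$ factor after recursively combining the per-cluster schedules; feasibility of the resulting non-preemptive schedule on $m$ machines should follow by charging the load of each machine to the LP value on its cluster.

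The main obstacle, I expect, is exactly this rounding step: turning the fractional lifted solution plus the random clustering into a \emph{feasible} non-preemptive schedule whose makespan is within the claimed factor. One must simultaneously (i) respect the $m$-machine budget, i.e.\ argue the clusters pack onto $m$ machines without overloading any of them, (ii) respect precedence \emph{and} the communication delays on the edges that survive between clusters, and (iii) keep the makespan blow-up multiplicative in $\log c \cdot \log m$ rather than additive in $c$. Achieving all three at once is where the precise definition of the metric, the number of Sherali--Adams rounds, and the padding guarantee of the decomposition must be tuned against one another; a natural warm-up is to first carry out the whole argument for $\p \infty \mid \Prec, c \mid \Cmax$, where concern (i) disappears, and then extend it to bounded $m$.
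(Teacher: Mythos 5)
Your high-level plan (Sherali--Adams lift of a time-indexed LP, a semimetric extracted from pairwise ``same machine, nearby time'' probabilities, CKR-style low-diameter clustering, one machine per cluster) is indeed the skeleton of the paper's argument, but as written it has gaps at exactly the points you flag as ``the main obstacle,'' and the mechanisms you propose there would not close them. First, you have no bound on the size of a cluster. The paper's rounding only works because the lift implies $\sum_{j_2} y_{j_1,j_2} \leq c$ for every $j_1$ (Lemma~\ref{lem:clustercapacity}), hence any set of diameter less than $1$ in the semimetric $d = 1-y$ contains at most $c/(1-\mathrm{diam})$ unit jobs (Lemma~\ref{lem:SmallNeighborhoodOfClosePoints}); this is what lets a diameter-$\le 1/2$ cluster be executed serially on a single machine inside a length-$2c$ interval, and it is also what bounds $\ln|N(U,\Delta/2)|$ by $O(\log c)$ in the CKR separation probability. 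Your ``charge the load of each machine to the LP value on its cluster'' presupposes exactly this kind of capacity statement without deriving it. Second, your plan to let precedence edges that cross clusters be ``paid for by the communication delay $c$'' because the decomposition cuts few edges in expectation does not give a makespan bound: within a batch of jobs whose LP completion times differ by only $\Theta(1/\log c)$, even a single cut precedence edge forces a delay of $c$ that cannot be charged to the LP. The paper instead drops every job separated from any ancestor, shows via the neighborhood version of the CKR guarantee (applied to the whole ancestor set, with batch width $\delta = \Theta(1/\log c)$ and $\Delta = 1/4$) that each job survives with probability at least $1/2$, and repeats $O(\log m)$ times; this is also where the two logarithms actually come from, rather than from ``geometrically increasing scales'' ($\log c$) and ``stretch over $m$ machines'' ($\log m$) as you suggest.

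Third, the theorem is about finite $m$ and arbitrary $p_j \in \setN$, and your proposal leaves both essentially open (you defer them to an unexplained extension of the $\p\infty$ case). The paper handles them by a separate reduction (Theorem~\ref{thm:ReductionFromGeneral}): split each job into a chain of $p_j$ unit sub-jobs, solve the interval problem $\p \infty \mid \Prec, p_j=1, \Intervals \mid \Cmax$ with the LP parametrized by $m$, convert the interval schedule back (reassembling long jobs on private machines with a constant-factor dilation), merge short jobs sharing a machine and interval into aggregate jobs, and run Graham's list scheduling on $m$ machines, using the bound $\frac{1}{m}\sum_h p_h + \sum_{h \in Q} p_h + c\,|Q|$ together with the fact that the chain length $|Q|$ is at most the number of intervals used; the corner case of optimum makespan at most $c$ needs a separate PTAS via connected components. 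None of this is routine bookkeeping, and without it (or a substitute) the claimed $O(\log c \cdot \log m)$ guarantee for $\p \mid \Prec, c \mid \Cmax$ is not established. A smaller but real concern: your idea of enforcing the delay constraint inside the lift by conditioning (``mass of $k$ on other machines before $s+c$ vanishes'') is not what the paper does and is not needed; the paper only uses the interval relaxation with the single inequality $C_{j_2} \ge C_{j_1} + (1-y_{j_1,j_2})$, and inserts empty intervals afterwards to absorb cross-interval delays.
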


In any non-preemptive schedule the number $m$ of machines is at most the number $n$ of jobs, so for the easier $\p \infty$ version of the problem, the above theorem implies the following:

\begin{corollary} \label{cor:colmain}
There is a randomized $O(\log c \cdot \log n)$-approximation algorithm for $\p \infty \mid \Prec, c \mid \Cmax$ with expected polynomial running time, where $c, p_j \in \setN$.
\end{corollary}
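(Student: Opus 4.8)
\textbf{Proof proposal for Corollary~\ref{cor:colmain}.}
The plan is to derive the corollary directly from Theorem~\ref{thm:main} by bounding the number of machines one ever needs. The key observation is that in any non-preemptive schedule each job $j$ is executed on a single machine during $p_j$ consecutive timeslots, so a schedule for $\p \infty \mid \Prec, c \mid \Cmax$ can touch at most $n = |J|$ distinct machines: machines that run no job can simply be discarded. Consequently the optimal makespan of the instance under $\p \infty$ equals the optimal makespan of the same instance under $\p$ with exactly $m := n$ machines available, since $n$ machines already suffice to realize any schedule that uses arbitrarily many machines, and conversely any schedule on $n$ machines is in particular a valid $\p \infty$ schedule.

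Given this, I would run the randomized algorithm of Theorem~\ref{thm:main} on the input instance with the number of machines set to $m := n$. It returns, in expected polynomial time, a feasible schedule whose makespan is at most $O(\log c \cdot \log n)$ times the optimum for the $m = n$ instance, which by the previous paragraph is the same as the optimum for the $\p \infty$ instance; and this schedule is itself feasible for $\p \infty \mid \Prec, c \mid \Cmax$. This yields the claimed $O(\log c \cdot \log n)$-approximation with expected polynomial running time. There is no real obstacle here beyond the (elementary) reduction in the number of machines; the entire content of the corollary is inherited from Theorem~\ref{thm:main}, with $\log m$ replaced by $\log n$ via the bound $m \le n$.
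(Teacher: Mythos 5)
Your proposal is correct and matches the paper's own (one-sentence) derivation: since any non-preemptive schedule uses at most $n$ machines, the $\p\infty$ optimum coincides with the $\p$ optimum for $m := n$, so running the algorithm of Theorem~\ref{thm:main} with $m = n$ gives the stated $O(\log c \cdot \log n)$ guarantee. Nothing further is needed.
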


For both problems one can replace either $c$ or $m$ by $n$, yielding a $O(\log^2 n)$-approximation algorithm.
Our results make substantial progress towards resolving  one of the questions in ``Open Problem 3'' in the survey of Schuurman and Woeginger~\cite{SW99a}, which asks whether a constant-factor approximation algorithm exists for $\p \infty \mid \Prec, c \mid \Cmax$.

Our approach is based on a Sherali-Adams lift of a time-indexed linear programming relaxation for the problem, followed by a randomized clustering of the semimetric space induced by this lift.
To our knowledge, this is the first instance of a multiple-machine scheduling problem being viewed via the lens of metric space clustering.
We believe that our framework is fairly general and should extend to other problems involving scheduling with communication delays.
To demonstrate the broader applicability of our approach, we also consider the objective of minimizing the weighted sum of completion times. Here each job $j$ has a weight $w_j$, and the goal is to minimize $\sum_{j} w_j C_j$, where $C_j$ is the completion time of $j$.  

\begin{theorem} \label{thm:maincomp}
There is a randomized $O(\log c \cdot \log n)$-approximation algorithm for  $\p \infty \mid \Prec, p_j = 1, c \mid \sum_{j} w_j C_j$ with expected polynomial running time, where $c \in \setN$.
\end{theorem}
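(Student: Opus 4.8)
The proof reuses the LP-hierarchy-plus-clustering framework behind Theorem~\ref{thm:main}, adding a geometric bucketing of completion times to cope with the $\wjcj$ objective. First I would bound the makespan of an optimal schedule by $n$ (running all jobs on a single machine in topological order achieves this and incurs no communication delay), partition the horizon $[0,n]$ into the intervals $I_s=[2^{s-1},2^s)$ for $s=1,\dots,\lceil\log_2 n\rceil$, and replace the objective by $\sum_j w_j 2^{s(j)}$, where $s(j)$ is the index with $C_j\in I_{s(j)}$; this loses only a factor of $2$. I would then take a time-indexed LP relaxation in variables $x_{j,t}$, carrying the assignment constraint $\sum_t x_{j,t}=1$, the precedence inequality for each $j\prec k$, and the communication-delay inequality stating that $k$ may not start within $c$ units after the completion of $j$ unless $k$ immediately follows $j$ on its machine, together with bucket variables $y_{j,s}=\sum_{t\in I_s}x_{j,t}$ and objective $\sum_j w_j\sum_s 2^s y_{j,s}$, and pass to a Sherali--Adams lift of the same order $r=r(c)$ used for Theorem~\ref{thm:main}. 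It then suffices to round the lift to a feasible schedule with $\E[C_j]=O(\log c\cdot\log n)\cdot C_j^{\mathrm{LP}}$ for every $j$, where $C_j^{\mathrm{LP}}:=\sum_s 2^s y_{j,s}^{\mathrm{LP}}$, since summing against $w_j$ then gives $\E[\wjcj]=O(\log c\cdot\log n)$ times the LP value, hence the theorem.

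To build the schedule I would process the buckets in increasing order and, for the job set $J_s$ that the lift places in bucket $s$, condition the Sherali--Adams solution on the bucket assignments of all jobs in buckets $\le s$, read off the semimetric this conditioning induces on $J_s$ exactly as in the makespan algorithm, and apply the randomized clustering subroutine of Theorem~\ref{thm:main} at a logarithmic number ($O(\log c)$) of geometric scales to partition $J_s$ into clusters. Each cluster then receives a fresh machine (machines being unlimited) and its jobs are executed in topological order, each started as soon as its predecessors permit: immediately when the relevant precedence edge stays inside the cluster, and exactly $c$ units after the predecessor's completion when it crosses a cluster boundary. Thus the completion time of a job $j$ is bounded by the maximum over ancestor-chains of $j$ of the chain's length (which is $1$ per ancestor) plus $c$ for each edge of the chain that the clustering cuts.

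The heart of the analysis is to show that along any fixed ancestor-chain of $j$, the expected number of cut edges is $O(\log c\cdot\log n)\cdot C_j^{\mathrm{LP}}/c$; combined with the chain-length term, which the LP forces to be $O(C_j^{\mathrm{LP}})$, this yields $\E[C_j]=O(\log c\cdot\log n)\cdot C_j^{\mathrm{LP}}$. I expect two points to be the main obstacles. The first, shared with Theorem~\ref{thm:main}, is that after conditioning the lift on a whole bucket's worth of bucket assignments — not merely on $O(1)$ coordinates — the induced distances must still form a semimetric with ball-growth estimates strong enough for the region-growing analysis; this is exactly what a Sherali--Adams lift of order $r(c)$ is designed to supply, but verifying it under this heavier conditioning is delicate. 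The second, genuinely new to the $\wjcj$ objective, is the cross-bucket bookkeeping: a single ancestor-chain of $j$ threads through the clusters of many consecutive buckets, and one must argue that the cuts it accrues telescope against $C_j^{\mathrm{LP}}$ — charging the cuts incurred in bucket $s$ to the $\Theta(2^s)$ ``LP budget'' the chain spends in that bucket, and exploiting the geometric growth of the $I_s$ — rather than accumulating a spurious $\log n$ factor from naively summing over all buckets.
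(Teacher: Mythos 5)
There is a genuine gap, and it sits exactly where the paper has to do new work. Your completion-time bound for a job $j$ in a bucket --- ``chain length plus $c$ per cut edge'' --- ignores machine contention inside a cluster: every cluster is placed on a single machine, and nothing in your scheme bounds how many \emph{unrelated} jobs of that cluster must run before $j$. By the LP load constraints a cluster can contain $\Theta(c)$ jobs, so a job whose LP completion time is $O(1)$ can be forced to finish at time $\Theta(c)$, which is far larger than $O(\log c\cdot\log n)\cdot C_j^{\mathrm{LP}}$ when $c$ is large. Your geometric bucketing does not rescue this: for buckets with $2^s\ll c$, running the makespan subroutine costs $\Theta(c\log n)$ time regardless, so the per-job ratio in those buckets blows up by a factor of roughly $c/2^s$. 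This small-$C_j$ regime is precisely the heart of the $\wjcj$ result in the paper: it handles the jobs with $C_j\geq\Omega(c/\log c)$ by the makespan machinery (Lemma~\ref{lem:comp5}), and for the remaining set $J_0$ it orders each cluster by $\alpha$-points $t_j^*$ and proves, via a Sherali--Adams conditioning/volume argument (Lemma~\ref{l:ctsecondbatch}), that a small-diameter cluster scheduled in increasing $t_j^*$ order completes every job by $2t_j^*=O(C_j)$. Your proposal has no analogue of this lemma, and without it the claimed per-job guarantee $\E[C_j]=O(\log c\cdot\log n)\cdot C_j^{\mathrm{LP}}$ does not follow.

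A second concrete problem is your plan to ``condition the Sherali--Adams solution on the bucket assignments of all jobs in buckets $\le s$.'' Conditioning on a job's assignment costs at least one round (Theorem~\ref{thm:PropertiesOfSA}(f)), so conditioning on a whole bucket --- which can contain $\Omega(n)$ jobs --- would require $\Omega(n)$ rounds, i.e.\ an exponential-size LP; no constant-round lift (the paper uses $r=5$, not $r=r(c)$) supports it. The paper never needs such heavy conditioning: the semimetric $d=1-y$ is read off the unconditioned lift, and its triangle inequality and capacity properties only require conditioning on at most three jobs (Lemma~\ref{lem:PropertiesOfSAforSchedulingWithCommDelaysLP}). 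By contrast, the cross-bucket ``telescoping'' you worry about is not where the difficulty lies: the paper simply schedules batches of width $\Theta(c/\log c)$ sequentially, and since a job in batch $k\ge1$ has $C_j\geq k\cdot c/(64\log 4c)$ while the schedule through batch $k$ has length $O(k\cdot c\log n)$, the ratio is immediate.
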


No non-trivial approximation ratio was known for this problem prior to our work.

\subsection{Our Techniques}


As we alluded earlier, there is a lack of
combinatorial lower bounds for scheduling with communication delays. For example, consider {Graham's list scheduling algorithm}, which greedily processes jobs on $m$ machines as soon as they become available. One can revisit the analysis of Graham~\cite{GrahamListScheduling1966} and show that there exists a chain $Q$ of dependent jobs such that the makespan achieved by list scheduling is bounded by

\[
  \frac{1}{m} \sum_{j \in J} p_j + \sum_{j \in Q} p_j + c \cdot (|Q|-1). 
\]
The first two terms are each lower bounds on the optimum --- the 3rd term is not. In particular, it is unclear 
how to certify that the optimal makespan is high because of the communication delays. However, this argument suffices for a
$(c+2)$-approximation, since $p_j \geq 1$ for all $j \in J$.

As pointed out by Bansal~\cite{Bansalmapsp}, there is no known promising LP relaxation. To understand the issue
let us consider the special case $\p \infty \mid \Prec, p_j=1, c \mid \Cmax$. Extending, for example, the LP of Munier and K\"onig~\cite{MunierKonig}, one might choose
variables $C_j$ as completion times, as well as decision variables $x_{j_1,j_2}$ denoting whether $j_2$ is executed in the time window $[C_{j_1},C_{j_1}+c)$ on the same machine as $j_1$. Then we can try to enforce communication delays by requiring that $C_{j_2} \geq C_{j_1} + 1 + (c-1) \cdot (1-x_{j_1,j_2})$ for $j_1 \prec j_2$. Further, we enforce load constraints $\sum_{j_1 \in J} x_{j_1,j_2} \leq c$ for $j_2 \in J$ and $\sum_{j_2 \in J} x_{j_1,j_2} \leq c$ for $j_1 \in J$. To see why this LP fails, note that in any instance where the maximum dependence degree is bounded by $c$, one could simply set $x_{j_1,j_2} = 1$ and completely avoid paying any communication delay.
Moreover, this problem seems to persist when moving to more complicated LPs that incorporate indices for time and machines.

A convenient observation is that, in exchange for a constant-factor loss in the approximation guarantee, it suffices to
find an assignment of jobs to length-$c$ intervals such that dependent jobs scheduled in the same length-$c$ interval must be
assigned to the same machine.
(The latter condition will be enough to satisfy the communication delay constraints
as, intuitively, between every two length-$c$ intervals we will insert an empty one.)
In order to obtain a stronger LP relaxation, we consider an \emph{$O(1)$-round Sherali-Adams lift} of an inital LP with indices for time and machines. 
From the lifted LP, we extract a \emph{distance function} $d: J \times J \to [0,1]$ which satisfies the following properties: 
\begin{enumerate}
\item[(i)] The function $d$ is a \emph{semimetric}.
\item[(ii)] $C_{j_1} +  d(j_1,j_2) \leq C_{j_2}$ for $j_1 \prec j_2$.
\item[(iii)] Any set $U \subseteq J$ with a diameter of at most $\frac{1}{2}$ w.r.t. $d$, satisfies $|U| \leq 2c$.
\end{enumerate}
Here we have changed the interpretation of $C_j$ to the \emph{index} of the length-$c$ interval in which $j$ will be processed.
Intuitively, $d(j_1,j_2)$ can be understood as the probability that jobs $j_1,j_2$ are \emph{not} being scheduled within the same length-$c$ interval on the same machine. To see why a constant number of Sherali-Adams rounds are helpful, observe that the triangle inequality behind $(i)$ is really a property depending only on \emph{triples} $\{ j_1,j_2,j_3\}$ of jobs and an $O(1)$-round Sherali-Adams lift
would be locally consistent for every triple of variables.

We will now outline how to round such an LP solution. 
For jobs whose LP completion times are sufficiently different, say $C_{j_1} + \Theta(\frac{1}{\log(n)}) \leq C_{j_2}$,
we can afford to deterministically schedule $j_1$ and $j_2$ at least $c$ time units apart while only paying a $O(\log n)$-factor more than the LP. Hence the critical case is to sequence a set of jobs $J^* = \{ j \in J \mid C^* \leq C_j \leq C^* + \Theta(\frac{1}{\log(n)}) \}$
whose LP completion times are very close to each other. Note that by property $(ii)$, we know that any dependent jobs $j_1,j_2 \in J^*$ must have $d(j_1,j_2) \leq \Theta(\frac{1}{\log(n)})$.
As $d$ is a semimetric, we can make use of the rich toolset from the theory of metric spaces. In particular, we use an algorithm by Calinescu, Karloff and Rabani~\cite{DBLP:journals/siamcomp/CalinescuKR04}: For a parameter $\Delta>0$, one can partition a semimetric space into \emph{random clusters} so that the
diameter of every cluster is bounded by $\Delta$ and each $\delta$-neighborhood around a node is separated with probability at most  $O(\log(n)) \cdot \frac{\delta}{\Delta}$.
Setting $\delta := \Theta(\frac{1}{\log(n)})$ and $\Delta := \Theta(1)$ one can then show that a fixed job $j \in J^*$ will be
in the same cluster as \emph{all} its ancestors in $J^*$ with probability at least $\frac{1}{2}$, while all clusters have diameter at most $\frac{1}{2}$. By $(iii)$, each cluster will contain at most $2c$ many (unit-length) jobs, and consequently we can schedule all the clusters in parallel, where we drop any job that got separated from any ancestor.
Repeating the sampling $O(\log n)$ times then schedules all jobs in $J^*$. This reasoning results in a $O(\log^2 n)$-approximation
for this problem, which we call $\p \infty \mid \Prec, p_j=1, \Intervals \mid \Cmax$. With a bit of care the approximation factor can be improved to $O(\log c \cdot \log m)$.

Finally, the promised $O(\log c \cdot \log m)$-approximation for the more general problem $\p \mid \Prec, c \mid \Cmax$
follows from a reduction to the described special case $\p \infty \mid \Prec, p_j=1, \Intervals \mid \Cmax$.

\subsection{History of the Problem} \label{sec:related_work}

Precedence-constrained scheduling problems of minimizing the makespan and sum of completion times objectives have been extensively studied  for many decades in various settings. We refer the reader to~\cite{michael2018scheduling,lawler1993sequencing,PruhsST04,ambuhl2008precedence,svensson2009approximability} for more details.
Below, we only discuss results directly related  to the communication delay problem in the offline setting.

\paragraph{Approximation algorithms.}
As mentioned earlier,  Graham's~\cite{GrahamListScheduling1966} list scheduling algorithm
yields a $(c+2)$-approximation
for
$\p \mid \Prec, c \mid \Cmax$,
and a $(c+1)$-approximation for the $\p \infty$ variant.
For unit-size jobs and $c \ge 2$,
Giroudeau, K\"onig, Moulai and Palaysi~\cite{GiroudeauKMP08}
improved the latter
($\p \infty \mid \Prec, p_j=1, c \ge 2 \mid \Cmax$)
to a $\frac{2}{3}(c+1)$-approximation.
For unit-size jobs and $c = 1$,
Munier and K\"onig~\cite{MunierKonig}
obtained a $4/3$-approximation
via LP rounding 
($\p \infty \mid \Prec, p_j=1, c=1 \mid \Cmax$);
for the $\p$ variant,
Hanen and Munier~\cite{HanenMunier73Apx}
gave an easy reduction from the $\p \infty$ variant
that loses an additive term of $1$ in the approximation ratio,
thus yielding a $7/3$-approximation.
Thurimella and Yesha~\cite{ThurimellaYesha}
gave a reduction that,
given an $\alpha$-approximation algorithm for $\p \infty \mid \Prec, c, p_j=1 \mid \Cmax$,
would yield a $(1 + 2 \alpha)$-approximation algorithm for $\p \mid \Prec, c, p_j=1 \mid \Cmax$.

For a constant number of machines,
a hierarchy-based approach of Levey and Rothvoss~\cite{LeveyR16} for the no-delay setting
($\p m \mid \Prec, p_j=1 \mid \Cmax$)
was 
generalized by Kulkarni, Li, Tarnawski and Ye~\cite{KulkarniLTY20}
to allow for communication delays that are also bounded by a constant.
For any $\varepsilon > 0$ and $\hat c \in \setZ_{\ge 0}$,
they give a nearly quasi-polynomial-time $(1+\varepsilon)$-approximation algorithm
for $\p m \mid \Prec, p_j=1, c_{jk} \le \hat c \mid \Cmax$.
The result also applies to arbitrary job sizes, under the assumption that preemption of jobs is allowed, but migration is not.

\paragraph{Hardness.}
Hoogeveen, Lenstra and Veltman~\cite{HoogeveenLV94} showed that even the special case $\p \infty \mid \Prec, p_j=1, c=1 \mid \Cmax$ is NP-hard to approximate to a factor  better than $7/6$.
For the case with bounded number of machines (the $\p$~variant)
they show $5/4$-hardness.
These two results can be generalized for $c \ge 2$ to $(1+1/(c+4))$-hardness~\cite{GiroudeauKMP08}
and $(1+1/(c+3))$-hardness~\cite{BampisGK96},
respectively.
\footnote{
	Papadimitriou and Yannakakis~\cite{PapadimitriouY90} claim a $2$-hardness for
	$\p \infty \mid \Prec, p_j=1, c \mid \Cmax$,
	but give no proof.
	Schuurman and Woeginger~\cite{SW99a} remark that ``it would be nice to have a proof for this claim''.
}

\paragraph{Duplication model.}
The communication delay problem has also been studied (to a lesser extent) in a setting where jobs can be duplicated (replicated), i.e., executed on more than one machine, in order to avoid communication delays.
This assumption seems to significantly simplify the problem,
especially when we are also given an unbounded number of machines:
already in 1990,
Papadimitriou and Yannakakis~\cite{PapadimitriouY90} gave a rather simple $2$-approximation algorithm for $\p \infty \mid \Prec, p_j, c_{jk}, \dup \mid \Cmax$.
Observe that this result holds even when communication delays are unrelated (they depend on the pair of jobs).
The only non-trivial approximation algorithm for arbitrary $c$ and a bounded number of machines is due to Lepere and Rapine~\cite{LepereR02}, who gave an asymptotic $O(\log c / \log \log c)$-approximation for $\p \mid \Prec, p_j=1, c, \dup \mid \Cmax$.
On the hardness side,
Papadimitriou and Yannakakis~\cite{PapadimitriouY90}
showed NP-hardness of
$\p \infty \mid \Prec, p_j=1, c, \dup \mid \Cmax$
(using a large delay $c = \Theta(n^{2/3})$).

Besides being seemingly easier to approximate,
we also believe that the replication model is less applicable in most real-world scenarios due to the computation and energy cost of replication, as well as because replication is more difficult to achieve if the computations are nondeterministic in some sense (e.g.~randomized).

\bigskip
\noindent
Many further references can be found in~\cite{VeltmanLL90,GiroudeauKMP08,Drozdowski09,GiroudeauKoenig07,ColinC91,JungKS93,HanenMunierDuplication}.

\section{Preliminaries}

\subsection{The Sherali-Adams Hierarchy for LPs with Assignment Constraints}

In this section, we review the \emph{Sherali-Adams hierarchy} which provides an automatic strengthening of
linear relaxations for 0/1 optimization problems.
The authorative reference is certainly Laurent~\cite{Comparison-of-Hierarchies-Laurent-MOR03},
and we adapt the notation from Friggstad et al.~\cite{LP-for-DST-FriggstadKKLST-IPCO14}.
Consider a set of variable indices $[n] = \{ 1,\ldots,n\}$ and let $U_1,\ldots,U_N \subseteq [n]$
be subsets of variable indices.
We consider a polytope
\[
  K = \Big\{ x \in \setR^n \mid \tilde{A}x \geq \tilde{b}, \;\; \sum_{i \in U_k} x_i = 1 \;\; \forall k \in [N], \;\; 0 \leq x_i \leq 1 \;\; \forall i \in [n] \Big\},
\]
which we also write in a more compact form as $K = \{ x \in \setR^n \mid Ax \geq b\}$ with $A \in \setR^{m \times n}$ and $b \in \setR^m$.
We note that we included  explicitly the ``box constraints'' $0 \leq x_i \leq 1$ for all variables $i$.
Moreover, the constraint matrix contains \emph{assignment constraints} of the form $\sum_{i \in U_k} x_i = 1$.
This is the aspect that is non-standard in our presentation.

The general goal is to obtain a strong relaxation for the integer hull $\textrm{conv}( K \cap \{ 0,1\}^n)$.
Observe that any point $x \in \textrm{conv}( K \cap \{ 0,1\}^n)$ can be interpreted as a \emph{probability distribution} $X$ over points $K \cap \{ 0,1\}^n$. We know that any distribution can be described by the $2^n$ many values
$y_{I} = \Pr[\bigwedge_{i \in I}(X_i=1)]$ for $I \subseteq [n]$ --- in fact, the probability of any other event can be reconstructed
using the \emph{inclusion-exclusion formula}, for example $\Pr[X_1=1\textrm{ and }X_2=0] = y_{\{1\}}-y_{\{1,2\}}$.
While this is an exact approach, it is also an inefficient one.
In order to obtain a polynomial-size LP, we only work with variables $y_I$ where $|I| \leq O(1)$. 
Hence, for $r \geq 0$, we denote $\pazocal{P}_r([n]) := \{ S \subseteq [n] \mid |S| \leq r\}$ as all the index sets of size at most $r$.

\begin{definition}
  Let $SA_r(K)$ be the set of vectors $y \in \setR^{\pazocal{P}_{r+1}([n])}$ satisfying $y_{\emptyset} = 1$ and
  \[
 \sum_{H \subseteq J} (-1)^{|H|} \cdot \Big(\sum_{i=1}^n A_{\ell,i}y_{I \cup H \cup \{ i\}} - b_{\ell}y_{I \cup H}\Big) \geq 0 \quad \forall \ell \in [m]
\]
for all $I,J \subseteq [n]$ with $|I| + |J| \leq r$.
\end{definition}
The parameter $r$ in the definition is usually called the \emph{rank} or \emph{number of rounds} of the Sherali-Adams
lift.
It might be helpful for the reader to verify that for $I=J=\emptyset$, the constraint simplifies to $\sum_{i=1}^n A_{\ell,i}y_{\{ i\}} \geq b_{\ell}y_{\emptyset}=b_{\ell}$,
which implies that $(y_{\{1\}},\ldots,y_{\{n\}}) \in K$. Moreover it is instructive to verify that for any feasible integral solution  $x \in K \cap \{ 0,1\}^n$ one can set
$y_{I} := \prod_{i \in I} x_i$ to obtain a vector $y \in SA_r(K)$.

\begin{theorem}[Properties of Sherali-Adams] \label{thm:PropertiesOfSA}
  Let $y \in SA_r(K)$ for some $r \geq 0$. Then the following holds: 
  \begin{enumerate}
  \item[(a)] For $J \in \pazocal{P}_r([n])$ with $y_{J} > 0$, the vector $\tilde{y} \in \setR^{\pazocal{P}_{r+1-|J|}([n])}$ defined by $\tilde{y}_{I} := \frac{y_{I \cup J}}{y_J}$ satisfies $\tilde{y} \in SA_{r-|J|}(K)$.
  \item[(b)] One has $0 \leq y_{I} \leq y_J \leq 1$ for $J \subseteq I$ and $|I| \leq r+1$.
  \item[(c)] If $|J| \leq r+1$ and $y_i \in \{ 0,1\} \; \forall i \in J$, then $y_I = y_{I \setminus J} \cdot \prod_{i \in I \cap J} y_i$ for all $|I| \leq r+1$.
  \item[(d)] For $J \subseteq [n]$ with $|J| \leq r$ there exists a distribution over vectors $\tilde{y}$ such that $(i)$ $\tilde{y} \in SA_{r-|J|}(K)$, (ii) $\tilde{y}_i \in \{ 0,1\}$ for $i \in J$, (iii) $y_I = \E[\tilde{y}_I]$ for all $I \subseteq [n]$ with $|I \cup J| \leq r+1$ (this includes in particular all $I \in \pazocal{P}_{r+1-|J|}([n])$).
    \item[(e)] For $I \subseteq [n]$ with $|I| \leq r$ and $k \in [N]$ one has $y_I = \sum_{i \in U_k}y_{I \cup \{i\}}$.
  \item[(f)] Take $H \subseteq [N]$ with $|H| \leq r$ and set $J := \bigcup_{k \in H} U_k$. Then there exists a distribution over vectors $\tilde{y}$ such that (i) $\tilde{y} \in SA_{r-|H|}(K)$, (ii) $\tilde{y}_i \in \{ 0,1\}$ for $i \in J$, (iii) $y_I = \E[\tilde{y}_I]$ for all $I \in \pazocal{P}_{r+1-|H|}([n])$.
  \end{enumerate}
  
\end{theorem}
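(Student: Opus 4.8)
The plan is to derive everything from the definition of $SA_r(K)$ specialized to low-order index pairs, using the fact that the box constraints $0\le x_i\le 1$ and the assignment constraints $\sum_{i\in U_k}x_i=1$ are explicit rows of $A$. Instantiating the constraint at the index pair $(I,\emptyset)$ with the row $x_i\ge 0$ gives $y_{I\cup\{i\}}\ge 0$; with the row $-x_i\ge -1$ it gives $y_I\ge y_{I\cup\{i\}}$; with the two rows $\pm\sum_{i\in U_k}x_i\ge\pm1$ it gives $\sum_{i\in U_k}y_{I\cup\{i\}}=y_I$, which is exactly part~(e). Instantiating the row $-x_i\ge -1$ at the pair $(I,\{j\})$ gives the ``second difference'' inequality $y_I-y_{I\cup\{i\}}\ge y_{I\cup\{j\}}-y_{I\cup\{i,j\}}$. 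In each case one only checks that the index pair has total size at most $r$, so these hold in the ranges of $|I|$ quoted below. Part~(b) then follows by chaining: $y_I\ge 0$ directly, and $y_I\le y_J$ for $J\subseteq I$ by peeling the elements of $I\setminus J$ off one at a time via $y_{K\cup\{i\}}\le y_K$, every intermediate set staying in $\pazocal{P}_{r+1}([n])$. Part~(a) is a purely algebraic identity: setting $\tilde y_I:=y_{I\cup J}/y_J$ and plugging into the $SA_{r-|J|}(K)$ constraint at a pair $(I',J')$ with $|I'|+|J'|\le r-|J|$, the left-hand side equals $\frac{1}{y_J}$ times the $SA_r(K)$ constraint for $y$ at the pair $(I'\cup J,\,J')$, which has total size at most $r$; hence it is nonnegative, and $\tilde y_\emptyset=1$.

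For part~(c), the case $y_i=0$ is immediate from part~(b): if $i\in I$ then $y_I\le y_{\{i\}}=0$, matching the zero factor on the right-hand side. For the case $y_i=1$ the key claim is $y_{K\cup\{i\}}=y_K$ whenever $|K|\le r$. The inequality $\le$ is part~(b); for $\ge$, telescope the second-difference inequality, which shows that $K\mapsto y_K-y_{K\cup\{i\}}$ is non-increasing under adding elements to $K$, so $y_K-y_{K\cup\{i\}}\le y_\emptyset-y_{\{i\}}=1-1=0$, while it is $\ge 0$. Iterating over the elements of $I\cap J$ and using the induction hypothesis on the smaller set yields the product formula on all of $\pazocal{P}_{r+1}([n])$.

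For parts~(d) and~(f) I would use conditioning plus induction. In~(d), induct on $|J|$; pick $j\in J$, let $p:=y_{\{j\}}$, and split the mass: with probability $p$ take $\tilde y^{1}_I:=y_{I\cup\{j\}}/p$, which is in $SA_{r-1}(K)$ by part~(a); with probability $1-p$ take $\tilde y^{0}_I:=(y_I-y_{I\cup\{j\}})/(1-p)$. The one non-routine check is $\tilde y^{0}\in SA_{r-1}(K)$: plugging $\tilde y^{0}$ into its constraint at $(I',J')$ produces, up to the factor $\frac{1}{1-p}$, the difference of the $SA_r(K)$ constraints for $y$ at $(I',J')$ and at $(I'\cup\{j\},J')$, which is precisely the $SA_r(K)$ constraint for $y$ at $(I',\,J'\cup\{j\})$ and hence nonnegative. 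One then checks $\tilde y^{1}$ is $1$ at $j$ and $\tilde y^{0}$ is $0$ at $j$ (comparing expectations and using nonnegativity/boundedness), that the mixture reproduces $y_I$ on the relevant index sets, and applies the induction hypothesis to each branch with $J\setminus\{j\}$ at rank $r-1$; integrality is propagated across the mixture using parts~(b) and~(c). Part~(f) can be handled analogously, but it is cleaner in one shot: let $\sigma$ range over all choice functions assigning each $k\in H$ some $\sigma(k)\in U_k$, put $S_\sigma:=\{\sigma(k):k\in H\}$, sample $\sigma$ with probability $y_{S_\sigma}$, and output $\tilde y^\sigma_I:=y_{I\cup S_\sigma}/y_{S_\sigma}$. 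That $\sum_\sigma y_{S_\sigma}=1$, and that $\sum_\sigma y_{I\cup S_\sigma}=y_I$ for $I\in\pazocal{P}_{r+1-|H|}([n])$, both follow by iterating part~(e) one block at a time; $\tilde y^\sigma\in SA_{r-|H|}(K)$ follows from part~(a) applied with the set $S_\sigma$, which has $|H|\le r$ elements; and $\tilde y^\sigma$ is $0/1$ on all of $J=\bigcup_{k\in H}U_k$ because part~(e) at $I=S_\sigma$ forces $y_{S_\sigma\cup\{i\}}=0$ for every $i\in U_k\setminus\{\sigma(k)\}$, so $\tilde y^\sigma$ vanishes there by part~(b).

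I expect the main obstacle to be the bookkeeping of index-set sizes rather than any single conceptual difficulty: each conditioning step consumes exactly one Sherali--Adams round, and one must verify throughout that every index pair fed into the definition of $SA$ has total size within budget, and that the identities involving an index set $I$ remain valid on $\pazocal{P}_{r+1-|J|}([n])$, with the extension to all $I$ satisfying $|I\cup J|\le r+1$ obtained by factoring out the $0/1$-valued coordinates in $I\cap J$ via part~(c). Keeping the available rank and the admissible index ranges consistent in the inductions for~(d) and~(f), and when recombining the conditioned branches, is the delicate part.
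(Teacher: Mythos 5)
Your proposal is correct, but it follows a genuinely different (and more self-contained) route than the paper. The paper only proves (e) and (f) from scratch, citing Laurent for (a)--(d); you re-derive (a)--(d) directly from the definition, and your sketches (second-difference inequalities for (b)/(c), the two-branch conditioning for (d)) are the standard arguments and check out, including the rank bookkeeping. For (e), the paper conditions via (d) and then invokes (c) to factor $\tilde{y}_{I\cup\{i\}}=\tilde{y}_I\tilde{y}_i$, whereas you read the identity off the lifted assignment rows at the index pair $(I,\emptyset)$ -- a more direct derivation that works because the equality $\sum_{i\in U_k}x_i=1$ appears as two inequality rows of $Ax\ge b$ and $|I|\le r$ keeps the pair in budget. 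For (f), the paper inducts on $|H|$, conditioning on a single block $U_k$ per round (with the set $U^+$ of positive-mass atoms); you instead condition on all blocks at once via choice functions $\sigma$, sampling $S_\sigma$ with probability $y_{S_\sigma}$, using iterated (e) for normalization and reconstruction, (a) applied to $S_\sigma$ (of size at most $|H|\le r$) for membership in $SA_{r-|H|}(K)$, and (e) at $I=S_\sigma$ plus nonnegativity to force the $0/1$ values on $J$. This is essentially the paper's conditioning idea unrolled in one shot; it trades the clean one-block induction for slightly heavier index-set bookkeeping, and you should note explicitly that atoms with $y_{S_\sigma}=0$ are discarded (the terms $y_{I\cup S_\sigma}$ then vanish by (b), so the expectation identity survives) -- this is the analogue of the paper's restriction to $U^+$. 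With that small point made explicit, both arguments establish the theorem; yours buys independence from the external reference at the cost of length.
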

\begin{proof}
  For (a)-(d), we refer to the extensive coverage in Laurent~\cite{Comparison-of-Hierarchies-Laurent-MOR03}. We prove (e) and (f) which are non-standard and custom-tailored to LPs with assignment constraints: 
  \begin{enumerate}
  \item [(e)] Fix $I \subseteq [n]$ with $|I| \leq r$. We apply (d) to obtain a distribution over $\tilde{y}$ with $\tilde{y} \in SA_{r-|I|}(K)$ so that $\tilde{y}_i \in \{ 0,1\}$ for $i \in I$. Then
    \[
\sum_{i \in U_k} y_{I \cup \{ i\}}\stackrel{\textrm{linearity}}{=} \E\Big[ \sum_{i \in U_k} \tilde{y}_{I \cup \{ i\}}\Big] \stackrel{(c)}{=} \E\Big[ \tilde{y}_I \cdot \underbrace{\sum_{i \in U_k} \tilde{y}_i}_{=1}\Big] = \E[\tilde{y}_I] = y_I.
    \]
Here we apply $(c)$ for index sets $I \cup \{ i\}$ where variables in $J := I$ have been made integral. Note that indeed $|I \cup (I \cup \{ i\})| \leq r+1$ as required. 
\item[(f)] By an inductive argument it suffices to consider the case of $|H| = 1$.
  Let $H = \{ k\}$ and set $U := U_k$, i.e. the constraints for polytope $P$ contain the assignment constraint $\sum_{i \in U} x_i = 1$ and we want to make all variables
  in $U$ integral while only losing a \emph{single} round in the hierarchy. Abbreviate $U^+ := \{ i \in U \mid y_{\{i\}} > 0\}$.
  For $i \in U^+$, define $y^{(i)} \in \setR^{\pazocal{P}_{r}([n])}$
  to be the vector with $y^{(i)}_I := \frac{y_{I \cup \{ i\}}}{y_i}$. By (a) we know that $y^{(i)} \in SA_{r-1}(K)$. Moreover $y^{(i)}_{\{i\}} = \frac{y_{\{i\}}}{y_{\{i\}}} = 1$. Then the assignment constraint of the LP forces that $y_{\{i'\}}^{(i)} = 0$ for $i' \in U \setminus \{ i\}$. Now we define a probability distribution over vectors
  $\tilde{y}$ as follows: for $i \in U^+$, with probability $y_i$ we set $\tilde{y} := y^{(i)}$. Then (i) and (ii) hold for $\tilde{y}$  as discussed.
  Property (iii) follows from
  \[
\E[\tilde{y}_I] = \sum_{i \in U^+} y_i y_I^{(i)} = \sum_{i \in U^+} y_i \frac{y_{I \cup \{i\}}}{y_i} = \sum_{i \in U^+} y_{I \cup \{i\}} \stackrel{(b)}{=} \sum_{i \in U} y_{I \cup \{ i\}} \stackrel{(e)}{=} y_I
  \]
 \end{enumerate} 
\end{proof}
It is known that Theorem~\ref{thm:PropertiesOfSA}.(f) holds in a stronger form for the SDP-based \emph{Lasserre hierarchy}.
 Karlin, Mathieu and Nguyen~\cite{Hierarchies-for-Knapsack-KarlinMathieuNguyen-IPCO11} proved a result that can be paraphrased as follows:
  \emph{if one has any set $J \subseteq [n]$ of variables with the property that
there is no LP solution with more than $k$ ones in $J$, then one can make all variables of $J$ integral while losing only $k$ rounds}.
Interestingly, Karlin, Mathieu and Nguyen prove that this is completely false for Sherali-Adams. 
In particular, for a Knapsack instance with unit size items and capacity $2-\varepsilon$, 
the integrality gap is still $2-2\varepsilon$ after $\Theta_{\varepsilon}(n)$ rounds of Sherali-Adams.
In a different setting, Friggstad et al.~\cite{LP-for-DST-FriggstadKKLST-IPCO14}
realized that given a ``tree constraint'', a Sherali-Adams lift can provide
the same guarantees that Rothvoss~\cite{DirectedSteinerTreeAndLasserre-RothvossArxiv2011} derived from Lasserre. 
While Friggstad et al.~did not state their insight in the generality that we need here, our Lemma~\ref{thm:PropertiesOfSA}.(e)+(f) are inspired by their work.

\subsection{Semimetric Spaces}
\label{sec:semimetric spaces}

A \emph{semimetric space} is a pair $(V,d)$ where $V$ is a finite set (we denote $n := |V|$)
and $d : V \times V \to \setR_{\geq 0}$ is a \emph{semimetric}, i.e.
\begin{itemize}
\item $d(u,u) = 0$ for all $u \in U$.
\item Symmetry: $d(u,v) = d(v,u)$ for all $u,v \in V$.
\item Triangle inequality: $d(u,v) + d(v,w) \geq d(u,w)$ for all $u,v,w \in V$. 
\end{itemize}
Recall that the more common notion is that of a \emph{metric}, which additionally requires
that $d(u,v) > 0$ for $u \neq v$.
For a set $U \subseteq V$ we denote the \emph{diameter} as $\textrm{diam}(U) := \max_{u,v \in U} d(u,v)$. 
Our goal is to find a partition $V = V_1 \dot{\cup} \ldots \dot{\cup} V_k$ such that the diameter
of every cluster $V_i$ is bounded by some parameter $\Delta$.
We denote $d(w,U) := \min\{ d(w,u) : u \in U\}$ as the distance to the set $U$.
Moreover, for $r \geq 0$ and $U \subseteq V$, let $N(U,r) := \{ v \in V \mid d(v,U) \leq r\}$ 
be the \emph{distance $r$-neighborhood} of $U$.

We use a very influential clustering algorithm due to
Calinescu, Karloff and Rabani~\cite{DBLP:journals/siamcomp/CalinescuKR04},
which assigns each $v \in V$ to a random cluster center $c \in V$ such that $d(u,c) \leq \beta \Delta$.
Nodes assigned to the same cluster center form one block $V_i$ in the partition. 
Formally the algorithm is as follows:
\begin{center}
  \psframebox{
  \begin{minipage}{14cm}
\textsc{CKR Clustering algorithm} \vspace{2mm} \hrule \vspace{1mm}
{\bf Input:} Semimetric space $(V,d)$ with  $V = \{ v_1,\ldots,v_n\}$, parameter $\Delta>0$ \\
{\bf Output:} Clustering $V = V_1 \dot{\cup} \ldots \dot{\cup} V_k$ for some $k$. \vspace{1mm} \hrule \vspace{1mm}
\begin{enumerate*}\label{alg: CKR}
\item[(1)] Pick a uniform random $\beta \in [\frac{1}{4},\frac{1}{2}]$
\item[(2)] Pick a random ordering $\pi : V \to \{ 1,\ldots,n\}$ 
\item[(3)] For each $v \in V$ set $\sigma(v) := v_{\ell}$ so that $d(v,v_{\ell}) \leq \beta \cdot \Delta$ and $\pi(v_\ell)$ is minimal 
\item[(4)] Denote the points $v \in V$ with $\sigma^{-1}(v) \neq \emptyset$ by $c_1,\ldots,c_k \in V$ and return clusters $V_i := \sigma^{-1}(c_i)$ for $i=1,\ldots,k$
\end{enumerate*}
\end{minipage}}
\end{center}

Note that the algorithm has two sources of randomness: it picks a random parameter $\beta$,
and independently it picks a random ordering $\pi$. Here the ordering is to be understood such that 
element $v_{\ell}$ with $\pi(v_{\ell}) = 1$ is the ``highest priority'' element. 
The original work of Calinescu, Karloff and Rabani~\cite{DBLP:journals/siamcomp/CalinescuKR04} only provided an upper bound on the probability that a short edge $(u,v)$ is separated. Mendel and Naor~\cite{RamseyPartitions-MendelNaorFOCS06} note that the same clustering provides the guarantee of
$\Pr[N(u,t)\textrm{ separated}] \leq 1 - O(\frac{t}{\Delta} \cdot \ln( \frac{|N(u,\Delta)|}{|N(u,\Delta/8)|}))$
for all $u \in V$ and $0\leq t<\frac{\Delta}{8}$.
Mendel and Naor attribute this
to Fakcharoenphol, Rao and Talwar~\cite{TreeMetricFakcharoenpholRaoTalwar-JCSS04} (while Fakcharoenphol, Rao and Talwar\cite{TreeMetricFakcharoenpholRaoTalwar-JCSS04} do not state it explicitly in this form and focus on the ``local growth ratio'' aspect).

We state the formal claim in a form that will be convenient for us.
For the sake of completeness, a proof can be found in the Appendix.
\begin{theorem}[Analysis of CKR] \label{thm:ProbUSeperatedByClustering}
  Let $V = V_1 \dot{\cup} \ldots \dot{\cup} V_k$ be the random partition of the CKR algorithm. The following holds:
  \begin{enumerate*}
  \item[(a)] The blocks have $\textrm{diam}(V_i) \leq \Delta$ for $i=1,\ldots,k$.
  \item[(b)] Let $U \subseteq V$ be a subset of points. Then \vspace{-2mm}
  \[
  \Pr[U\textrm{ is separated by clustering}] \leq \ln\big(2\big|N\big(U,\Delta/2 \big)\big|\big) \cdot \frac{4\textrm{diam}(U)}{\Delta} \leq \ln(2n) \cdot \frac{4\textrm{diam}(U)}{\Delta}.
\]
\end{enumerate*}
\end{theorem}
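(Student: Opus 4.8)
The plan is to prove (a) directly from the triangle inequality and (b) by the standard ``first cluster center that cuts $U$'' analysis (in the style of Fakcharoenphol, Rao and Talwar). For (a): if $u,v \in V_i$ then $\sigma(u)=\sigma(v)=c_i$, so step~(3) gives $d(u,c_i)\le\beta\Delta$ and $d(v,c_i)\le\beta\Delta$, and the triangle inequality yields $d(u,v)\le 2\beta\Delta\le\Delta$ since $\beta\le\tfrac12$.

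For (b) we may assume $U\neq\emptyset$ and $\mathrm{diam}(U)>0$, since otherwise $U$ is never separated and the bound is trivial. Fix the random choices and let $w^\ast$ be the element of $N(U,\beta\Delta)$ with the smallest $\pi$-value; this set is nonempty because it contains $U$, and $w^\ast\in N(U,\Delta/2)$ because $\beta\Delta\le\Delta/2$. The heart of the argument is the claim that $U$ is separated \emph{if and only if} $w^\ast$ does not cover all of $U$, i.e.\ $\max_{u\in U}d(w^\ast,u)>\beta\Delta$. If $d(w^\ast,u)\le\beta\Delta$ for every $u\in U$, then $w^\ast$ is a candidate center for each such $u$ and has the smallest priority among all of $u$'s candidates --- any higher-priority candidate $v$ of $u$ would satisfy $d(v,U)\le d(v,u)\le\beta\Delta$, hence lie in $N(U,\beta\Delta)$, contradicting the choice of $w^\ast$ --- so $\sigma(u)=w^\ast$ for all $u\in U$ and $U$ is not separated. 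Conversely, if $d(w^\ast,u^\ast)>\beta\Delta$ for some $u^\ast\in U$ then $\sigma(u^\ast)\neq w^\ast$, while any $u'\in U$ with $d(w^\ast,u')\le\beta\Delta$ (one exists since $d(w^\ast,U)\le\beta\Delta$) satisfies $\sigma(u')=w^\ast$ by the same reasoning, so $U$ is split.

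Enumerate $N(U,\Delta/2)=\{w_1,\dots,w_N\}$ with $d(w_1,U)\le\cdots\le d(w_N,U)$, where $N:=|N(U,\Delta/2)|$. Whenever $U$ is separated we have $w^\ast=w_j$ for some $j$; call this $w_j$ the cutter. For $w_j$ to be the cutter, two events must occur: (i) $d(w_j,U)\le\beta\Delta<\max_{u\in U}d(w_j,u)$, which depends only on $\beta$, and (ii) $\pi(w_j)<\pi(w_i)$ for all $i<j$ --- since $w^\ast=w_j$ forces $w_j$ to have the smallest priority in $N(U,\beta\Delta)\supseteq\{w_1,\dots,w_{j-1}\}$ --- which depends only on $\pi$. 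The set of $\beta$'s realizing (i) is an interval of length at most $\mathrm{diam}(U)/\Delta$ (triangle inequality: $\max_u d(w_j,u)-d(w_j,U)\le\mathrm{diam}(U)$), and $\beta$ is uniform on an interval of length $\tfrac14$, so $\Pr[(\mathrm{i})]\le\tfrac{4\,\mathrm{diam}(U)}{\Delta}$; also $\Pr[(\mathrm{ii})]\le\tfrac1j$. By independence of $\beta$ and $\pi$, $\Pr[w_j\text{ is the cutter}]\le\tfrac1j\cdot\tfrac{4\,\mathrm{diam}(U)}{\Delta}$. Refining the $j=1$ term --- if $w_1$ is the cutter then $\pi(w_1)$ must beat the priority of the point of $U$ farthest from $w_1$ (a point of $N(U,\beta\Delta)$ distinct from $w_1$), an event of probability $\tfrac12$ independent of $\beta$ --- lets us replace $\tfrac11$ by $\tfrac12$ there. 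Summing over $j$ and using $\sum_{j=1}^N\tfrac1j\le 1+\ln N$ gives $\Pr[U\text{ separated}]\le(\tfrac12+\ln N)\tfrac{4\,\mathrm{diam}(U)}{\Delta}\le\ln(2N)\cdot\tfrac{4\,\mathrm{diam}(U)}{\Delta}$, and $N\le n$ yields the last inequality in the statement.

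The step I expect to require the most care is the ``if and only if'' characterization: one must check that when $w^\ast$ covers $U$, \emph{every} point of $U$ is genuinely routed to $w^\ast$ rather than to some lower-priority center it happens to be closer to (ruled out precisely because any such center also lies in $N(U,\beta\Delta)$), and one must verify that the event ``$w_j$ is the cutter'' factors into a $\beta$-only part and a $\pi$-only part so that independence applies. The rest --- the triangle-inequality bound on the $\beta$-window, the $1/j$ priority bound, and the harmonic-sum estimate (with the small $j=1$ tweak that lands the constant exactly at $\ln(2N)$) --- is routine.
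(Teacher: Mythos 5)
Your proof is correct and follows essentially the same route as the paper's appendix: both identify, for a separated $U$, a single ``responsible'' node (your minimum-priority point of $N(U,\beta\Delta)$, the paper's ``first separator''), sort the candidates in $N(U,\Delta/2)$ by distance to $U$, and bound the probability that the $j$-th candidate is responsible by the product of a $\beta$-window event of measure at most $\mathrm{diam}(U)/\Delta$ (hence probability $4\,\mathrm{diam}(U)/\Delta$) and a priority event of probability $1/j$, using independence of $\beta$ and $\pi$. The one point where you go beyond the paper is the $j=1$ refinement: the paper simply asserts $\sum_{s=1}^{L} \frac{1}{s} \le \ln(2L)$, which fails for small $L$ (e.g.\ $\sum_{s=1}^{1}\frac{1}{s}=1>\ln 2$), whereas your replacement of the first term by $\tfrac12$ (forcing $\pi(w_1)$ to beat the farthest point of $U$) makes the final estimate $\tfrac12+\ln N\le\ln(2N)$ airtight.
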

In the above, \emph{separated} means that there is more than one index $i$ with $V_i \cap U \neq \emptyset$.

\section{An Approximation for  $\p \infty \mid \Prec, p_j=1, \Intervals \mid C_{\max}$}
\label{sec:approximation_for_pinfty}

In this section, we provide an approximation algorithm for scheduling $n$ unit-length jobs $J$ with
communication delay $c \in \setN$ on an unbounded number of machines so that precedence constraints given
by a partial order $\prec$ are satisfied. Instead of working with  $\p \infty \mid \Prec, p_j=1, c \mid C_{\max}$ directly, it will be
more convenient to consider a slight variant that we call $\p \infty \mid \Prec, p_j=1, \Intervals \mid C_{\max}$.
This problem variant has the same input but the time horizon is partitioned into time {\em intervals} of length $c$,
say $I_s = [sc, (s+1)c)$ for $s \in \setZ_{\geq 0}$.
The goal is to assign jobs to intervals and machines. 
We require that if $j_1 \prec j_2$
then either $j_1$ is scheduled in an earlier interval than $j_2$ or $j_1$ and $j_2$ are scheduled in the same
interval on the same machine. Other than that, there are no further communication delays.
The objective function is to minimize the number of intervals used to process the jobs.
In fact we do not need to decide the order of jobs within intervals as any topological order will work.
In a more mathematical notation, the problem asks to find a partition $J = \dot{\bigcup}_{s \in \{0,\ldots,S-1\},i \in \setN} J_{s,i}$
with $|J_{s,i}| \leq c$ such that $S$ is minimized and for every $j_1 \prec j_2$ with $j_1 \in J_{s_1,i_1}$ and  $j_2 \in J_{s_2,i_2}$
one has either $s_1<s_2$ or $(s_1,i_1)=(s_2,i_2)$.
See Figure~\ref{fig:ExamplePinftyIntervals} for an illustration.

It is rather straightforward to give reductions between $\p \infty \mid \Prec, p_j=1, c \mid C_{\max}$ and $\p \infty \mid \Prec, p_j=1, \Intervals \mid C_{\max}$ that only lose a small constant factor in both directions. The only subtle point to consider here
is that when the optimum makespan for $\p \infty \mid \Prec, c \mid C_{\max}$ is less than $c$, the problem admits a PTAS;
we refer to Section~\ref{sec:Reductions} for details.

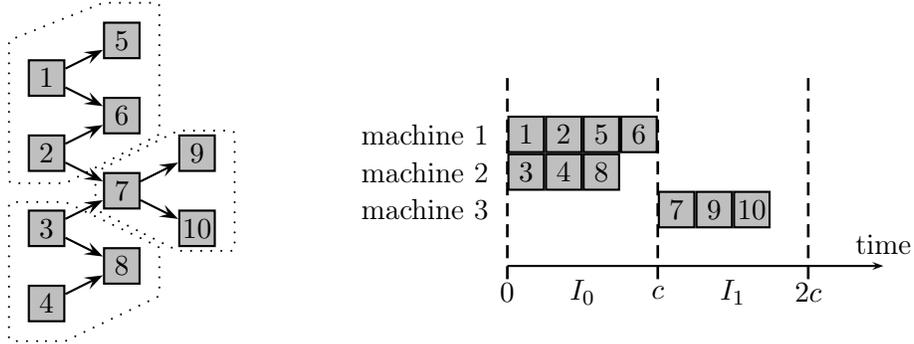
\begin{figure}
\begin{center}
  \psset{unit=0.5cm}
  \begin{pspicture}(0,0)(6,6)
    \pspolygon[linestyle=dotted,linearc=0.1](-1,3.2)(0.5,3.2)(3,4.5)(3,8)(1.5,8)(-1,7)
    \pspolygon[linestyle=dotted,linearc=0.1](-1,-1)(-1,2.7)(0.5,2.7)(3,1.5)(3,0.2)(0.5,-1)
    \pspolygon[linestyle=dotted,linearc=0.1](1.3,2.5)(1.3,3.5)(3.5,4.6)(5,4.6)(5,1.4)(3.5,1.4)
    \fnode[framesize=1,fillcolor=lightgray,fillstyle=solid](0,6){j1}\rput[c](j1){$1$}
    \fnode[framesize=1,fillcolor=lightgray,fillstyle=solid](0,4){j2}\rput[c](j2){$2$}
    \fnode[framesize=1,fillcolor=lightgray,fillstyle=solid](0,2){j3}\rput[c](j3){$3$}
    \fnode[framesize=1,fillcolor=lightgray,fillstyle=solid](0,0){j4}\rput[c](j4){$4$}
    \fnode[framesize=1,fillcolor=lightgray,fillstyle=solid](2,7){j5}\rput[c](j5){$5$}
    \fnode[framesize=1,fillcolor=lightgray,fillstyle=solid](2,5){j6}\rput[c](j6){$6$}
    \fnode[framesize=1,fillcolor=lightgray,fillstyle=solid](2,3){j7}\rput[c](j7){$7$}
    \fnode[framesize=1,fillcolor=lightgray,fillstyle=solid](2,1){j8}\rput[c](j8){$8$}
    \fnode[framesize=1,fillcolor=lightgray,fillstyle=solid](4,4){j9}\rput[c](j9){$9$}
    \fnode[framesize=1,fillcolor=lightgray,fillstyle=solid](4,2){j10}\rput[c](j10){$10$}
    \ncline[arrowsize=5pt]{->}{j1}{j5}
    \ncline[arrowsize=5pt]{->}{j1}{j6}
    \ncline[arrowsize=5pt]{->}{j2}{j6}
    \ncline[arrowsize=5pt]{->}{j2}{j7}
    \ncline[arrowsize=5pt]{->}{j3}{j7}
    \ncline[arrowsize=5pt]{->}{j3}{j8}
    \ncline[arrowsize=5pt]{->}{j4}{j8}
    \ncline[arrowsize=5pt]{->}{j7}{j9}
    \ncline[arrowsize=5pt]{->}{j7}{j10}
  \end{pspicture}
  \begin{pspicture}(-6,-1)(10,4)
    \fnode[framesize=1,fillcolor=lightgray,fillstyle=solid](0.5,3.5){j1}\rput[c](j1){$1$}
    \fnode[framesize=1,fillcolor=lightgray,fillstyle=solid](1.5,3.5){j2}\rput[c](j2){$2$}
    \fnode[framesize=1,fillcolor=lightgray,fillstyle=solid](2.5,3.5){j5}\rput[c](j5){$5$}
    \fnode[framesize=1,fillcolor=lightgray,fillstyle=solid](3.5,3.5){j6}\rput[c](j6){$6$}
    \fnode[framesize=1,fillcolor=lightgray,fillstyle=solid](0.5,2.5){j3}\rput[c](j3){$3$}
    \fnode[framesize=1,fillcolor=lightgray,fillstyle=solid](1.5,2.5){j4}\rput[c](j4){$4$}
    \fnode[framesize=1,fillcolor=lightgray,fillstyle=solid](2.5,2.5){j8}\rput[c](j8){$8$}
    \fnode[framesize=1,fillcolor=lightgray,fillstyle=solid](4.5,1.5){j7}\rput[c](j7){$7$}
    \fnode[framesize=1,fillcolor=lightgray,fillstyle=solid](5.5,1.5){j9}\rput[c](j9){$9$}
     \fnode[framesize=1,fillcolor=lightgray,fillstyle=solid](6.5,1.5){j10}\rput[c](j10){$10$}
    \psline[linewidth=1pt,linestyle=dashed](0,-5pt)(0,5)
    \psline[linewidth=1pt,linestyle=dashed](4,-5pt)(4,5)
    \psline[linewidth=1pt,linestyle=dashed](8,-5pt)(8,5)
    \psline{->}(0,0)(10,0) \rput[c](10,8pt){time}
    \rput[r](-0.5,3.5){machine 1}
    \rput[r](-0.5,2.5){machine 2}
    \rput[r](-0.5,1.5){machine 3}
    \rput[c](0,-10pt){$0$}
    \rput[c](4,-10pt){$c$}
    \rput[c](8,-10pt){$2c$}
    \rput[c](2,-10pt){$I_0$}
    \rput[c](6,-10pt){$I_1$}
  \end{pspicture}
\caption{Left: example of an instance of $\p \infty \mid \Prec, p_j=1, \Intervals \mid C_{\max}$ with $c=4$ (where the partial order $\prec$ is the transitive closure of the depicted digraph). Right: a valid schedule in 2 intervals.\label{fig:ExamplePinftyIntervals}}  
\end{center}
\end{figure}

\subsection{The Linear Program}

Let $m \in \setN$ be a parameter defining the number of machines that we admit for the LP.  
Moreover, let $S \in \setN$ be the number of intervals that we allow for the time horizon.
To obtain an approximation for $\p \infty \mid \Prec, p_j=1, \Intervals \mid C_{\max}$  one can set $m := n$ and perform a binary
search to find the minimal $S$ for which the LP is feasible. But we prefer to keep the approach general.

We construct the LP in two steps. 
First consider the variables
\[
  x_{j,i,s} = \begin{cases} 1 & \textrm{if }j\textrm{ is scheduled on machine }i\textrm{ in interval }I_s \\ 0 & \textrm{otherwise} \end{cases} \quad \forall j \in J, i \in [m], s \in \{ 0,\ldots,S-1\}
\]
Let $K$ be the set of fractional solutions to the following linear system
\begin{eqnarray*}
  \sum_{i \in [m]} \sum_{s \geq 0} x_{j,i,s} &=& 1 \quad \forall j \in J \\
  \sum_{j \in J} x_{j,i,s} &\leq& c \quad \forall i \in [m] \;\; \forall s \in \{ 0,\ldots,S-1\} \\
  0 \leq x_{j,i,s} &\leq& 1 \quad \forall j \in J, i \in [m], s \in \{ 0,\ldots,S-1\}
\end{eqnarray*}
So far, $K$ simply assigns jobs (fractionally) to intervals and machines without taking any precedence  constraints into account. 
Next, we will use a lift $x \in \textsc{SA}_r(K)$ containing variables
$x_{(j_1,i_1,s_1),(j_2,i_2,s_2)}$, which provide the probability for the event that $j_1$ is scheduled in interval $s_1$ on machine $i_1$ and
$j_2$ is scheduled in interval $s_2$ on machine $i_2$. We introduce two more types of decision variables: 
\begin{eqnarray*}
  y_{j_1,j_2} &=& \begin{cases} 1 & j_1\textrm{ and }j_2\textrm{ are scheduled on the same machine in the same interval} \\ 0 & \textrm{otherwise} \end{cases} \\
   C_j &=& \textrm{index of interval where }j\textrm{ is processed}
\end{eqnarray*}
Let $Q(r)$ be the set of vectors $(x,y,C)$ that satisfy
\begin{eqnarray*}
  y_{j_1,j_2} &=& \sum_{s \in \{ 0,\ldots,S-1\}} \sum_{i \in [m]} x_{(j_1,i,s),(j_2,i,s)} \\
  C_{j_2} &\geq& C_{j_1} + (1-y_{j_1,j_2}) \quad \forall j_1 \prec j_2 \\
  C_j &\geq& 0 \quad \forall j \in J \\ 
  x &\in& SA_r(K)
\end{eqnarray*}
The analysis of our algorithm will work for all $r \geq 5$ while solving the LP takes time $n^{O(r)}$.
Here we make no attempt at optimizing the constant $r$.
The main technical contribution of this section is the following rounding result: 
\begin{theorem} \label{thm:LPRoundingTheorem}
Consider an instance with unit-length jobs $J$, a partial order $\prec$, and parameters $c,S,m \in \setN$ such that
  $Q(r)$ is feasible for $r:=5$. Then there is a randomized algorithm with expected polynomial running time that finds a
   schedule for $\p \infty \mid \Prec, p_j=1, \Intervals \mid C_{\max}$ using at most $O(\log m \cdot \log c) \cdot S$ intervals.
\end{theorem}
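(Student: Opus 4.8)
I would set $d(j_1,j_2):=1-y_{j_1,j_2}$, where $y_{j_1,j_2}=\sum_{s}\sum_{i}x_{(j_1,i,s),(j_2,i,s)}$ is the ``fractional collision mass''. Nonnegativity, $d(j,j)=0$ and symmetry are immediate from the definition of $Q(r)$ and Theorem~\ref{thm:PropertiesOfSA}(b); property (ii) is literally the constraint $C_{j_2}\ge C_{j_1}+(1-y_{j_1,j_2})$. The two substantive properties come from the \emph{same} use of the hierarchy. Applying Theorem~\ref{thm:PropertiesOfSA}(f) to the single assignment constraint of one job $j_0$ produces a distribution over $\tilde y\in SA_{r-1}(K)$ in which $j_0$ sits in a fixed cell $(i_0,s_0)$ and $y_I=\E[\tilde y_I]$; inside each such $\tilde y$ one gets $\tilde y_{j,j_0}=\tilde y_{(j,i_0,s_0)}$ for every job $j$ by Theorem~\ref{thm:PropertiesOfSA}(c). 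For the triangle inequality $y_{j_1,j_2}+y_{j_2,j_3}-1\le y_{j_1,j_3}$, take $j_0=j_2$ and use, inside each $\tilde y$, that $\tilde y_{j_1,j_3}\ge\tilde y_{(j_1,i_0,s_0),(j_3,i_0,s_0)}\ge\tilde y_{(j_1,i_0,s_0)}+\tilde y_{(j_3,i_0,s_0)}-1$ (a degree-$1$ Sherali--Adams solution is an honest distribution on any pair of coordinates), then average over $\tilde y$. For property (iii), if $\mathrm{diam}_d(U)\le\tfrac12$ then $y_{j,j_0}\ge\tfrac12$ for all $j\in U$ and any fixed $j_0\in U$; summing the load constraint $\sum_j\tilde y_{(j,i_0,s_0)}\le c$ over $j\in U$ and taking expectations gives $\tfrac12|U|\le\sum_{j\in U}y_{j,j_0}\le c$, i.e.\ $|U|\le 2c$. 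A constant number of rounds suffices, and $r=5$ is comfortably enough.

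\textbf{Step 2: windows, and the role of (iii).} Since $C_j\in[0,S)$, I would split the horizon into $O(S\log c)$ \emph{windows} of width $\delta:=\Theta(1/\log c)$, putting $j$ in window $\lfloor C_j/\delta\rfloor$. By (ii), if $j_1\prec j_2$ lie in the same window then $d(j_1,j_2)\le C_{j_2}-C_{j_1}<\delta$. The crucial consequence of (iii) is that \emph{every $d$-ball of radius $\le\tfrac14$ has diameter $\le\tfrac12$ and hence at most $2c$ jobs.} Thus when we run the CKR algorithm on a subset of a single window with scale $\Delta=\tfrac14$, the set $B_j:=\{j\}\cup\{\text{ancestors of }j\text{ in that window}\}$ has diameter $\le 2\delta$, and its $\tfrac{\Delta}{2}$-neighborhood is a ball of radius $\le\delta+\tfrac{\Delta}{2}\le\tfrac14$, so it contains at most $2c$ points. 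Plugging into Theorem~\ref{thm:ProbUSeperatedByClustering}(b) yields $\Pr[B_j\text{ separated}]\le\ln(4c)\cdot O(\delta/\Delta)\le\tfrac12$ once $\delta=\Theta(1/\log c)$. This is exactly where the $\ln(2n)$ term of CKR is replaced by $\ln O(c)$.

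\textbf{Step 3: the rounding.} Process windows left to right; window $t$ gets its own contiguous block of $2R$ actual intervals, after all earlier blocks, with $R=\Theta(\log m)$. Within window $t$, run $R$ rounds of CKR at scale $\Delta=\tfrac14$ on the jobs of window $t$ not yet scheduled; call a remaining job \emph{good} in a round if it lies in the same cluster as \emph{all} of its ancestors among the currently remaining jobs of the window. Goodness is downward closed under $\prec$, so the good jobs of one round can be laid out cluster by cluster, each cluster on a fresh machine over the $2$ intervals assigned to that round, in topological order; each cluster has $\le 2c$ jobs, so capacity holds, and all precedences are respected because cross-window edges go to strictly earlier blocks, edges between jobs scheduled in different rounds of a window go to strictly earlier intervals, and edges between jobs good in the same round stay inside one cluster. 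By Step 2 each remaining job becomes good with probability $\ge\tfrac12$ per round, so after $R=\Theta(\log m)$ rounds the window is fully scheduled with good probability; rerunning on failure keeps the running time polynomial in expectation. The number of intervals used is $(\#\text{windows})\cdot 2R=O(S\log c)\cdot O(\log m)=O(\log c\cdot\log m)\cdot S$, as claimed. (The easy-case reductions between $\p\infty\mid\Prec,p_j=1,c\mid\Cmax$ and the $\Intervals$ variant, and the small-makespan PTAS corner, are routine.)

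\textbf{Main obstacle.} The technical core is Step~1 (the metric and the two Sherali--Adams arguments) together with the observation in Step~2 that (iii) makes CKR's ``local growth'' term $O(\log c)$ rather than $O(\log n)$; the rest is bookkeeping about ordering clusters so as not to violate precedence, which the downward-closed notion of ``good'' takes care of. The one genuinely delicate point is obtaining the \emph{second} logarithmic factor as $\log m$ rather than $\log n$: a plain union bound over the (up to $n$) jobs of a window only gives $R=O(\log n)$ repetitions and hence $O(\log c\cdot\log n)\cdot S$ intervals, and squeezing this down to $O(\log c\cdot\log m)\cdot S$ as stated requires a sharper accounting of the jobs that survive a repetition --- this is the step I expect to need care beyond the sketch above.
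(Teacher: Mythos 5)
Your Steps 1 and 2 are essentially the paper's argument: the same semimetric $d=1-y$, the same window width $\delta=\Theta(1/\log c)$, the same $\Delta=\tfrac14$ CKR clustering, the same downward-closed notion of keeping a job only if all its in-window ancestors land in its cluster, and the same observation that property (iii) caps $|N(U,\Delta/2)|$ by $2c$ so that the CKR separation bound reads $\ln O(c)$ instead of $\ln(2n)$. (Your derivation of the triangle inequality by conditioning on the middle job and using a pairwise Sherali--Adams inequality is a harmless variant of the paper's Lemma~\ref{lem:metric}, which conditions on all three jobs and applies a union bound to the resulting honest distribution; either works with $r=5$.)

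The genuine gap is exactly where you flagged it, and it is not just a sharper union bound: as written, Step 3 cannot give the stated bound at all, since insisting that a window be \emph{fully} scheduled by the CKR rounds forces $R=\Theta(\log n)$ (indeed with $R=\Theta(\log m)$ and $|J^*|\gg m^{O(1)}$ the per-attempt success probability can be vanishing, so even ``rerun on failure'' does not give expected polynomial time). The paper's missing idea is in Lemma~\ref{lem: FullSchedulingOneIntervalViaCKR}: run only $2\log m$ CKR rounds per window, so the expected number of surviving jobs is at most $|J^*|/(2m)$, and by Markov (re-running the whole window on failure) one may assume at most $|J^*|/m$ survivors; these survivors are \emph{not} clustered further but simply placed consecutively on a single machine in topological order, costing roughly $|J^*|/(mc)$ extra intervals for that window. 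The crucial accounting is that this extra cost telescopes over all windows: $\sum_k |J_k|/(cm)\le |J|/(cm)\le S$, because the assignment and capacity constraints of $K$ force $|J|\le S\,c\,m$. Hence the total is $\frac{S}{\delta}\cdot O(\log m)+O(S)=O(\log c\cdot\log m)\cdot S$, which is how the second factor becomes $\log m$ rather than $\log n$. Without this amortized ``dump the survivors on one machine and charge them to the LP's total capacity'' step, your sketch proves only the weaker $O(\log c\cdot\log n)\cdot S$ bound.
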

We would like to emphasize that we require $\prec$ to be a partial order, which implies that it
is transitive. While replacing any acyclic digraph with its transitive closure  does not change the set of feasible integral
schedules and hence can be done in a preprocessing step, it corresponds to adding constraints to the LP that we
rely on in the algorithm and in its analysis.

We will now discuss some properties that are implied by the Sherali-Adams lift: 
\begin{lemma} \label{lem:PropertiesOfSAforSchedulingWithCommDelaysLP}
  Let $(x,y,C) \in Q(r)$ with $r \geq 2$. Then for any set $\tilde{J} \subseteq J$ of $|\tilde{J}| \leq r-2$ jobs,
  there exists a distribution $\pazocal{D}(\tilde{J})$ over pairs $(\tilde{x},\tilde{y})$ such that
  \begin{enumerate*}
  \item[(A)] $\tilde{x}_{j,i,s} \in \{ 0,1\}$ for all $j \in \tilde{J}$, all $i \in [m]$ and $s \geq 0$.
  \item[(B)]  $\tilde{y}_{j_1,j_2} = \sum_{s \geq 0} \sum_{i \in [m]} \tilde{x}_{j_1,i,s} \cdot \tilde{x}_{j_2,i,s}$ if $|\{j_1,j_2\} \cap \tilde{J}| \geq 1$.
  \item[(C)] $\tilde{x} \in K$, $\tilde{y}_{j_1,j_2} = \sum_{s \in \{ 0,\ldots,S-1\}} \sum_{i \in [m]} \tilde{x}_{(j_1,i,s),(j_2,i,s)}$
    for all $j_1,j_2 \in J$.
   \item[(D)] $\E[\tilde{x}_{j,i,s}] = x_{j,i,s}$ and $\E[\tilde{y}_{j_1,j_2}] = y_{j_1,j_2}$ for all $j,j_1,j_2,i,s$.
  \end{enumerate*}
\end{lemma}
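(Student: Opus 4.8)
The plan is to obtain the lemma as a direct application of the Sherali--Adams machinery packaged in Theorem~\ref{thm:PropertiesOfSA}, specifically parts (f) and (c), applied to the \emph{assignment constraints} of the base polytope $K$. The key observation is that $K$ contains, for every job $j\in J$, the constraint $\sum_{i\in[m]}\sum_{s\in\{0,\ldots,S-1\}}x_{j,i,s}=1$, which is an assignment constraint $\sum_{i\in U_k}x_i=1$ with $U_k=\{(j,i,s):i\in[m],\ s\in\{0,\ldots,S-1\}\}$ in the notation introduced for $SA_r(K)$ (here the ground set of the lift is $J\times[m]\times\{0,\ldots,S-1\}$, but this plays no role). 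So the round-efficient integralization of Theorem~\ref{thm:PropertiesOfSA}(f) is exactly the tool that lets us make all $x$-variables touching the jobs of $\tilde J$ integral while losing only $|\tilde J|$ rounds.

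Concretely, I would let $H$ be the set of indices of the assignment constraints of the jobs in $\tilde J$, so $|H|=|\tilde J|\le r-2\le r$ and $\bigcup_{k\in H}U_k=\tilde J\times[m]\times\{0,\ldots,S-1\}$. Applying Theorem~\ref{thm:PropertiesOfSA}(f) to $x\in SA_r(K)$ produces a distribution over vectors $\tilde x$ with (i) $\tilde x\in SA_{r-|H|}(K)$ and $r-|H|\ge 2$, (ii) $\tilde x_{(j,i,s)}\in\{0,1\}$ for all $j\in\tilde J$, $i\in[m]$, $s\ge 0$, and (iii) $\E[\tilde x_I]=x_I$ for all $I\in\pazocal{P}_{r+1-|H|}([n])$, where $r+1-|H|\ge 3$. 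I would then simply \emph{define} $\tilde y$ from $\tilde x$ via $\tilde y_{j_1,j_2}:=\sum_{s\in\{0,\ldots,S-1\}}\sum_{i\in[m]}\tilde x_{(j_1,i,s),(j_2,i,s)}$ for all $j_1,j_2\in J$, mirroring the first constraint of $Q(r)$, so that $(\tilde x,\tilde y)$ is a deterministic function of $\tilde x$; the distribution $\pazocal{D}(\tilde J)$ is the pushforward of the distribution on $\tilde x$.

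Verifying (A)--(D) is then routine. Property (A) is (ii). For (C), the $I=J=\emptyset$ specialization of the $SA_{r-|H|}(K)$ constraints forces the singleton restriction of $\tilde x$ into $K$, and the $\tilde y$-formula holds by construction. For (D), linearity of expectation and (iii), applied to singletons and to pairs (both of size $\le r+1-|H|$), give $\E[\tilde x_{j,i,s}]=x_{j,i,s}$ and hence $\E[\tilde y_{j_1,j_2}]=\sum_{s,i}\E[\tilde x_{(j_1,i,s),(j_2,i,s)}]=\sum_{s,i}x_{(j_1,i,s),(j_2,i,s)}=y_{j_1,j_2}$. For (B), assume $j_1\in\tilde J$ (the other case being symmetric): for each $i,s$ the index $(j_1,i,s)$ has $\tilde x_{(j_1,i,s)}\in\{0,1\}$ by (ii), so Theorem~\ref{thm:PropertiesOfSA}(c), with that singleton made integral and applied to the pair $\{(j_1,i,s),(j_2,i,s)\}$ of size $2\le(r-|H|)+1$, gives $\tilde x_{(j_1,i,s),(j_2,i,s)}=\tilde x_{(j_1,i,s)}\cdot\tilde x_{(j_2,i,s)}$; summing over $i,s$ turns the definition of $\tilde y_{j_1,j_2}$ into $\sum_{s\ge 0}\sum_{i\in[m]}\tilde x_{j_1,i,s}\cdot\tilde x_{j_2,i,s}$.

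I do not expect a genuine obstacle here; the only thing to be careful about is the round bookkeeping, namely ensuring that after losing $|H|=|\tilde J|$ rounds there are still enough rounds left for (iii) to cover \emph{pairs} of variables and for part (c) to apply to such pairs. Both requirements amount to $r+1-|H|\ge 2$, i.e.\ $|\tilde J|\le r-1$, which the hypothesis $|\tilde J|\le r-2$ satisfies with a round to spare; that spare round is precisely what keeps $\tilde x$ in $SA_2(K)$, which the rounding algorithm will later rely on.
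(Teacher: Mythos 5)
Your proposal is correct and follows essentially the same route as the paper: apply Theorem~\ref{thm:PropertiesOfSA}(f) to the assignment constraints of the jobs in $\tilde J$ (losing one round per job, so $\tilde x$ stays in $SA_2(K)$), then define $\tilde y$ as the linear function of $\tilde x$ given by the LP constraint. Your explicit verification of (B) via Theorem~\ref{thm:PropertiesOfSA}(c) is a detail the paper leaves implicit, but the argument and the round bookkeeping match the paper's proof.
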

\begin{proof}
By Theorem~\ref{thm:PropertiesOfSA}.(f), there is a distribution over $\tilde{x} \in SA_{2}(K)$
which satisfies $(A)$ and has $\tilde{x} \in K$, $\E[\tilde{x}_{j,i,s}] = x_{j,i,s}$ and 
$\E[\tilde{x}_{(j_1,i_1,s_1),(j_2,i_2,s_2)}] = x_{(j_1,i_1,s_1),(j_2,i_2,s_2)}$,
and additionally is integral on variables involving only jobs from $\tilde{J}$, where $|\tilde{J}| \leq r-2$.
Here, we crucially use that every job $j \in \tilde{J}$ is part of an assignment constraint $\sum_{i \in [m]} \sum_{s \geq 0} x_{j,i,s} = 1$,
hence making these variables integral results in the loss of only one round per job.
Then, the $y$-variables are just linear functions depending on the $x$-variables, so we can define
  \[
\tilde{y}_{j_1,j_2} := \sum_{s \in \{ 0,\ldots,S-1\}} \sum_{i \in [m]} \tilde{x}_{(j_1,i,s),(j_2,i,s)} 
  \]
  and the claim follows. 
\end{proof}

From the LP solution, we define a semimetric $d$. Here the intuitive interpretation is that
a small distance $d(j_1,j_2)$ means that the LP schedules $j_1$ and $j_2$ mostly on the same machine and in the same interval. 
\begin{lemma}\label{lem:metric}
Let $(x,y,C) \in Q(r)$ be a solution to the LP with $r \geq 5$. Then $d(j_1,j_2) := 1-y_{j_1,j_2}$ is a semimetric.
\end{lemma}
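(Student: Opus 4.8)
The plan is to verify the three semimetric axioms for $d(j_1,j_2) := 1-y_{j_1,j_2}$ directly. Symmetry is immediate once we observe that $y_{j_1,j_2}$ is symmetric in its two arguments: this follows because $y_{j_1,j_2} = \sum_{s,i} x_{(j_1,i,s),(j_2,i,s)}$ and the lifted Sherali-Adams variable $x_{(j_1,i,s),(j_2,i,s)}$ is indexed by the \emph{set} $\{(j_1,i,s),(j_2,i,s)\}$, hence unchanged under swapping. For $d(u,u)=0$, i.e. $y_{j,j}=1$, I would use that $y_{j,j} = \sum_{s,i} x_{\{(j,i,s)\}} = \sum_{s,i} x_{j,i,s} = 1$ by the assignment constraint in $K$ (the ``collapsed'' diagonal term in the lift equals the first-level variable). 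Alternatively one invokes $y_{j,j}=\E[\tilde y_{j,j}]$ on an integral extension where $\sum_{s,i}\tilde x_{j,i,s}=1$ forces $\tilde y_{j,j}=1$.

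The substantive part is the triangle inequality, which we restate as: for all $j_1,j_2,j_3$,
\[
 (1-y_{j_1,j_2}) + (1-y_{j_2,j_3}) \;\geq\; 1 - y_{j_1,j_3},
\]
equivalently $y_{j_1,j_2} + y_{j_2,j_3} \leq 1 + y_{j_1,j_3}$. The plan is to use Lemma~\ref{lem:PropertiesOfSAforSchedulingWithCommDelaysLP} with $\tilde J = \{j_1,j_2,j_3\}$ (legal since $|\tilde J| = 3 \leq r-2$ when $r\geq 5$), obtaining a distribution $\pazocal{D}(\tilde J)$ over integral extensions $(\tilde x,\tilde y)$ with $\tilde x_{j,i,s}\in\{0,1\}$ for $j\in\tilde J$, with $\tilde y_{j_a,j_b} = \sum_{s,i}\tilde x_{j_a,i,s}\tilde x_{j_b,i,s}$ for all pairs from $\tilde J$, and with $\E[\tilde y_{j_a,j_b}] = y_{j_a,j_b}$. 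It then suffices to prove the inequality pointwise for each integral outcome and take expectations. In an integral outcome, let $(i_a,s_a)$ be the unique machine-interval pair to which $j_a$ is assigned; then $\tilde y_{j_a,j_b} = \mathbf{1}[(i_a,s_a)=(i_b,s_b)]$, i.e. $\tilde y$ restricted to $\{j_1,j_2,j_3\}$ is exactly the equality-indicator of an equivalence relation. For an equivalence relation one always has $\mathbf{1}[a\sim b] + \mathbf{1}[b\sim c] \leq 1 + \mathbf{1}[a\sim c]$ — indeed, if the left side is $2$ then $a\sim b\sim c$ so $a\sim c$ and the right side is $2$; otherwise the left side is at most $1\leq$ right side. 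Taking expectations over $\pazocal{D}(\tilde J)$ gives $y_{j_1,j_2}+y_{j_2,j_3} \leq 1 + y_{j_1,j_3}$, as desired. One should also record that $d(j_1,j_2)\in[0,1]$, which follows from $0\leq y_{j_1,j_2}\leq 1$ (Theorem~\ref{thm:PropertiesOfSA}(b): each $x_{(j_1,i,s),(j_2,i,s)}$ is nonnegative, and their sum is at most $y_{j_1,j_1}=1$ by property (e) summing over the assignment block of $j_2$, or simply $\leq \sum_{s,i} x_{j_1,i,s}=1$).

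The only mild subtlety — and the step I would be most careful about — is making sure the distribution from Lemma~\ref{lem:PropertiesOfSAforSchedulingWithCommDelaysLP} genuinely makes $\tilde y$ on the triple $\{j_1,j_2,j_3\}$ equal to the equivalence-relation indicator: this needs $\tilde x\in K$ (so each $j_a$ is assigned to exactly one $(i,s)$, guaranteeing the relation ``$j_a,j_b$ in same slot'' is reflexive, symmetric and transitive) together with property (B) of the lemma, which expresses $\tilde y_{j_a,j_b}$ as the product-sum $\sum_{s,i}\tilde x_{j_a,i,s}\tilde x_{j_b,i,s}$ precisely when at least one of the two jobs lies in $\tilde J$ — here both do, so (B) applies to all three pairs. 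Everything else is bookkeeping. I expect no real obstacle; the proof is essentially ``a distribution over equivalence relations induces a semimetric via $1-\mathbf{1}[\text{equal}]$, and Sherali-Adams with $O(1)$ rounds certifies local consistency over every triple.''
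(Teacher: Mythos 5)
Your proposal is correct and follows essentially the same route as the paper: apply Lemma~\ref{lem:PropertiesOfSAforSchedulingWithCommDelaysLP} to the triple $\{j_1,j_2,j_3\}$, note that under each integral outcome $\tilde y$ on the triple is the equality indicator of the assigned machine-interval pairs, and conclude by taking expectations. Your pointwise transitivity inequality $\mathbf{1}[a\sim b]+\mathbf{1}[b\sim c]\leq 1+\mathbf{1}[a\sim c]$ is just the union bound $\Pr[Z(j_1)\neq Z(j_3)]\leq\Pr[Z(j_1)\neq Z(j_2)]+\Pr[Z(j_2)\neq Z(j_3)]$ used in the paper, phrased before rather than after taking expectations, and your extra checks of symmetry, $d(j,j)=0$, and $d\in[0,1]$ are fine.
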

\begin{proof}
  The first two properties from the definition of a semimetric (see Section~\ref{sec:semimetric spaces}) are clearly satisfied. We verify the triangle inequality.
  Consider three jobs $j_1,j_2,j_3 \in J$. We apply Lemma~\ref{lem:PropertiesOfSAforSchedulingWithCommDelaysLP} with $\tilde{J} := \{ j_1,j_2,j_3\}$
  and consider the distribution $(\tilde{x},\tilde{y}) \sim \pazocal{D}(\tilde{J})$.
  For $j \in \tilde{J}$, define $Z(j) = (\tilde{s}(j),\tilde{i}(s))$ as the random variable that gives the unique pair of indices such that $\tilde{x}_{j,\tilde{i}(j),\tilde{s}(j)}=1$.
  Then for $j',j'' \in \tilde{J}$ one has
  \[
   d(j',j'') = \Pr[Z(j') \neq Z(j'')] = \Pr\big[ \big(\tilde{s}(j),\tilde{i}(j')\big) \neq \big(\tilde{s}(j''),\tilde{i}(j'')\big)\big]
 \]
  Then indeed
  \begin{eqnarray*}
    d(j_1,j_3) &=& \Pr[Z(j_1) \neq Z(j_3)] \leq \Pr[Z(j_1) \neq Z(j_2) \vee Z(j_2) \neq Z(j_3)] \\
                   &\stackrel{\textrm{union bound}}{\leq}& \Pr[Z(j_1) \neq Z(j_2)] + \Pr[Z(j_2) \neq Z(j_3)] = d(j_1,j_2) + d(j_2,j_3).
 \end{eqnarray*}
\end{proof}

\begin{lemma}\label{lem:clustercapacity}
For every $j_1 \in J$ one has $\sum_{j_2 \in J} y_{j_1,j_2} \leq c$.
\end{lemma}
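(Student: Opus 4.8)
The plan is to reduce the statement to the load constraints $\sum_{j \in J} x_{j,i,s} \le c$ of $K$ by conditioning the Sherali--Adams solution on the placement of the single job $j_1$. Since $j_1$ participates in the assignment constraint $\sum_{i \in [m]}\sum_{s\ge 0} x_{j_1,i,s} = 1$, making its decision variables integral costs only one Sherali--Adams round, which is well within the budget $r \ge 5$; once $j_1$ has been pinned to a single machine/interval pair, the quantity $\sum_{j_2} y_{j_1,j_2}$ becomes exactly the (integral) load of that machine in that interval, which is bounded by $c$.

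Concretely, I would apply Lemma~\ref{lem:PropertiesOfSAforSchedulingWithCommDelaysLP} with $\tilde J := \{j_1\}$ to obtain a distribution $\pazocal{D}(\{j_1\})$ over pairs $(\tilde x,\tilde y)$ with $\tilde x \in K$, with $\tilde x_{j_1,i,s}\in\{0,1\}$ for all $i,s$, with $\tilde y_{j_1,j_2} = \sum_{s\ge 0}\sum_{i\in[m]} \tilde x_{j_1,i,s}\tilde x_{j_2,i,s}$ (property (B), which applies because $j_1 \in \tilde J$), and with $\E[\tilde y_{j_1,j_2}] = y_{j_1,j_2}$ (property (D)). Fix an outcome $(\tilde x,\tilde y)$. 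Because $\tilde x \in K$, the entries $\{\tilde x_{j_1,i,s}\}_{i,s}$ are nonnegative, $0/1$, and sum to $1$, so there is a unique pair $(i^\star,s^\star)$ with $\tilde x_{j_1,i^\star,s^\star}=1$. Property (B) then gives $\tilde y_{j_1,j_2} = \tilde x_{j_2,i^\star,s^\star}$ for every $j_2 \in J$, hence
\[
  \sum_{j_2 \in J} \tilde y_{j_1,j_2} \;=\; \sum_{j_2 \in J} \tilde x_{j_2,i^\star,s^\star} \;\le\; c,
\]
the inequality being the load constraint of $K$ at machine $i^\star$, interval $s^\star$. Taking expectations over $(\tilde x,\tilde y)\sim\pazocal{D}(\{j_1\})$ and using linearity together with $\E[\tilde y_{j_1,j_2}] = y_{j_1,j_2}$ yields $\sum_{j_2\in J} y_{j_1,j_2} = \E\big[\sum_{j_2\in J}\tilde y_{j_1,j_2}\big]\le c$, as claimed.

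The only mild subtlety, rather than a real obstacle, is the bookkeeping of the round budget and the observation that property (B) requires $j_1 \in \tilde J$, which is why one conditions on $j_1$ (not on $j_2$). If one prefers to bypass Lemma~\ref{lem:PropertiesOfSAforSchedulingWithCommDelaysLP}, the same bound also follows in one line directly from the definition of $SA_r(K)$: apply the defining inequality to the row of $A$ encoding $\sum_{j\in J} x_{j,i,s}\le c$ with index sets $I = \{(j_1,i,s)\}$ and $J = \emptyset$ (here $|I|\le r$), obtaining $\sum_{j_2\in J} x_{(j_1,i,s),(j_2,i,s)} \le c\cdot x_{j_1,i,s}$ for each fixed $(i,s)$, and then summing over $i\in[m]$ and $s\in\{0,\dots,S-1\}$ and using $\sum_{i,s} x_{j_1,i,s}=1$.
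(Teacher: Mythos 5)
Your proof is correct and follows essentially the same route as the paper: condition the Sherali--Adams solution on $\tilde J = \{j_1\}$ via Lemma~\ref{lem:PropertiesOfSAforSchedulingWithCommDelaysLP}, use the unique pair $(i^\star,s^\star)$ with $\tilde x_{j_1,i^\star,s^\star}=1$ to identify $\sum_{j_2}\tilde y_{j_1,j_2}$ with the load of that machine/interval, bound it by $c$, and take expectations. The additional one-line derivation directly from the $SA_r(K)$ constraints is a nice remark but not needed.
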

\begin{proof}
  Consider the distribution $(\tilde{x},\tilde{y}) \sim \pazocal{D}(\{ j_1\})$.
  From Lemma~\ref{lem:PropertiesOfSAforSchedulingWithCommDelaysLP}.(B)
  we know that $\E[\tilde{y}_{j_1,j_2}] = y_{j_1,j_2}$
  and $\tilde{y}_{j_1,j_2} = \sum_{s \in \{0,\ldots,S-1\}}\sum_{i \in [m]} \tilde{x}_{j_1,i,s} \cdot \tilde{x}_{j_2,i,s}$.
  By linearity it suffices to prove that $\sum_{j_2 \in J} \tilde{y}_{j_1,j_2} \leq c$ always.
  Fix a pair $(\tilde{x},\tilde{y})$.
  There is a unique pair of indices $(i_1,s_1)$ with $\tilde{x}_{j_1,i_1,s_1}=1$. Then 
  \[
    \sum_{j_2 \in J}\tilde{y}_{j_1,j_2} = \sum_{s \in \{ 0,\ldots,S-1\}}\sum_{j_2 \in J} \sum_{i \in [m]} \underbrace{\tilde{x}_{j_1,i,s}}_{0\textrm{ if }i \neq i_1\textrm{ or }s \neq s_1} \cdot \tilde{x}_{j_2,i,s}
    = \sum_{j_2 \in J} \tilde{x}_{j_2,i_1,s_1} \leq c.
  \] 
\end{proof}

A crucial insight is that for any job $j^*$, few jobs are very close to $j^*$ with respect to $d$.

\begin{lemma} \label{lem:SmallNeighborhoodOfClosePoints}
Fix $j^* \in J$ and abbreviate $U := \{ j \in J \mid d(j,j^*) \leq \beta\}$ for $0<\beta<1$. Then $|U| \leq \frac{c}{1-\beta}$. 
\end{lemma}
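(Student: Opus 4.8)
The plan is to combine the ``cluster capacity'' bound of Lemma~\ref{lem:clustercapacity} with the definition $d(j_1,j_2) = 1 - y_{j_1,j_2}$; this is essentially an averaging argument over the capacity constraint. First I would unfold the membership condition for $U$: a job $j$ belongs to $U$ exactly when $1 - y_{j^*,j} = d(j^*,j) \le \beta$, that is, when $y_{j^*,j} \ge 1-\beta$. Summing this inequality over all $j \in U$ gives
\[
  (1-\beta)\,|U| \;\le\; \sum_{j \in U} y_{j^*,j}.
\]

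Next I would bound the right-hand side from above. Each $y_{j^*,j}$ is a sum of lifted variables $x_{(j^*,i,s),(j,i,s)}$, which are nonnegative by the box constraints carried through the Sherali-Adams lift (Theorem~\ref{thm:PropertiesOfSA}.(b)); hence dropping the restriction $j \in U$ only increases the sum, so $\sum_{j \in U} y_{j^*,j} \le \sum_{j \in J} y_{j^*,j}$, and the latter is at most $c$ by Lemma~\ref{lem:clustercapacity}. Chaining the two inequalities yields $(1-\beta)\,|U| \le c$, and dividing by $1-\beta > 0$ gives $|U| \le \frac{c}{1-\beta}$, as claimed.

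I do not expect any genuine obstacle here. The only points worth a second look are the nonnegativity of the $y$-variables (immediate as noted above) and the harmless inclusion of the diagonal term $j = j^*$ in the sum, for which $y_{j^*,j^*} = 1$: this term only helps, since it contributes at least $1-\beta$ to the left-hand side while staying within the budget $c$ on the right. Note that specializing to $\beta = \tfrac12$ recovers the bound $|U| \le 2c$ referenced as property (iii) in the introduction, for any ball of radius $\tfrac12$, and hence (by centering at an arbitrary point) for any set of diameter at most $\tfrac12$.
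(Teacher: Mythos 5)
Your argument is correct and is exactly the paper's proof: lower-bound each $y_{j^*,j} \ge 1-\beta$ for $j \in U$, extend the sum to all of $J$ by nonnegativity, and invoke Lemma~\ref{lem:clustercapacity} to get $(1-\beta)|U| \le c$. The extra remarks on the diagonal term and the $\beta=\tfrac12$ specialization are fine but not needed.
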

\begin{proof}
  For each $j \in U$ we have $y_{j,j^*} = 1 - d(j,j^*) \geq 1-\beta$. Combining with the last
  lemma we have $(1-\beta)|U|\leq\sum_{j \in J} y_{j,j^*} \leq c$.
\end{proof}

\subsection{Scheduling a Single Batch of Jobs}
\label{subsec:singlebatch}

We now come to the  main building block of our algorithm. We consider a subset $J^*$ of
jobs whose LP completion times $C_j$ are very close (within a $\Theta(\frac{1}{\log(c)})$ term of each other)
and show we can schedule half of these jobs in a single length-$2c$ interval. 
The following lemma is the main technical contribution of the paper.

\begin{lemma} \label{lem:SchedulingOneIntervalViaCKR}
  Let $(x,y,C) \in Q(r)$ with $r \geq 5$ and let $0<\delta \leq \frac{1}{64\log(4c)}$ be a parameter. Let $C^* \geq 0$ and set
   $J^* \subseteq \{ j \in J \mid C^* \leq C_j \leq C^*+\delta\}$.
  Then there is a randomized rounding procedure that finds a schedule for a subset $J^{**} \subseteq J^*$ in a single interval of length at most $2c$  such that every job $j \in J^*$
  is scheduled with probability at least $1 - 32 \log(4 c) \cdot \delta \geq\frac{1}{2}$.
\end{lemma}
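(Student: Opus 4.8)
The plan is to run the \textsc{CKR Clustering} algorithm of Theorem~\ref{thm:ProbUSeperatedByClustering} on the semimetric space $(J^{*},d)$ with the diameter parameter set to $\Delta:=\tfrac12$, put each resulting cluster on its own dedicated machine, and schedule the jobs of a cluster one after another inside the interval $[0,2c)$ in some topological order. The subset actually scheduled will be
$J^{**}:=\{\,j\in J^{*}: j \text{ lies in the same cluster as every } j'\in J^{*} \text{ with } j'\prec j \text{ or } j\prec j'\,\}$,
i.e.\ I would drop every job of $J^{*}$ that got separated from one of its $\prec$-neighbors inside $J^{*}$. This makes the interval schedule valid: if $j_{1}\prec j_{2}$ with $j_{1},j_{2}\in J^{**}$, then $j_{2}$ is a $\prec$-neighbor of $j_{1}\in J^{**}$, so $j_{1}$ and $j_{2}$ lie in the same cluster, hence on the same machine in the same interval, and a topological order within that machine respects the precedence.

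The estimates I would carry out are the following. First, a one-line consequence of the LP: if $j_{1}\prec j_{2}$, then the constraint $C_{j_{2}}\ge C_{j_{1}}+(1-y_{j_{1},j_{2}})=C_{j_{1}}+d(j_{1},j_{2})$ of $Q(r)$ together with $C_{j_{1}},C_{j_{2}}\in[C^{*},C^{*}+\delta]$ forces $d(j_{1},j_{2})\le\delta$. Consequently, writing $B_{j}:=\{j\}\cup\{j'\in J^{*}:j'\prec j \text{ or } j\prec j'\}$, the triangle inequality (Lemma~\ref{lem:metric}) gives $\operatorname{diam}(B_{j})\le 2\delta$, and every element of $B_{j}$ is within $d$-distance $\delta$ of $j$, so $N(B_{j},\Delta/2)=N(B_{j},\tfrac14)\subseteq\{j'\in J^{*}: d(j',j)\le\tfrac14+\delta\}$. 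Second, because $\delta\le\frac1{64\log(4c)}\le\tfrac14$ (using $c\ge 1$), Lemma~\ref{lem:SmallNeighborhoodOfClosePoints} applied with $\beta=\tfrac14+\delta$ bounds this last set by $\frac{c}{3/4-\delta}\le 2c$, so $|N(B_{j},\Delta/2)|\le 2c$. Third, since job $j$ is dropped precisely when the clustering separates $B_{j}$, Theorem~\ref{thm:ProbUSeperatedByClustering}(b) yields
\[
\Pr[\,j\notin J^{**}\,]=\Pr[\,B_{j}\text{ separated}\,]\ \le\ \ln\big(2|N(B_{j},\tfrac14)|\big)\cdot\frac{4\operatorname{diam}(B_{j})}{\tfrac12}\ \le\ 16\ln(4c)\cdot\delta\ \le\ 32\log(4c)\cdot\delta\ \le\ \tfrac12,
\]
the last inequality using $\delta\le\frac1{64\log(4c)}$; this is exactly the claimed marginal probability bound (and for $\log=\log_2$ one even gets $16\log(4c)\cdot\delta$). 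Finally, Theorem~\ref{thm:ProbUSeperatedByClustering}(a) gives $\operatorname{diam}(V_{i})\le\Delta=\tfrac12$ for each cluster $V_{i}$, so Lemma~\ref{lem:SmallNeighborhoodOfClosePoints} with $\beta=\tfrac12$, applied to any point of $V_{i}$, shows $|V_{i}|\le 2c$; hence each machine receives at most $2c$ unit jobs, they fit inside the length-$2c$ interval, and the whole construction uses a single interval and runs in polynomial time.

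The part that needs care is the simultaneous choice of $\Delta$ and $\delta$: $\Delta$ must be bounded away from $1$ so that diameter-$\Delta$ clusters contain at most $2c$ jobs (forcing something like $\Delta=\tfrac12$), while the CKR separation probability, which scales like $\frac{\operatorname{diam}(B_{j})}{\Delta}\cdot\ln(2|N(B_{j},\Delta/2)|)=\frac{2\delta}{\Delta}\cdot\ln(2|N(B_{j},\Delta/2)|)$, must stay below $\tfrac12$. What makes this work is precisely Lemma~\ref{lem:SmallNeighborhoodOfClosePoints} (property (iii) from the introduction): it lets us replace the generic $\ln(2n)$ of the CKR bound by $\ln(4c)$, which is why the loss is $O(\log c)$ rather than $O(\log n)$ and why $\delta=\Theta(1/\log c)$ is both necessary and sufficient. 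Verifying $|N(B_{j},\Delta/2)|\le 2c$ — rather than the trivial $\le n$ — is thus where essentially all the content of the lemma lies; everything else is bookkeeping with the triangle inequality and the precedence constraints of $Q(r)$.
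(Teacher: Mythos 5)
Your proof is correct for the statement as written, and it is essentially the paper's argument: run CKR on $(J^*,d)$, observe via the LP constraint $C_{j_2}\ge C_{j_1}+d(j_1,j_2)$ that dependent jobs in $J^*$ are within distance $\delta$, bound the relevant neighborhood by $2c$ using Lemma~\ref{lem:SmallNeighborhoodOfClosePoints} (this is exactly where the $\log c$ instead of $\log n$ comes from, as you say), and invoke Theorem~\ref{thm:ProbUSeperatedByClustering}(b); your parameter choice $\Delta=\tfrac12$ versus the paper's $\Delta=\tfrac14$ is immaterial, and your arithmetic ($16\ln(4c)\delta\le 32\log(4c)\delta\le\tfrac12$, cluster size $\le 2c$) checks out.

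The one substantive deviation is your dropping rule. The paper keeps $V_\ell':=\{j\in V_\ell \mid \Gamma^-(j)\cap J^*\subseteq V_\ell\}$, i.e.\ a job is dropped only if a \emph{predecessor} is separated from it; by transitivity of $\prec$ this makes the scheduled set predecessor-closed within $J^*$ (if $j$ is kept, every $J^*$-predecessor of $j$ is kept on the same machine). You instead drop $j$ whenever it is separated from any $\prec$-neighbor, predecessor \emph{or} successor. This still proves the lemma verbatim (dependent kept jobs share a cluster, and the probability bound is unaffected since $\operatorname{diam}(B_j)\le 2\delta$ either way), but it forfeits predecessor-closure: one can have $j'\prec j$ in the same cluster with $j'$ dropped because some other successor $j''$ of $j'$ landed elsewhere, while $j$ is kept. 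The proof of Lemma~\ref{lem: FullSchedulingOneIntervalViaCKR} explicitly uses that no job scheduled in iteration $k$ has a $J^*$-predecessor left to a later iteration, so your variant, plugged into the rest of the paper, would produce an infeasible composed schedule unless you either restrict the dropping condition to predecessors (as the paper does) or additionally discard any kept job having a dropped $J^*$-predecessor. Also note that dropping on successors is simply unnecessary work: the asymmetric rule already gives the same per-job guarantee.
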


We denote $\Gamma^{-}(j)$ as the predecessors of $j$ and $\Gamma^+(j)$ as the successors, 
and similary $\Gamma^{-/+}(J') = \{ j \in J : \exists j' \in J' \textrm{ s.t. } j \in \Gamma^{-/+}(j') \}$. Again, recall that we assume $\prec$ to be transitive.
The rounding algorithm is the following:
\begin{center}
\psframebox{
\begin{minipage}{14cm}
  \textsc{Scheduling a Single Batch} \vspace{1mm} \hrule \vspace{1mm}
\begin{enumerate*}
  \item[(1)] Run a CKR clustering on the semimetric space $(J^*,d)$ with parameter $\Delta := \frac{1}{4}$ and let $V_1,\ldots,V_k$ be the clusters.
  \item[(2)] Let $V_{\ell}' := \{ j \in V_{\ell} \mid \Gamma^-(j) \cap J^* \subseteq V_{\ell}\}$ for $\ell = 1,\ldots,k$.
  \item[(3)] Schedule $V_{\ell}'$ on one machine for all $\ell=1,\ldots,k$.
  \end{enumerate*}
  \end{minipage}}
\end{center}

\begin{figure}
\begin{center}
\ifrenderfigures
  \begin{pspicture}(0,0)(3,4)
    \pspolygon[linewidth=0.4pt,linearc=0.1,fillstyle=solid,fillcolor=lightgray](-0.4,-0.4)(2.4,-0.4)(2.4,2.4)(-0.4,2.4)
    \pspolygon[linewidth=0.4pt,linearc=0.1,fillstyle=solid,fillcolor=gray](-0.3,-0.3)(2.3,-0.3)(2.3,1.3)(-0.3,1.3)
    \pspolygon[linewidth=0.4pt,linearc=0.1,fillstyle=solid,fillcolor=lightgray](-0.4,2.6)(2.4,2.6)(2.4,5.4)(-0.4,5.4)
    \pspolygon[linewidth=0.4pt,linearc=0.1,fillstyle=solid,fillcolor=gray](-0.3,2.7)(2.3,3.7)(2.3,5.3)(-0.3,5.3)
    \cnode*(0,1){2.5pt}{a}
    \cnode*(2,0){2.5pt}{b}
    \cnode*(1,2){2.5pt}{c}
    \cnode*(2,2){2.5pt}{d}
    \cnode*(0,3){2.5pt}{e}
    \cnode*(2,3){2.5pt}{f}
    \cnode*(1,4){2.5pt}{g}
    \cnode*(2,4){2.5pt}{h}
    \cnode*(0,5){2.5pt}{i}
    \cnode*(1,5){2.5pt}{j}
    \cnode*(2,5){2.5pt}{k}
    \ncline[arrowsize=5pt]{<-}{b}{a}
    \ncline[arrowsize=5pt]{<-}{d}{c}
    \ncline[arrowsize=5pt]{<-}{c}{a}
    \ncline[arrowsize=5pt]{<-}{f}{c}
    \ncline[arrowsize=5pt]{<-}{g}{e}
   \ncline[arrowsize=5pt]{<-}{h}{g}
   \ncline[arrowsize=5pt]{<-}{g}{i}
   \ncline[arrowsize=5pt]{<-}{k}{j}
    \ncline[arrowsize=5pt]{<-}{j}{i}
    \ncline[arrowsize=5pt]{<-}{c}{e}
        \rput[c](-1,4){$\large{V_1}$}\rput[c](-1,1){$\large{V_2}$}
    \rput[c](0.2,4){$V_1'$}    \rput[c](0.2,0.2){$V_2'$}
  \end{pspicture}
  \fi
  \end{center}
  \caption{Visualization of the partition $V = V_1 \dot{\cup} \ldots \dot{\cup} V_k$ and the induced sets $V_{\ell}' \subseteq V_{\ell}$. Here $\prec$ is the transitive closure of the depicted digraph.}
  \end{figure}
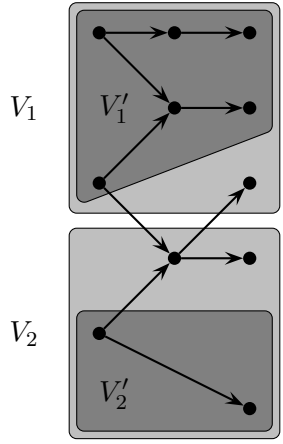
  
We now discuss the analysis. First we show that no cluster is more than a constant factor too large.
\begin{lemma}
One has $|V_{\ell}'| \leq 2c$ for all $\ell=1,\ldots,k$.
\end{lemma}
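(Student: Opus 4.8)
The plan is to chain together the two structural facts that have already been set up: the diameter guarantee of the CKR clustering and the ``few close neighbors'' bound extracted from the Sherali--Adams lift. First I would note that $V_{\ell}' \subseteq V_{\ell}$ directly from step (2) of the rounding procedure (that step only \emph{removes} jobs from $V_\ell$), so it is enough to bound $|V_\ell|$.

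Next, apply Theorem~\ref{thm:ProbUSeperatedByClustering}(a) to the CKR clustering run in step (1) on the semimetric space $(J^*,d)$ with parameter $\Delta := \frac14$: every returned cluster satisfies $\textrm{diam}(V_\ell) \le \frac14$. (The restriction of $d$ to $J^* \times J^*$ is still a semimetric, so the theorem applies verbatim.) If $V_\ell = \emptyset$ the claim is trivial; otherwise fix any $j^\ast \in V_\ell$. Then for every $j \in V_\ell$ we have $d(j,j^\ast) \le \textrm{diam}(V_\ell) \le \frac14$, so, using $V_\ell \subseteq J^\ast \subseteq J$,
\[
  V_\ell \subseteq \Big\{ j \in J \;\Big|\; d(j,j^\ast) \le \tfrac14 \Big\}.
\]
Now invoke Lemma~\ref{lem:SmallNeighborhoodOfClosePoints} with $\beta = \frac14$ to conclude $|V_\ell| \le \frac{c}{1-1/4} = \frac{4c}{3} \le 2c$. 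Hence $|V_\ell'| \le |V_\ell| \le 2c$.

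There is really no hard step here: all of the effort lives in the earlier results, namely the Sherali--Adams capacity bound (Lemma~\ref{lem:clustercapacity}) feeding Lemma~\ref{lem:SmallNeighborhoodOfClosePoints}, together with the CKR diameter guarantee. The only point that warrants a moment's attention is that Lemma~\ref{lem:SmallNeighborhoodOfClosePoints} speaks about balls inside all of $J$ whereas the clustering lives on $J^\ast$; this is dispatched by the trivial inclusion $J^\ast \subseteq J$. I would also remark that the bound $\frac{4c}{3} \le 2c$ has deliberate slack: $2c$ is exactly what is needed so that the unit-length jobs of $V_\ell'$ fit on a single machine within an interval of length $2c$, as claimed in Lemma~\ref{lem:SchedulingOneIntervalViaCKR}.
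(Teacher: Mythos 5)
Your proof is correct and follows essentially the same route as the paper: combine the CKR diameter guarantee with Lemma~\ref{lem:SmallNeighborhoodOfClosePoints} around a fixed job of the cluster. The only cosmetic difference is that you plug in $\beta=\tfrac14$ (giving $\tfrac{4c}{3}$), while the paper uses the looser bound $\mathrm{diam}(V_\ell)\le\Delta<\tfrac12$ and invokes the lemma with $\beta=\tfrac12$ to land exactly at $2c$.
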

\begin{proof}
  We know by Theorem~\ref{thm:ProbUSeperatedByClustering} that $\textrm{diam}(V_{\ell}') \leq \textrm{diam}(V_{\ell}) \leq \Delta < \frac{1}{2}$ 
  where the diameter is with respect to $d$.
  Fix a job $j^* \in V_{\ell}'$. 
  Then we know by Lemma~\ref{lem:SmallNeighborhoodOfClosePoints} that there are at most $2c$
  jobs $j$ with $d(j,j^*) \leq \frac{1}{2}$ and the claim follows.
\end{proof}

Next, we see that the clusters respect the precedence constraints.
\begin{lemma}
The solution $V_{1}',\ldots,V_{k}'$ is feasible in the sense that jobs on different machines do not have precedence constraints.
\end{lemma}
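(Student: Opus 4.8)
The plan is to argue directly from the definition of the sets $V_\ell'$ together with the fact that the CKR clustering returns a genuine partition $V_1, \ldots, V_k$ of $J^*$ into pairwise disjoint blocks. Suppose, for contradiction, that two scheduled jobs $j_1 \in V_{\ell_1}'$ and $j_2 \in V_{\ell_2}'$ with $\ell_1 \neq \ell_2$ are comparable under $\prec$; without loss of generality assume $j_1 \prec j_2$, and handle the case $j_2 \prec j_1$ symmetrically at the end.

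From $j_1 \prec j_2$ we get $j_1 \in \Gamma^-(j_2)$, and since $V_{\ell_1}' \subseteq V_{\ell_1} \subseteq J^*$ we also have $j_1 \in J^*$, so $j_1 \in \Gamma^-(j_2) \cap J^*$. Now invoke step~(2) of the algorithm: because $j_2 \in V_{\ell_2}'$, we have $\Gamma^-(j_2) \cap J^* \subseteq V_{\ell_2}$, hence $j_1 \in V_{\ell_2}$. But $j_1 \in V_{\ell_1}$ as well, and the blocks $V_1, \ldots, V_k$ are pairwise disjoint, so $\ell_1 = \ell_2$, contradicting $\ell_1 \neq \ell_2$. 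The case $j_2 \prec j_1$ is identical with the roles of $j_1$ and $j_2$ swapped: then $j_2 \in \Gamma^-(j_1) \cap J^*$, membership of $j_1$ in $V_{\ell_1}'$ forces $j_2 \in V_{\ell_1}$, and disjointness again gives $\ell_1 = \ell_2$.

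Finally I would tie this back to the problem definition: since $J^{**} = \bigcup_{\ell} V_\ell'$ is placed entirely within a single length-$2c$ interval, the above shows that any two comparable scheduled jobs lie on the same machine, which is precisely the feasibility condition for $\p \infty \mid \Prec, p_j=1, \Intervals \mid \Cmax$ (within each machine any topological order of its jobs completes the schedule). I do not expect a real obstacle here — the argument is a one-line consequence of the construction; the only points requiring a little care are to treat both directions of the precedence relation and to explicitly use that the $V_\ell$, and therefore the $V_\ell'$, are pairwise disjoint.
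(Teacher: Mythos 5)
Your proposal is correct and is essentially the paper's own argument: the paper likewise observes that if $j_1 \prec j_2$ lie in different clusters then $j_2 \in V_{\ell_2}'$ would force $\Gamma^-(j_2) \cap J^* \subseteq V_{\ell_2}$, contradicting that $j_1$ lies in a different (disjoint) block. Your write-up just makes the disjointness of the CKR blocks and the symmetric case explicit, which the paper leaves implicit.
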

\begin{proof}
  Consider jobs processed on different machines, say (after reindexing) $j_1 \in V_{1}'$ and $j_2 \in V_2'$.
   If $j_1 \prec j_2$ then we did \emph{not} have $\Gamma^-(j_2) \subseteq V_2'$. This contradicts
  the definition of the sets $V_{\ell}'$.
\end{proof}

A crucial property that makes the algorithm work is that predecessors of some job $j \in J^*$ must be very close in $d$ distance.
\begin{lemma} \label{lem:DependentJobsInJStarAreCloseInMetricD}
For every $j_1,j_2 \in J^*$ with $j_1 \prec j_2$ one has $d(j_1,j_2) \leq \delta$.
\end{lemma}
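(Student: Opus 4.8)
The plan is to derive the bound $d(j_1,j_2) \le \delta$ directly from constraint $(ii)$/property of $Q(r)$, namely $C_{j_2} \ge C_{j_1} + (1 - y_{j_1,j_2})$ for $j_1 \prec j_2$, combined with the fact that $j_1, j_2 \in J^*$ forces their LP completion times to be within $\delta$ of each other. First I would recall that by definition of $J^*$ we have $C^* \le C_{j_1} \le C^* + \delta$ and $C^* \le C_{j_2} \le C^* + \delta$, so in particular $C_{j_2} - C_{j_1} \le \delta$.

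Next, since $j_1 \prec j_2$, the LP constraint in $Q(r)$ gives $C_{j_2} \ge C_{j_1} + (1 - y_{j_1,j_2})$, hence $1 - y_{j_1,j_2} \le C_{j_2} - C_{j_1} \le \delta$. By Lemma~\ref{lem:metric} we have $d(j_1,j_2) = 1 - y_{j_1,j_2}$, so $d(j_1,j_2) \le \delta$, as desired. This is essentially a one-line chain of inequalities once the relevant LP constraint and the definition of $d$ are in hand.

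There is no real obstacle here — the only point requiring a modicum of care is making sure the precedence constraint used is exactly the one that survives into $Q(r)$ (which it does, since $\prec$ is assumed transitive and the constraint $C_{j_2} \ge C_{j_1} + (1 - y_{j_1,j_2})$ is imposed for every pair $j_1 \prec j_2$, including the pair at hand), and that $j_1, j_2$ both genuinely lie in the window $[C^*, C^* + \delta]$ by the hypothesis $J^* \subseteq \{ j \in J \mid C^* \le C_j \le C^* + \delta\}$ of Lemma~\ref{lem:SchedulingOneIntervalViaCKR}. I would write the proof as the displayed inequality
\[
  d(j_1,j_2) = 1 - y_{j_1,j_2} \le C_{j_2} - C_{j_1} \le (C^* + \delta) - C^* = \delta,
\]
where the first inequality is the $Q(r)$ precedence constraint for $j_1 \prec j_2$ and the last uses that $j_1, j_2 \in J^*$. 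That completes the argument.
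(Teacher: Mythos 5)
Your proof is correct and is essentially identical to the paper's: both use the $Q(r)$ constraint $C_{j_2} \geq C_{j_1} + (1-y_{j_1,j_2})$ together with $C^* \leq C_{j_1}, C_{j_2} \leq C^* + \delta$ to conclude $d(j_1,j_2) = 1 - y_{j_1,j_2} \leq C_{j_2} - C_{j_1} \leq \delta$. Nothing further is needed.
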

\begin{proof}
  We know that
  \[
C^* \stackrel{j_1 \in J^*}{\leq} C_{j_1} \leq C_{j_1} + \underbrace{(1-y_{j_1,j_2})}_{=d(j_1,j_2)} \stackrel{LP}{\leq} C_{j_2} \stackrel{j_2 \in J^*}{\leq} C^* + \delta
  \]
  and so $d(j_1,j_2) \leq \delta$.
\end{proof}

We will use the three statements above together with Theorem~\ref{thm:ProbUSeperatedByClustering} to prove Lemma~\ref{lem:SchedulingOneIntervalViaCKR}.
\begin{proof}[Proof of Lemma~\ref{lem:SchedulingOneIntervalViaCKR}]
  We have already proven that the scheduled blocks have size $|V_{\ell}'| \leq 2c$ and that there are no dependent
  jobs in different sets of $V_1',\ldots,V_k'$. To finish the analysis, 
we need to prove that a fixed job $j^* \in J^*$ is scheduled with good probability.
Consider the set $U := \{ j^* \} \cup (\Gamma^-(j^*) \cap J^*)$ of $j^*$ and its ancestors in $J^*$.

Since the diameter of $U$ is at most $2\delta$ by Lemma~\ref{lem:DependentJobsInJStarAreCloseInMetricD}, 
we can use Lemma~\ref{lem:SmallNeighborhoodOfClosePoints}
to see that $ |N(U, \Delta/2)| \leq \frac{c}{1- 2\delta-  \Delta}$.
For our choice of $\Delta = 1/4$ and $\delta \leq \frac{1}{64 \log(4 c) }$, $|N(U,1/8)|\leq 2c$. 
From Theorem~\ref{thm:ProbUSeperatedByClustering},
the cluster is separated with probability at most 
 $\log(4c) \cdot  \frac{8 \delta}{\Delta} \leq \frac{1}{2}$.
\end{proof}

To schedule all jobs in $J^*$, we repeat the clustering procedure $O(\log m)$ times and simply schedule the remaining jobs on one machine. 

\begin{lemma} \label{lem: FullSchedulingOneIntervalViaCKR}
  Let $(x,y,C) \in Q(r)$ with $r \geq 5$. Let $C^* \geq 0$ and set $J^* \subseteq \{ j \in J \mid C^* \leq C_j < C^*+\delta\}$.
  Assume that all jobs in $\Gamma^{-}(J^*) \setminus J^*$ have been scheduled respecting precedence constraints.
  Then there is an algorithm with expected polynomial running time that schedules all jobs in $J^*$
  using at most $O(\log m) + \frac{|J^*|}{mc}$ many intervals. 
\end{lemma}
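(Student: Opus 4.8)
The plan is to iterate the single-batch rounding of Lemma~\ref{lem:SchedulingOneIntervalViaCKR} for $T := \lceil \log_2(2m) \rceil$ rounds on the shrinking set of not-yet-scheduled jobs, and then dump whatever remains onto a single fresh machine. Concretely, set $J^*_1 := J^*$; for $t = 1, \dots, T$ run the ``Scheduling a Single Batch'' procedure on $(J^*_t, d)$ (same parameters $\Delta = \tfrac14$ and $\delta$), reserving for its output a fresh block of two consecutive length-$c$ intervals placed after all intervals used so far and after the intervals occupied by $\Gamma^-(J^*) \setminus J^*$; let $J^{**}_t \subseteq J^*_t$ be the jobs it schedules and put $J^*_{t+1} := J^*_t \setminus J^{**}_t$. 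Finally schedule all of $J^*_{T+1}$ on one new machine in a topological order, $c$ jobs per interval, using $\lceil |J^*_{T+1}|/c \rceil$ intervals appended at the end. Applying the single-batch procedure to a subset $J^*_t$ is legitimate: $d$ restricted to $J^*_t$ is still a semimetric, the hypothesis $C^* \le C_j < C^* + \delta$ is inherited, and the neighborhood bound of Lemma~\ref{lem:SmallNeighborhoodOfClosePoints} is a statement about all of $J$; hence each $j \in J^*_t$ is still scheduled with probability $\ge \tfrac12$, so $\E[|J^*_{t+1}| \mid J^*_t] \le |J^*_t|/2$ and therefore $\E[|J^*_{T+1}|] \le |J^*|/2^T \le |J^*|/(2m)$.

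For feasibility the key point --- and here transitivity of $\prec$ is essential --- is that whenever $j \in V'_\ell$ in some round, every $\prec$-ancestor of $j$ lying in the current batch $J^*_t$ also lies in $V'_\ell$: such an ancestor $j'$ satisfies $j' \in \Gamma^-(j) \cap J^*_t \subseteq V_\ell$, and every ancestor of $j'$ in $J^*_t$ is again an ancestor of $j$ and hence in $V_\ell$, so $\Gamma^-(j') \cap J^*_t \subseteq V_\ell$, i.e.\ $j' \in V'_\ell$. Thus a job and all of its $J^*$-ancestors are scheduled no later than it: within a round they share a machine (this is the feasibility lemma for $V'_1, \dots, V'_k$), and an ancestor not in the current batch was either scheduled in an earlier round --- hence in an earlier interval --- or, being outside $J^*$, before the whole block by assumption. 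Cleanup jobs sit on one machine in topological order, so precedence among them is automatic, and all their previously scheduled predecessors lie in strictly earlier intervals. Since each $V'_\ell$ has $|V'_\ell| \le 2c$ (the size lemma), every reserved length-$2c$ block, split into two length-$c$ intervals, respects the capacity $|J_{s,i}| \le c$, as does each cleanup interval.

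To convert the expected guarantee into a worst-case interval bound with only expected-polynomial running time, I would use rejection sampling: after the $T$ rounds, if $|J^*_{T+1}| > |J^*|/m$ discard the entire partial schedule and restart; by Markov's inequality this occurs with probability at most $\tfrac12$, so the expected number of restarts is $O(1)$, and since each single-batch call and the cleanup run in polynomial time, the overall expected running time is polynomial. When the procedure stops we have $|J^*_{T+1}| \le |J^*|/m$, so the cleanup costs at most $\lceil |J^*|/(mc) \rceil \le |J^*|/(mc) + 1$ intervals, and the total is $2T + |J^*|/(mc) + 1 = O(\log m) + |J^*|/(mc)$.

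The step I expect to be the main obstacle is the across-rounds feasibility argument just sketched: one has to be certain that, once a job is placed in some interval, no $\prec$-ancestor of it can later be forced into a strictly later interval --- which is exactly what the transitivity observation rules out. Everything else is bookkeeping layered on top of Lemma~\ref{lem:SchedulingOneIntervalViaCKR}.
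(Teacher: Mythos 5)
Your proposal is correct and follows essentially the same route as the paper: iterate the single-batch CKR rounding of Lemma~\ref{lem:SchedulingOneIntervalViaCKR} for $O(\log m)$ rounds on the unscheduled remainder, use Markov plus restarting to get $|J^*_{T+1}| \le |J^*|/m$ in expected polynomial time, and finish the leftovers on one machine in topological order. Your transitivity argument for across-rounds feasibility (an ancestor in the current batch of a scheduled job lands in the same cluster $V'_\ell$, so no ancestor is deferred to a later round) is exactly the point the paper asserts via Lemma~\ref{lem:SchedulingOneIntervalViaCKR}, just spelled out in more detail.
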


\begin{proof}
 Our algorithm in Lemma~\ref{lem:SchedulingOneIntervalViaCKR} schedules each $j \in J^*$ in an interval of length $2c$ with probability at least $1/2$.
 We run the algorithm for $2 \log m$ iterations, where input to iteration $k+1$ is the subset of jobs that are not scheduled in the first $k$ iterations. 
 For $k \in \{1,2,\ldots, 2 \log m \}$, let $J^{**}_k$ denote the subset of jobs that are scheduled in the $k^{th}$ iteration, and let $J^*_{k+1} := J^* \setminus \{ \bigcup^{k}_{k' = 1} J^{**}_{k'}$\}.
 In this notation,  $J^*_1 := J^*$.
 Let  $\pazocal{S}(J^{**}_k)$ denote the schedule of jobs $J^{**}_k$  given by Lemma~\ref{lem:SchedulingOneIntervalViaCKR}.
 We schedule $\pazocal{S}(J^{**}_1)$ first, then for $k = 2,\ldots, 2 \log m$, we append the schedule $\pazocal{S}(J^{**}_k)$ after $\pazocal{S}(J^{**}_{k-1})$. 
Let $J' := J^{*}_{2\log m +1}$ denote the set of jobs that were not scheduled in the $2\log m$ iterations.
We schedule all jobs in $J'$ consecutively on a single machine after the completion of $\pazocal{S}(J^{**}_{2 \log m})$.

From our construction, the length of a schedule for  $J^*$, which is a random variable, is at most $O(\log m) +  \lceil \frac{|J'|}{c} \rceil$ many intervals.
For $k \in \{ 1,2, \ldots ,2\log m\}$,  Lemma~\ref{lem:SchedulingOneIntervalViaCKR}  guarantees that each job $j \in J^*_k$ gets scheduled in the $k^{th}$ iteration with probability at least $1/2$.
Therefore, the probability that $j \in J'$,  i.e., it does not get scheduled in the first $2 \log m$ iterations,  is at most $\frac{1}{2m}$.
This implies that $\mathbb{E}[|J'|] \leq \frac{|J^*|} {2m}$.
By Markov's inequality  $\text{Pr}[|J'| > \frac{|J^*|} {m} ] \leq \Pr[|J'| > 2 \cdot \mathbb{E}[|J'|] ] \leq 1/2$.
Hence we can repeat the described procedure until indeed we have a successful run with $|J'| \leq \frac{|J^*|}{m}$ which results in the claimed expected polynomial running time. 

Let us now argue that the schedule of $J^*$ is feasible. 
For $k \in \{ 1,2, \ldots, 2\log m\}$ and any two jobs $j, j' \in \pazocal{S}(J^{**}_k)$,  Lemma~\ref{lem:SchedulingOneIntervalViaCKR}  guarantees that precedence and communication constraints are satisfied.
Furthermore, Lemma~\ref{lem:SchedulingOneIntervalViaCKR} also ensures that there cannot be jobs  $j$, $j'$ such that  $j \in \pazocal{S}(J^{**}_k)$, $j' \in \pazocal{S}(J^{**}_{k'})$ and $j' \prec j$ and $k' > k$. Finally note that every length-$2c$ interval can be split into 2 length-$c$ intervals.
The claim follows.
\end{proof}

\subsection{The Complete Algorithm for $\p \infty \mid \Prec, p_j=1, \Intervals \mid C_{\max}$ }\label{sec:TheCompleteAlgorithm}

Now we have all the pieces to put the rounding algorithm together and prove its correctness.
We partition the jobs into batches, where each batch consists of subset of jobs that have  $C_j$ very close to each other in the LP solution. 
The complete algorithm is given below.

\begin{center}
  \psframebox{
\begin{minipage}{14cm}
  \textsc{The Complete Algorithm} \vspace{1mm} \hrule \vspace{1mm}
  \begin{enumerate*}
  \item[(1)] Solve the LP and let $(x,y,C) \in Q(r)$ with $r \geq 5$.
  \item[(2)] For $\delta = \frac{1}{64 \log(4 c)}$ and $k \in \{0,1,2 \ldots \frac{S-1}{\delta}\}$, define \\
                 $J_k = \{j \in J : k \cdot \delta \leq C_j < (k+1) \cdot \delta\}$ 
  \item[(3)] FOR $k=0$ TO $\frac{S-1}{\delta}$ DO 
      \begin{enumerate*}
        \item[(4)] Schedule the jobs in $J_k$ using the algorithm in Subsection \ref{subsec:singlebatch}.
      \end{enumerate*}
 \end{enumerate*}
  \end{minipage}}
\end{center}

Now we finish the analysis of the rounding algorithm.
\begin{proof}[Proof of Theorem~\ref{thm:LPRoundingTheorem}]
  Let us quickly verify that the schedule constructed by our algorithm is feasible.
 For jobs $j_1 \prec j_2$ with $j_1 \in J_{k_1}$ and $j_2 \in J_{k_2}$, the LP implies that $C_{j_1} \leq C_{j_2}$
 and so $k_1 \leq k_2$. If $k_1<k_2$, then $j_1$ will be scheduled in an earlier interval than $j_2$.
 If $k_1=k_2=k$, then 
 Lemma \ref{lem: FullSchedulingOneIntervalViaCKR}  guarantees that precedence  constraints are satisfied.

It remains to bound the makespan of our algorithm.  
Lemma \ref{lem: FullSchedulingOneIntervalViaCKR} guarantees that for $k \in \{0,1,2 \ldots \frac{S-1}{\delta}\}$, the jobs in $J_k$ are scheduled using at most $O(\log m) + \frac{|J_k|}{cm}$ many intervals. 
Then the total number of intervals required by the algorithm is bounded by
$$
\frac{S}{\delta} \cdot O(\log m)  + \sum^{\frac{S-1}{\delta}}_{k = 0} \frac{|J_k|}{cm} =O(\log m \cdot \log c)  \cdot S  +  \frac{|J|}{cm}\leq O(\log m \cdot \log c) \cdot S.
$$
Here we use that $|J| \leq S \cdot cm$ is implied by the constraints defining $K$.
\end{proof}

\begin{remark}
We note that
it is possible to reverse-engineer our solution and write a more compact LP for the problem, enforcing only the necessary constraints such as those given by Lemmas~\ref{lem:metric} and~\ref{lem:clustercapacity}.
Such an LP would be simpler and could be solved more efficiently.
However, we feel that the Sherali-Adams hierarchy gives a more principled and intuitive way to tackle the problem and explain how the LP arises, and hence we choose to present it that way.
\end{remark}

\section{Reductions}\label{sec:Reductions}

We now justify our earlier claim:
the special case $\p \infty \mid \Prec, p_j=1, \Intervals \mid C_{\max}$ indeed captures 
the full computational difficulty of the more general problem $\p \mid \Prec, c \mid C_{\max}$.
The main result for this section will be the following reduction:
\begin{theorem}
	\label{thm:ReductionFromGeneral}
  Suppose there is a polynomial time algorithm that takes a solution for the LP $Q(r)$ with parameters $m,c,S \in \setN$ and $r \geq 5$
  and transforms it into a schedule for $\p \infty \mid \Prec, p_j=1, \Intervals \mid C_{\max}$ using at most $\alpha \cdot S$
  intervals. Then there is a polynomial time  $O(\alpha)$-approximation for $\p \mid \Prec, c \mid C_{\max}$.
\end{theorem}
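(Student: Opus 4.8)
The plan is to reduce the general problem $\p \mid \Prec, c \mid \Cmax$ to the interval problem $\p \infty \mid \Prec, p_j=1, \Intervals \mid \Cmax$ in two stages: first replace arbitrary processing lengths by unit jobs, then pass from the $\p$ (bounded machines) model to the $\p\infty$ (interval) model. I would guess $\Cmax^*$ denotes the optimum makespan of the given instance. The first subtlety flagged in the text is that when $\Cmax^* < c$ the interval abstraction is too coarse, so I would first run a PTAS (or constant-factor algorithm) for the regime $\Cmax^* \le c$ — here all communication delays are essentially ``all or nothing'' over a single block of time, and one can afford to guess the (constant-length) structure — and otherwise assume $\Cmax^* \ge c$, so that losing additive $O(c)$ or multiplicative constants tied to $c$ versus $\Cmax^*$ is acceptable.

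\textbf{Step 1: splitting long jobs.} Given an instance with general $p_j$, I would replace each job $j$ by a chain of $p_j$ unit-length jobs $j^{(1)} \prec \cdots \prec j^{(p_j)}$, and route every precedence arc $j \prec j'$ to go from $j^{(p_j)}$ to $j'^{(1)}$. A schedule for the original instance yields one for the unit instance with the same makespan (run the chain consecutively on the same machine), and conversely — up to reassembling the chain onto a single machine — a schedule for the unit instance can be massaged into one for the original losing only a constant factor, using $\Cmax^* \ge c$ so that the delays introduced by re-serializing a chain are absorbed. One must check the total number of unit jobs is polynomial: this needs $p_j$ bounded by a polynomial, which standard rounding of processing times (lose a $(1+\varepsilon)$ factor) guarantees after dividing through by an appropriate scale, again safe since $\Cmax^* \ge c \ge 1$.

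\textbf{Step 2: from $\p$ to $\p\infty$ with intervals, and invoking the LP.} Now I have a unit-job instance on $m$ machines. Guess $S$, the optimal number of length-$c$ intervals needed (binary search), and write the LP $Q(r)$ with these parameters $m, c, S$; feasibility of $Q(r)$ when $S$ is the true optimum follows because the integral optimal schedule, chopped into length-$c$ blocks with jobs assigned to machine-interval pairs, yields a $0/1$ point of $K$ and hence (setting $y_I = \prod x_i$) a point of $SA_r(K)$, with the $C_j$ set to the interval index — the precedence constraint $C_{j_2} \ge C_{j_1} + (1 - y_{j_1,j_2})$ holds because if $j_1 \prec j_2$ land in the same interval they must be on the same machine (by the communication-delay-free-within-block reduction built into the $\Intervals$ model, which is itself a constant-factor relaxation of the real $c$-delay constraint, inserting an empty interval between consecutive ones). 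Feeding this feasible $Q(r)$ solution to the hypothesized algorithm produces an $\Intervals$-schedule with at most $\alpha S$ intervals on unlimited machines; since each interval has at most $c$ jobs per machine and there are at most $n$ jobs total, at most $n/c \le mS$ machines are ever used in that schedule, but we only have $m$ — so finally I fold the unbounded-machine interval schedule back onto $m$ machines by processing the (at most) $\alpha S$ intervals, and within each one distributing its $\le$ (number of used machines) groups across our $m$ machines over $O(\alpha S)$ real time-$c$ intervals, respecting that dependent jobs in one group stay together. Converting interval count back to makespan multiplies by $c$, and comparing with $\Cmax^* \ge \max(c, S c /O(1))$ (the optimum needs at least $S$ intervals in the relaxed model up to constants) gives an $O(\alpha)$-approximation.

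\textbf{Main obstacle.} The delicate part is Step 2's final folding: a schedule in the $\p\infty$ $\Intervals$ model can spread a single interval's jobs over arbitrarily many machines, whereas the target instance has only $m$. The count $n \le mSc$ (from the constraint $\sum_j x_{j,i,s} \le c$ in $K$) bounds the total work, but one still has to argue that \emph{within each} of the $\alpha S$ output intervals the job groups can be re-packed onto $m$ machines using only $O(\alpha S)$ intervals overall without violating precedence — this works because groups inside one interval have no precedence among different machines, so they can be scheduled in any order, and a counting/averaging argument over all $\alpha S$ intervals gives the bound; making this clean, together with carefully tracking that the several constant-factor losses (PTAS regime, long-job splitting, empty-interval insertion, makespan-vs-intervals conversion) compose to a single $O(\alpha)$, is where the real work lies.
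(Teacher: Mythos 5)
Your skeleton matches the paper's plan (PTAS when the optimum is at most $c$, scaling of processing times, splitting jobs into unit chains, feasibility of $Q(r)$ from the integral optimum, invoking the black-box rounding), but there is a genuine gap at the point you yourself flag as the crux: the folding back onto $m$ machines. Your folding argument operates entirely on the unit sub-jobs, repacking the groups of each interval of the $\p\infty$ interval schedule block by block onto $m$ machines. Even granting that this works for unit jobs (and as stated it needs an extra bin-packing observation that several small groups of the \emph{same} interval may share one machine-interval; otherwise the count $\sum_s M_s$ of used machine-intervals can be as large as the number of unit jobs, i.e.\ up to $Smc$, and the naive ``one group per machine-interval'' packing loses a factor of $c$), it never restores the integrity of the original jobs. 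The rounding algorithm may scatter the $p_j$ sub-jobs of a single job across many machines and many intervals, and a valid schedule for $\p \mid \Prec, c \mid \Cmax$ must run each job non-preemptively on one machine; for a long job with $p_j > c$ this is a contiguous stretch crossing several intervals, which your interval-by-interval repacking cannot produce. Your Step~1 claim that reassembly ``loses only a constant factor using $\Cmax^* \ge c$'' is exactly the statement that needs proof, and it is not a local argument once the number of machines is bounded: inserting a job's chain contiguously on one of the $m$ machines collides with whatever else is scheduled there.

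The paper resolves this with two ingredients you are missing. First, the reassembly is done while still in the $\p\infty$ world: the interval schedule is stretched by a factor $4$, jobs all of whose sub-jobs sit in one interval stay put, and jobs whose sub-jobs were split across intervals are moved to fresh \emph{private} machines starting at an interval boundary, yielding a feasible $\p\infty$ schedule of the original jobs with makespan at most $4cT$ and a clean structure (short jobs inside intervals, long jobs starting at interval boundaries). Second, the reduction to $m$ machines is not a repacking at all: co-scheduled short jobs are merged into mega-jobs, each long job becomes a single mega-job, and Graham's list scheduling with communication delays is run on this instance; its makespan is bounded via the chain lemma by $\frac{1}{m}\sum_h p_h + \max_Q\bigl(\sum_{h \in Q} p_h + c\,|Q|\bigr) \le \mathrm{OPT}_m(J) + 4cT + 4cT$, using that chain lengths in the merged instance are at most the number of intervals $4T$. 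This Graham-based step is what simultaneously handles machine capacity and job integrity, and it (or some substitute of comparable strength) is the missing idea in your proposal.
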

For the reduction we will make use of the very well known list scheduling algorithm by Graham~\cite{GrahamListScheduling1966}
that can be easily extended to the setting with communication delays.
Here the notation  $\sigma(j) = ([t,t+p_j),i)$ means that the job $j$ is
processed in the time interval $[t,t+p_j)$ on machine $i \in [m]$.
\begin{center}
  \psframebox{
\begin{minipage}{14cm}
  \textsc{Graham's List Scheduling} \vspace{1mm} \hrule \vspace{1mm}
  \begin{enumerate*}
  \item[(1)] Set $\sigma(j) := \emptyset$ for all $j \in J$
  \item[(2)] FOR $t=0$ TO $\infty$ DO FOR $i=1$ TO $m$ DO
    \begin{enumerate*}
    \item[(3)] Select any job $j \in J$ with $\sigma(j) = \emptyset$ where every $j' \prec j$ satisfies the following:
      \begin{itemize*}
        \item If $j'$ is scheduled on machine $i$ then $j'$ is finished at time $\leq t$
        \item If $j'$ is schedule on machine $i' \neq i$ then $j'$ finished at time $\leq t-c$
      \end{itemize*}
     \item[(4)] Set $\sigma(j) := ([t,t+p_j),i)$ (if there was such a job)
  \end{enumerate*}
  \end{enumerate*}
  \end{minipage}}
\end{center}
For example, for the problem $\p \mid \Prec \mid C_{\max}$, Graham's algorithm gives a 2-approximation.
The analysis works by proving that there is a chain of jobs covering all time units where not all machines are busy.
Graham's algorithm does \emph{not} give a constant factor approximation for our problem with communication delays,
but it will still be useful for our reduction.

Recall that a set of jobs $\{j_1,\ldots,j_{\ell} \} \subseteq J$ with $j_{\ell} \prec j_{\ell -1} \prec \ldots \prec j_{1}$ is called a \emph{chain}.
We denote  $\pazocal{Q}(J)$ as the set of all chains in $J$ w.r.t. precedence order $\prec$.
\begin{lemma}
	\label{lem:GrahamListScheduling}
  Graham's list scheduling on an instance of $\p \mid \Prec, c \mid C_{\max}$
  results in a schedule with makespan at most
  $\frac{1}{m} \sum_{j\in J}p_j+\max_{Q \in \pazocal{Q}(J)}\{\sum_{j \in Q}p_j+c\cdot (|Q|-1) \}$.
\end{lemma}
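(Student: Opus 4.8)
The plan is to run the classic Graham argument, modified to account for communication delays. Fix the schedule $\sigma$ produced by list scheduling and let $T$ be its makespan. The key is to construct a chain $Q = \{j_1, \ldots, j_\ell\}$ with $j_\ell \prec \cdots \prec j_1$ that "covers" every unit of time in $[0,T)$ during which not all machines are busy, in the sense that at each such time either some job of $Q$ is being processed, or we are within $c$ time units after the completion of some job of $Q$ (a mandatory idle period forced by a communication delay). Granting such a chain, the total volume bound follows: the time during which all $m$ machines are busy contributes at most $\frac{1}{m}\sum_{j\in J} p_j$ to $T$, while the remaining time is covered by $\sum_{j\in Q} p_j$ (processing time of the chain) plus at most $c$ idle units after each of the $|Q|-1$ non-final jobs of the chain, giving $\frac{1}{m}\sum_j p_j + \sum_{j\in Q} p_j + c(|Q|-1)$.

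To build the chain I would work backwards from the end of the schedule. Let $j_1$ be a job completing at time $T$. Inductively, given $j_i$ starting at time $t_i$, I look at the situation just before $t_i$: if $j_i$ starts at time $0$ we stop; otherwise, consider time $t_i - 1$ (or, more carefully, the instant just before $t_i$ and just before the $c$-window preceding it). The crucial claim, which is exactly where list scheduling's greediness is used, is that at any time $t'$ before $j_i$ starts during which some machine is idle, either $j_i$ was not yet available — meaning some predecessor $j' \prec j_i$ was still running at $t'$ (if on the same machine) or finished after $t' - c$ (if on a different machine) — and in either case we can take $j_{i+1} := j'$ to be that blocking predecessor and continue. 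If no machine is idle at $t'$, that unit of time is "fully busy" and is charged to the volume term instead. One has to be a bit careful that the $c$-length windows after chain jobs don't overlap badly with processing intervals of other chain jobs, but since the chain jobs are linearly ordered by $\prec$ and hence by time, the windows attach to distinct non-final chain members and the count $c \cdot (|Q|-1)$ is correct.

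More precisely, I would define the chain so that consecutive members $j_{i+1} \prec j_i$ satisfy: the start of $j_i$ is at most $c$ time units after the completion of $j_{i+1}$, OR $j_{i+1}$ completes exactly when $j_i$ starts (same-machine case). Then the interval $[C_{j_{i+1}}, t_i)$ has length at most $c$, and $[t_i, C_{j_i})$ has length $p_{j_i}$; together, as $i$ ranges over the chain, these intervals cover $[0, T)$ except possibly for times when all machines were busy. Summing lengths: the "busy everywhere" part has total length $\le \frac{1}{m}\sum_j p_j$ by a volume argument (each such unit of time accounts for $m$ units of processing), the processing intervals sum to $\sum_{j\in Q} p_j$, and the gap intervals sum to at most $c(\ell-1)$ since there are $\ell - 1$ of them each of length $\le c$. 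This yields the claimed bound.

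The main obstacle I anticipate is the bookkeeping in the backward construction: making sure that every sub-unit of idle time is attributed either to the volume term or to a single gap window of the chain, without double-counting and without missing the same-machine versus different-machine cases in step (3) of Graham's algorithm. In particular one must handle the instant right before $j_i$ starts correctly — when a machine is free but $j_i$ is not scheduled there, step (3) guarantees an unfinished predecessor, and one must verify this predecessor's completion time is recent enough (within $c$, or simultaneous) to bound the gap length by $c$. The rest is a routine summation. I would present the covering argument as the core lemma and relegate the volume estimate to a one-line observation.
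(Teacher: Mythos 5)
Your proposal is correct and follows essentially the same route as the paper: build the chain backwards via blocking predecessors, charge every instant at which some machine is idle either to a chain job's processing time or to the length-$c$ window following a chain job, and bound the fully-busy time by $\frac{1}{m}\sum_{j\in J}p_j$. The only difference is cosmetic: the paper chooses $j_{i+1}$ to be the predecessor of $j_i$ that finishes last, so that $[C_{j_{i+1}}+c,\, t_{j_i})$ is either empty or fully busy --- slightly cleaner than your ``more precisely'' requirement that consecutive chain members lie within $c$ of each other, which need not literally hold when a long fully-busy stretch separates them, though the covering statement in your first paragraph already handles this correctly.
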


	\begin{figure}
		\begin{center}
		\ifrenderfigures
			\psset{unit=0.6cm}
			\begin{pspicture}(-4,-3)(18,4.5)
			\pnode(4,3.5){A}\pnode(6,3.5){B} \ncline[arrowsize=5pt]{|<->|}{A}{B} \naput{$C$}
			\psline[fillstyle=vlines](18,0)(0,0)(0,3)(18,3)
			\rput[c](0,0){\pspolygon[fillstyle=solid,fillcolor=lightgray](0,0)(2,0)(2,1)(0,1)}
			\rput[c](0,1){\pspolygon[fillstyle=solid,fillcolor=lightgray](0,0)(1,0)(1,1)(0,1)}
			\rput[c](1,1){\pspolygon[fillstyle=solid,fillcolor=lightgray](0,0)(1,0)(1,1)(0,1)}
			\rput[c](0,2){\pspolygon[fillstyle=solid,fillcolor=lightgray](0,0)(1,0)(1,1)(0,1)}
			\rput[c](1,2){\pspolygon[fillstyle=solid,fillcolor=lightgray](0,0)(1,0)(1,1)(0,1)}
			\rput[c](2,1){\pspolygon[fillstyle=solid,fillcolor=gray](0,0)(2,0)(2,1)(0,1)} \pnode(3,1.5){j1} \rput[c](j1){$j_{\ell}$}
			\rput[c](4,1){\pspolygon[fillstyle=solid,fillcolor=lightgray](0,0)(1,0)(1,1)(0,1)}
			\rput[c](5,1){\pspolygon[fillstyle=solid,fillcolor=lightgray](0,0)(1,0)(1,1)(0,1)}
			\rput[c](6,2){\pspolygon[fillstyle=solid,fillcolor=lightgray](0,0)(2,0)(2,1)(0,1)}
			\rput[c](6,1){\pspolygon[fillstyle=solid,fillcolor=lightgray](0,0)(1,0)(1,1)(0,1)}
			\rput[c](7,1){\pspolygon[fillstyle=solid,fillcolor=lightgray](0,0)(2,0)(2,1)(0,1)}
			\rput[c](6,0){\pspolygon[fillstyle=solid,fillcolor=lightgray](0,0)(3,0)(3,1)(0,1)}
			\rput[c](8,2){\pspolygon[fillstyle=solid,fillcolor=gray](0,0)(1,0)(1,1)(0,1)} \pnode(8.5,2.5){j2} \rput[c](j2){$\ldots$}
			\rput[c](10,2){\pspolygon[fillstyle=solid,fillcolor=gray](0,0)(2,0)(2,1)(0,1)} \pnode(11,2.5){j3} \rput[c](j3){$j_2$}
			\rput[c](9,2){\pspolygon[fillstyle=solid,fillcolor=lightgray](0,0)(1,0)(1,1)(0,1)}
			\rput[c](16,0){\pspolygon[fillstyle=solid,fillcolor=gray](0,0)(2,0)(2,1)(0,1)} \pnode(17,0.5){j4} \rput[c](j4){$j_{1}$}
			\rput[c](9,0){\pspolygon[fillstyle=solid,fillcolor=lightgray](0,0)(2,0)(2,1)(0,1)}
			\rput[c](9,0){\pspolygon[fillstyle=solid,fillcolor=lightgray](0,0)(2,0)(2,1)(0,1)}
			\rput[c](11,1){\pspolygon[fillstyle=solid,fillcolor=lightgray](0,0)(1,0)(1,1)(0,1)}
			\rput[c](11,0){\pspolygon[fillstyle=solid,fillcolor=lightgray](0,0)(2,0)(2,1)(0,1)}
			\rput[c](12,2){\pspolygon[fillstyle=solid,fillcolor=lightgray](0,0)(2,0)(2,1)(0,1)}
			\rput[c](14,2){\pspolygon[fillstyle=solid,fillcolor=lightgray](0,0)(3,0)(3,1)(0,1)}
			\rput[c](14,1){\pspolygon[fillstyle=solid,fillcolor=lightgray](0,0)(2,0)(2,1)(0,1)}
			\rput[c](14,0){\pspolygon[fillstyle=solid,fillcolor=lightgray](0,0)(2,0)(2,1)(0,1)}
			\ncline[nodesepA=10pt,nodesepB=10pt,arrowsize=5pt]{->}{j1}{j2}
			\ncline[nodesepA=10pt,nodesepB=10pt,arrowsize=5pt]{->}{j2}{j3}
			\ncline[nodesepA=10pt,nodesepB=10pt,arrowsize=5pt]{->}{j3}{j4}
			\rput[r](-1,-1){chain}
			\rput[c](0,-1){\psline[linestyle=dotted,linewidth=0.5pt](0,0)(18,0) \psline[linewidth=1.0pt](2,0)(4,0)\psline[linewidth=1.0pt](8,0)(9,0)\psline[linewidth=1.0pt](10,0)(12,0)\psline[linewidth=1.0pt](16,0)(18,0) }
			\rput[r](-1,-2){delay after chain}
			\rput[c](0,-2){\psline[linestyle=dotted,linewidth=0.5pt](0,0)(18,0) \psline[linewidth=1.0pt](4,0)(6,0)\psline[linewidth=1.0pt](9,0)(10,0)\psline[linewidth=1.0pt](12,0)(14,0) }
			\rput[r](-1,-3){guaranteed busy}
			\rput[c](0,-3){\psline[linestyle=dotted,linewidth=0.5pt](0,0)(18,0) \psline[linewidth=1.0pt](0,0)(2,0)\psline[linewidth=1.0pt](6,0)(8,0)\psline[linewidth=1.0pt](14,0)(16,0) }
			\rput[c](-0.5,2.5){$1$}  \rput[c](-0.5,1.5){$\vdots$} \rput[c](-0.5,0.5){$m$}
			\pnode(0,3){A}\pnode(0,4){B}\ncline[arrowsize=5pt]{->}{B}{A} \nput{90}{B}{time $0$}
			\pnode(18,3){A}\pnode(18,4){B}\ncline[arrowsize=5pt]{->}{B}{A} \nput{90}{B}{makespan}
			\end{pspicture}
			\fi
			\caption{Analysis of Graham's algorithm with communication delay $c$.} 
		\end{center}
  \end{figure}
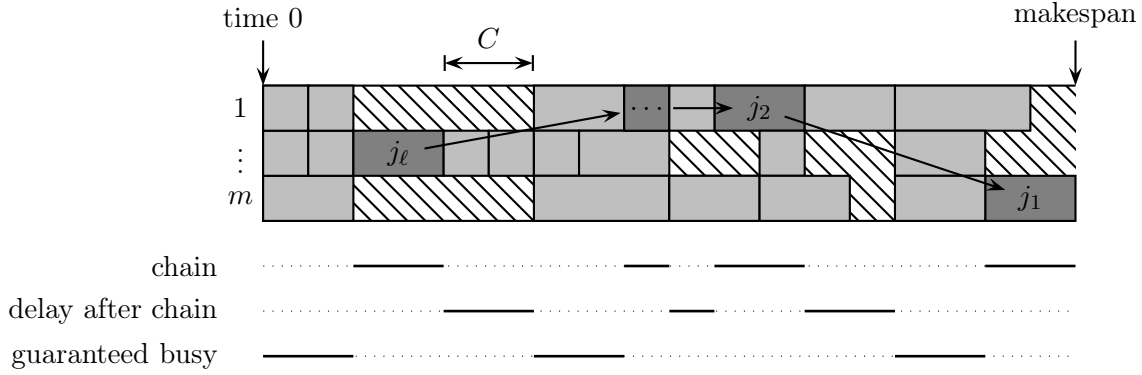
  \begin{proof} 
We will show how to construct the chain $Q$ that makes the inequality hold. Let $j_1$ be the job which finishes last
in the schedule produced by Graham's algorithm and let $t_{j_1}$ be its start time. Let $j_2$ be the predecessor of $j_1$
that finishes last. More generally in step $i$, we denote $j_{i+1}$ as the predecessor of $j_i$ that finishes last.
The construction finishes with a job $j_{\ell}$ without predecessors. 
Now let $Q$ be the chain of jobs $j_{\ell} \prec j_{\ell-1} \prec \ldots \prec j_1$. 
The crucial observation is that for any $i \in \{ 1,\ldots,\ell-1\}$, 
either all machines are busy in the time interval $[t_{j_{i+1}}+p_{j_{i+1}}+c,t_{j_{i}})$  or this interval is empty. 
The reason is that Graham's algorithm does not leave unnecessary idle time and would have otherwise processed $j_{i}$ earlier.
It is also true that all $m$ machines are busy in the time interval $[0,t_{j_{\ell}})$.
  The total amount of work processed in these busy time intervals is
	\[
	L := m\cdot\Big( t_{j_{\ell}}+ \sum_{i=1}^{\ell-1}\max \{t_{j_{i}}-(c+p_{j_{i+1}}+t_{j_{i+1}}), 0\}\Big)\le \sum_{j \in J} p_j - \sum_{i=1}^{\ell}p_{j_i}.
      \]
      Then any time between $0$ and the makespan falls into at least one of the following categories: 
      (a)~the busy time periods described above, (b)~the times that a job of the chain $Q$ is processed, 
      (c)~the interval of length $c$ following a job in the chain $Q$.
	Thus, we see that the makespan from Graham's list scheduling is at most
	\[
  t_{j_{1}}+p_{t_{j_{1}}} \leq \frac{L}{m} + \sum_{j \in Q} p_j + c \cdot (|Q|-1) \leq \frac{1}{m} \sum_{j \in J} p_j+\Big(1-\frac{1}{m}\Big)\sum_{j \in Q} p_{j}+c\cdot(|Q|-1).
	\]
      \end{proof}

 It will also be helpful to note that the case of very small optimum makespan can be well approximated:      
\begin{lemma} \label{lem:PTASForProblemifOPTatMostC}
  Any instance for $\p \mid \Prec, c \mid C_{\max}$ with optimum objective function value at most $c$
  admits a PTAS.
\end{lemma}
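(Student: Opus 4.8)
The plan is to reduce, in the regime where the optimum makespan is at most $c$, to the classical problem $\p \mid\mid \Cmax$ of makespan minimization on $m$ identical machines with no precedence constraints, which admits a PTAS (Hochbaum and Shmoys). The structural fact driving the reduction is that when the optimal makespan is at most $c$, no two comparable jobs can lie on different machines: if $j \prec j'$ are scheduled on different machines in a feasible schedule, then the completion time of $j'$ is at least $(\text{completion time of } j) + c + p_{j'} \ge p_j + c + p_{j'} \ge c + 2$, so the makespan exceeds $c$. Hence, letting $J = J_1 \dot{\cup} \cdots \dot{\cup} J_k$ be the connected components of the undirected graph on $J$ in which two jobs are adjacent whenever they are comparable under $\prec$, every schedule of makespan at most $c$ places each entire block $J_\ell$ on a single machine; note also that each $J_\ell$ is closed under taking predecessors and successors.

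First I would show that, in this regime, the instance is equivalent to the instance of $\p \mid\mid \Cmax$ with one ``job'' of size $p(J_\ell) := \sum_{j \in J_\ell} p_j$ for each block $J_\ell$. For one direction, any assignment of the blocks to the $m$ machines yields a feasible communication-delay schedule of the same makespan: on each machine, process its assigned blocks consecutively, and within each block process the jobs in any topological order of $\prec$; since a block carries all predecessors of each of its jobs on the same machine, every precedence constraint is satisfied with zero delay, and there is no precedence constraint between jobs of distinct blocks. For the other direction, any communication-delay schedule of makespan $M \le c$ respects the block structure above, so $M$ is at least the maximum over machines of the total size of the blocks it carries, hence at least the optimum of the block instance of $\p \mid\mid \Cmax$. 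Therefore, when the true optimum is at most $c$, the two optima coincide.

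The PTAS then follows at once: compute the blocks and their sizes in polynomial time, run the Hochbaum--Shmoys PTAS on the block instance to obtain a block-to-machine assignment of makespan at most $(1+\varepsilon)$ times the block optimum, and translate it back through the first direction above into a feasible communication-delay schedule of makespan at most $(1+\varepsilon) \cdot \mathrm{OPT}$. I do not expect a genuine obstacle here; the only point needing care is the structural claim that no block is split across machines, together with the observation that each block is closed under predecessors and successors, so that scheduling a block contiguously on one machine is legitimate and creates no cross-machine precedence constraint. Everything else is either immediate or a black-box invocation of the classical PTAS for $\p \mid\mid \Cmax$.
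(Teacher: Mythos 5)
Your proposal is correct and follows essentially the same route as the paper: decompose into connected components of the comparability graph, argue that a makespan-$\le c$ schedule cannot split a component across machines (since $p_j \ge 1$ forces a delayed successor past time $c$), treat each component as an item of size $p(J_\ell)$, invoke the Hochbaum--Shmoys PTAS for $\p \mid\mid \Cmax$, and schedule each machine's components in topological order. No gaps; your write-up just makes the ``cannot afford the delay'' step and the back-translation slightly more explicit than the paper does.
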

\begin{proof}
  Let $J$ be the jobs in the instance and let $\textnormal{OPT}_m \leq c$ be the optimum value.
  Consider the undirected graph $G = (J,E)$ with $\{ j_1,j_2 \} \in E \Leftrightarrow (( j_1 \prec j_2)\textrm{ or }(j_2 \prec j_1))$.
  Let $J = J_1 \dot{\cup} \ldots \dot{\cup} J_N$ be the partition of jobs into connected components w.r.t. graph $G$.
  We abbreviate $p(J') := \sum_{j \in J'} p_j$. The assumption guarantees that the optimum solution cannot afford to pay the communication delay
  and hence there is a length-$c$ schedule that assigns all jobs of the same connected component to the same machine. If we think of a connected component $J_{\ell}$
  as an ``item'' of size $p(J_{\ell})$, then for any fixed $\varepsilon>0$ we can use a PTAS for $\p \mid  \mid C_{\max}$ (i.e. makespan minimization without
  precedence constraints) to find a partition of ``items'' as $[N] = I_1 \dot{\cup} \ldots \dot{\cup} I_m$ with $\sum_{\ell \in I_i} p(J_{\ell}) \leq (1+\varepsilon) \cdot \textnormal{OPT}_m$ in polynomial time~\cite{ApproxAlgoForSchedulingHochbaumShmoysJACM87}. 
  Arranging the jobs $\bigcup_{\ell \in I_i} J_{\ell}$ on machine $i$ in any topological order finishes the argument.
\end{proof}

Additionally, it is a standard argument to convert an instance with arbitrary $p_j$ to an instance where all $p_j \leq n / \varepsilon$,
while only losing a factor of $(1+2\varepsilon)$ in the approximation. 
For $p_{\textrm{max}}:=\max_j p_j$, we simply scale the job lengths and communication delay down by a factor of $\frac{n }{ \varepsilon p_{\textrm{max}}}$
then round them to the nearest larger integer.
This results in at most a $2 \varepsilon$ fraction of the optimal makespan being rounded up and all 
job sizes are integral and at most $n / \varepsilon$. 

Now we can show the main reduction: 
\begin{proof}[Proof of Theorem~\ref{thm:ReductionFromGeneral}] 
	
  Consider an instance of $\p \mid \Prec, c \mid C_{\max}$ with $p_j, c \in \mathbb{N}$.
  Let $J$ denote its job set with precedence constraints, and $\textnormal{OPT}_m(J)$ denote its optimal value where $m$ is the number of
  available machines. 
  By the previous argument, we may assume that $p_j \leq 2n$ for all $j \in J$. 
  Moreover, by Lemma~\ref{lem:PTASForProblemifOPTatMostC} we only need to focus on the case where $\textnormal{OPT}_m(J) > c$.
  We may guess the optimum value of $\textnormal{OPT}_m(J)$ as $\textnormal{OPT}_m(J) \in \{ 1,\ldots,2n^2\}$. 

  Let $J'$ denote the job set obtained by splitting each job $j \in J$
  into a chain of $p_j$ unit sub-jobs $j^{(1)}\prec \cdots \prec j^{(p_j)}$. 
  Moreover, precedence constraints in $J$ are preserved in $J'$ as we set all 
  predecessors of $j$ to be predecessors of $j^{(1)}$ and all successors of $j$ to be successors of $j^{(p_j)}$, 
  see Figure~\ref{fig:SplittingJobsIntoUnitLengthJobs}.
  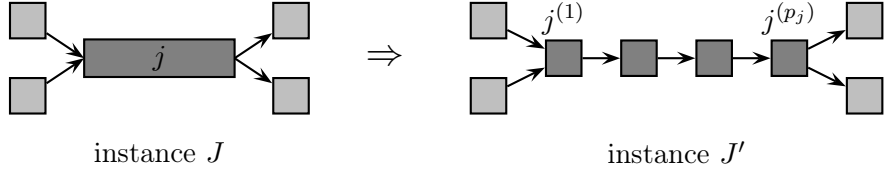
\begin{figure}
  \begin{center}
  \ifrenderfigures
    \psset{unit=0.5cm}
    \begin{pspicture}(0,-2)(8,2)
    \fnode[framesize=1,fillcolor=lightgray,fillstyle=solid](-1.5,1.5){a}
    \fnode[framesize=1,fillcolor=lightgray,fillstyle=solid](-1.5,-0.5){b}
    \fnode[framesize=1,fillcolor=lightgray,fillstyle=solid](5.5,1.5){c}
    \fnode[framesize=1,fillcolor=lightgray,fillstyle=solid](5.5,-0.5){d}
    \pspolygon[fillstyle=solid,fillcolor=gray](0,0)(4,0)(4,1)(0,1) \rput[c](2,0.5){$j$}
    \pnode(0,0.5){jstart}
    \pnode(4,0.5){jend}
    \ncline[arrowsize=5pt]{->}{a}{jstart}
    \ncline[arrowsize=5pt]{->}{b}{jstart}
   \ncline[arrowsize=5pt]{->}{jend}{c}
   \ncline[arrowsize=5pt]{->}{jend}{d}
   \rput[c](8,0.5){\Large $\Rightarrow$}
   \rput[c](2,-2){instance $J$}
     \end{pspicture}
     \begin{pspicture}(-4,-2)(8,2)
    \fnode[framesize=1,fillcolor=lightgray,fillstyle=solid](-1.5,1.5){a}
    \fnode[framesize=1,fillcolor=lightgray,fillstyle=solid](-1.5,-0.5){b}
    \fnode[framesize=1,fillcolor=gray,fillstyle=solid](0.5,0.5){j1} \nput[labelsep=2pt]{90}{j1}{$j^{(1)}$}
    \fnode[framesize=1,fillcolor=gray,fillstyle=solid](2.5,0.5){j2}
    \fnode[framesize=1,fillcolor=gray,fillstyle=solid](4.5,0.5){j3}
    \fnode[framesize=1,fillcolor=gray,fillstyle=solid](6.5,0.5){j4}\nput[labelsep=2pt]{90}{j4}{$j^{(p_j)}$}
    \fnode[framesize=1,fillcolor=lightgray,fillstyle=solid](8.5,1.5){c}
    \fnode[framesize=1,fillcolor=lightgray,fillstyle=solid](8.5,-0.5){d}
    \ncline[arrowsize=5pt]{->}{a}{j1}
    \ncline[arrowsize=5pt]{->}{b}{j1}
    \ncline[arrowsize=5pt]{->}{j1}{j2}
    \ncline[arrowsize=5pt]{->}{j2}{j3}
    \ncline[arrowsize=5pt]{->}{j3}{j4}
   \ncline[arrowsize=5pt]{->}{j4}{c}
   \ncline[arrowsize=5pt]{->}{j4}{d}
   \rput[c](3.5,-2){instance $J'$}
 \end{pspicture}
 \fi
 \caption{Splitting jobs into chains of unit-length jobs.\label{fig:SplittingJobsIntoUnitLengthJobs}}
  \end{center}
  \end{figure}
  We note that $\textnormal{OPT}_m(J') \leq \textnormal{OPT}_m(J)$ as splitting does not increase the value of the optimum.
  Let $\pazocal{S}_m(J')$ be a schedule achieving the value of $\textnormal{OPT}_m(J')$.
  Next, observe that $\pazocal{S}_m(J')$ implies an integral solution for $Q(r)$ with parameters $m,c,S$ where $S := \lceil \textnormal{OPT}_m(J)/c \rceil$ and $r := 5$.
  In particular here we use that if jobs $j_1 \prec j_2$ are scheduled on different machines by $\pazocal{S}_m(J')$, then their starting times differ by at least $c+1$ and hence they are assigned to different length-$c$ intervals.
  
  Now we execute the assumed $\alpha$-approximate rounding algorithm and obtain a schedule $\pazocal{S}_{\infty,\text{int}}(J')$
  that uses $T \leq \alpha S$ many intervals. 
  We will use this solution  $\pazocal{S}_{\infty,\text{int}}(J')$ to construct
   a schedule $\pazocal{S}_{\infty}(J)$ for $\p\infty \mid \Prec,c \mid C_{\max}$ 
  with job set $J$ by running split sub-jobs consecutively on the same processor. 
  This will use $4T$ time intervals in total. Recall that $I_s$ denotes the time interval $[sc, (s+1)c)$.
  The rescheduling process is as follows:
	
  For a fixed job $j \in J$, let $I_{s_1}$ be the time interval where $j^{(1)}$ is scheduled in $\pazocal{S}_{\infty,\text{int}}(J')$. 
  Then all other sub-jobs of $j$ should be either scheduled in $I_{s_1}$ or the time intervals after $I_{s_1}$. 
\begin{itemize}	
	\item {\bf Case 1: Some sub-job of $j$ is not scheduled in $I_{s_1}$.}\\
  Schedule job $j$ at the beginning of time interval $I_{4{s_1}+1}$ on a new machine.
  If $j$ is a short job, then it will finish running by the end of the interval. 
  Otherwise $j$ is a long job.
  Let $I_{s_2}$ be the last time interval where a sub-job of $j$ is scheduled in $\pazocal{S}_{\infty,\text{int}}(J')$. 
  Then, the length satisfies $p_j \le c\cdot (s_2-s_1+1)$, 
  which implies that the job finishes by time $c\cdot (4s_1+1)+p_j\le c\cdot(4s_2-1)$.
	
  \item {\bf Case 2: All sub-jobs of $j$ are scheduled in $I_{s_1}$.}\\
  Simply schedule job $j$ during time interval $I_{4s_1}$ on the same machine as in $\pazocal{S}_{\infty,\text{int}}(J')$.  
\end{itemize}	
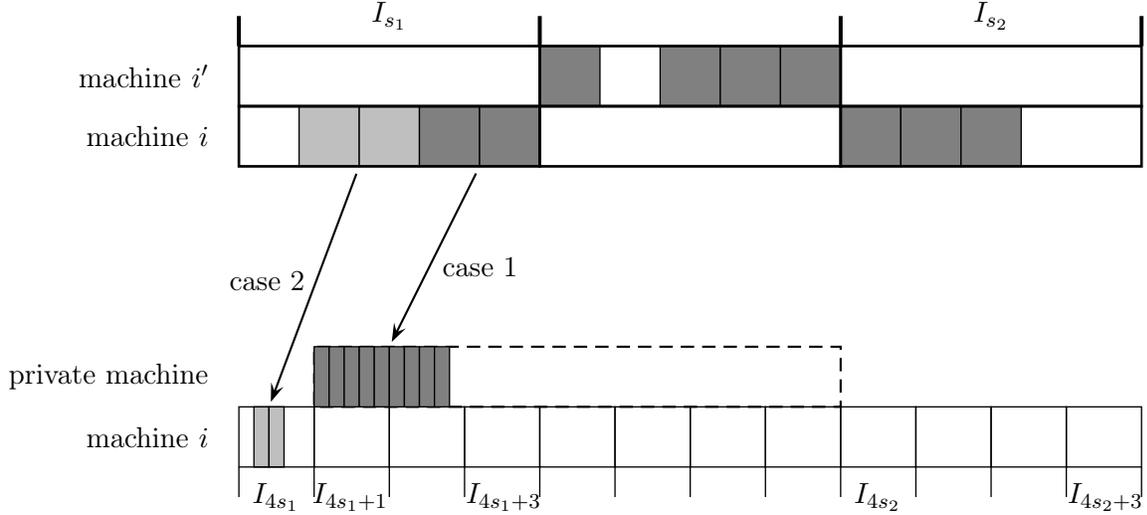
\begin{figure}
  \begin{center}
\ifrenderfigures
    \psset{unit=0.8cm}
    \begin{pspicture}(0,-2)(12,7)
     \drawRect{fillstyle=solid,fillcolor=lightgray,linewidth=0.5pt}{1}{4}{1}{1}\drawRect{fillstyle=solid,fillcolor=lightgray,linewidth=0.5pt}{2}{4}{1}{1} \pnode(2,4){caseIIA}
    \drawRect{fillstyle=solid,fillcolor=gray,linewidth=0.5pt}{3}{4}{1}{1}\drawRect{fillstyle=solid,fillcolor=gray,linewidth=0.5pt}{4}{4}{1}{1}
    \drawRect{fillstyle=solid,fillcolor=gray,linewidth=0.5pt}{5}{5}{1}{1}\drawRect{fillstyle=solid,fillcolor=gray,linewidth=0.5pt}{7}{5}{1}{1}
    \drawRect{fillstyle=solid,fillcolor=gray,linewidth=0.5pt}{8}{5}{1}{1}\drawRect{fillstyle=solid,fillcolor=gray,linewidth=0.5pt}{9}{5}{1}{1}
    \drawRect{fillstyle=solid,fillcolor=gray,linewidth=0.5pt}{10}{4}{1}{1}\drawRect{fillstyle=solid,fillcolor=gray,linewidth=0.5pt}{11}{4}{1}{1}\drawRect{fillstyle=solid,fillcolor=gray,linewidth=0.5pt}{12}{4}{1}{1}
    \drawRect{linewidth=1.0pt}{0}{5}{5}{1}\drawRect{linewidth=1.0pt}{5}{5}{5}{1}\drawRect{linewidth=1.0pt}{10}{5}{5}{1} \rput[r](-0.5,5.5){machine $i'$}
    \drawRect{linewidth=1.0pt}{0}{4}{5}{1}\drawRect{linewidth=1.0pt}{5}{4}{5}{1}\drawRect{linewidth=1.0pt}{10}{4}{5}{1}  \rput[r](-0.5,4.5){machine $i$}
    \rput[c](2.5,6.5){$I_{s_1}$} \rput[c](12.5,6.5){$I_{s_2}$}
    \multido{\N=0+5}{4}{\psline[linewidth=1.5pt](\N,6)(\N,6.5)}
    \drawRect{fillstyle=solid,fillcolor=lightgray,linewidth=0.5pt}{0.25}{-1}{0.25}{1} \pnode(0.5,0){caseIIB}
    \drawRect{fillstyle=solid,fillcolor=lightgray,linewidth=0.5pt}{0.5}{-1}{0.25}{1}
    \drawRect{linestyle=dashed}{1.25}{0}{8.75}{1}
    \multido{\N=0.00+1.25}{12}{\drawRect{linewidth=0.5pt}{\N}{-1}{1.25}{1}}
    \multido{\N=0.00+1.25}{13}{\psline[linewidth=0.5pt](\N,-1)(\N,-1.5)}
    \rput[r](-0.5,0.5){private machine}
    \rput[r](-0.5,-0.5){machine $i$}
    \rput[c](0.6125,-1.5){$I_{4s_1}$} \rput[c](1.85,-1.5){$I_{4s_1+1}$} \rput[c](4.4,-1.5){$I_{4s_1+3}$}
    \rput[c](10.6125,-1.5){$I_{4s_2}$} \rput[c](14.4,-1.5){$I_{4s_2+3}$}
    \ncline[nodesepA=3pt,nodesepB=2pt,arrowsize=5pt]{->}{caseIIA}{caseIIB} \nbput[labelsep=2pt]{case 2}
    \multido{\N=1.25+0.25}{9}{\drawRect{fillstyle=solid,fillcolor=gray,linewidth=0.5pt}{\N}{0}{0.25}{1}}
    \pnode(4,4){caseIA}\pnode(2.5,1){caseIB}
   \ncline[nodesepA=3pt,nodesepB=2pt,arrowsize=5pt]{->}{caseIA}{caseIB} \naput[labelsep=2pt]{case 1}
    \end{pspicture}   
\fi
  \caption{Transformation of the schedule $\pazocal{S}_{\infty,\text{int}}(J')$ (top) to $\pazocal{S}_{\infty}(J)$ (bottom), where  $\pazocal{S}_{\infty}(J)$ is compressed by a factor of 4. Here a ``private'' machine for a job $j$ means the machine never processes any job other than $j$.\label{fig:ScheduleTransformationStoSPrime}}
  \end{center}
\end{figure}

 See Figure~\ref{fig:ScheduleTransformationStoSPrime} for a visualization.
 Then  $\pazocal{S}_{\infty}(J)$ is a valid schedule for $\p \infty \mid \Prec,c \mid C_{\max}$, with makespan $\le 4c\cdot T$.
	Moreover, $\pazocal{S}_{\infty}(J)$ satisfies the following:
	\begin{enumerate*}
		\item[(a)]A short job is fully contained in some interval $I_s$.
		
		\item[(b)]A long job's start time is at the beginning of some interval $I_s$.
	\end{enumerate*}
	
  For $\pazocal{S}_{\infty}(J)$, define a new job set $H$. 
  Every long job $j$ becomes an element of $H$ with its original running time $p_j$. 
  Meanwhile, every set of short jobs that are assigned to the same machine in one time interval becomes an element of $H$, 
  with running time equal to the sum of running times of the short jobs merged. 
  To summarize, a new job $h \in H$ corresponds to a set $h \subseteq J$ and $p_h = \sum_{j\in h} p_j$.
	
  We define the partial order $\tilde{\prec}$ on $H$ with $h_1 \tilde{\prec} h_2$ if and only if there are
  $j_1 \in h_1$ and $j_2 \in h_2$ with $j_1 \prec j_2$. 
One can check that this partial order is well defined. 
  Moreover, by the fact that jobs assigned to the same interval but different machines do not have precedence constraints, 
  the length of the longest chain in $(H,\tilde{\prec})$ in terms of the number of elements is bounded by the number of intervals that are used, 
  which is at most $4T$.
	
  Now run Graham's list scheduling on the new job set $H$ with order $\tilde{\prec}$ and $m$ machines. 
  By Lemma~\ref{lem:GrahamListScheduling}, the makespan of the list scheduling is bounded by  
  $\frac{1}{m} \sum_{h\in H}p_h+\max_{Q \in \pazocal{Q}(H)}\{\sum_{h \in Q}p_h+ c\cdot |Q| \}$.
  As the total sum of the processing times does not change from $J$ to $H$, we see that
  $\frac{1}{m} \sum_{h\in H}p_h \leq \textnormal{OPT}_m(J)$. 
  Moreover, for any chain $Q \in \pazocal{Q}(H)$, $\sum_{h \in Q}p_h$ is no greater than the makespan of $\pazocal{S}_{\infty}(J)$, which is $4cT$.
	Finally, as argued earlier, the chain has $|Q| \leq 4T$ elements.
	Above all, 
  \begin{align*}
    \frac{1}{m} \sum_{h\in H}p_h+\max_{Q \in \pazocal{Q}(H)}\Big\{\sum_{h \in Q}p_h+c\cdot|Q| \Big\} &\le
          \textnormal{OPT}_m(J)+4cT+4cT \\
  &\le \textnormal{OPT}_m(J)+ 8\alpha \cdot cS \\
    &\le \textnormal{OPT}_m(J)+  16\alpha \cdot \textnormal{OPT}_{m}(J) \\
    &=O(\alpha)\cdot \textnormal{OPT}_{m}(J).
	\end{align*}
\end{proof}

\section{Minimizing Weighted Sum of Completion Times}

To illustrate the generality of our framework we show that it can be extended to handle different objective
functions, in particular we can minimize the \emph{weighted sum of completion times} of the jobs. Here we
restrict to the simplest case where jobs have unit length and an unbounded number of machines are available. 
In the 3-field notation, this problem is denoted by $\p \infty \mid \Prec, p_j=1, c \mid \sum_{j}w_j C_j$.
The input for this problem is the same as for the makespan minimization problem except that each job $j$ now has a weight $w_j \geq 0$. 
The goal is to minimize the objective function $\sum_{j} w_j C_j$, where $C_j$ is the completion time of $j$, which is defined as the time slot in which job $j$ is scheduled. 

Note that the LP $Q(r)$ has variables $C_j$ that denote the index of the length-$c$ interval where $j$ is being scheduled.
A natural approach would be to interpret $c \cdot C_j$ as the completion time of job $j$ and minimize $\sum_{j \in J} w_j \cdot c \cdot C_j$ over $Q(r)$. Then the rounding algorithm from Section~\ref{sec:approximation_for_pinfty}
will indeed schedule each job $j$ so that the completion time is at most $(O(\log c \cdot \log n) \cdot C_j+ \Theta(\log n)) \cdot c$.
We can observe that if a $O(\log c \cdot \log n)$ approximation is the goal, then this argument suffices for all jobs $j$
where the LP solution has $C_j \geq \Omega(\frac{1}{\log c})$ --- but it fails for jobs with $0 \leq C_j \ll 1$.


\subsection{The linear program}
In order to address this case, we first start with a more general  LP relaxation compared to the makespan result
which tracks the actual time slot where the jobs are processed, rather than just the interval.
Again, we use the parameter $m \in \setN$ to denote the number of machines that we allow the LP to use (one can set $m := n$)
and the parameter $S \in \setN$ to denote the number of intervals that we allow for the time horizon. We abbreviate $T := S \cdot c$ as the number of
time slots. Note that $T \leq nc$ always suffices for any non-idling schedule. We index time slots as $[T] := \{ 1,\ldots,T\}$
and consider an interval as a discrete set of slots $I_s := \{ cs+1,\ldots,c(s+1)\}$ where $s \in \{ 0,\ldots,S-1\}$.

Recall that in the makespan result $x_{j,i,s}$ variables indicated if job $j$ got scheduled on machine $i$ in the interval $s$. 
Here, we introduce additional variables of the form $z_{j,i,t}$ which indicate if job $j$ is scheduled on machine $i$ at time $t \in [T]$.
The variables $x_{j,i,s}$ are fully determined by summing over appropriate variables $z_{j,i,t}$, but we retain
them for  notational convenience.
Further, similar to our makespan result, we  impose an interval structure on the optimal solution and lose an $O(1)$ factor in the approximation ratio.

Let $\tilde{K}$ be the set of fractional solutions to the following LP. 
\begin{eqnarray*}
  \sum_{i \in [m]} \sum_{t \in [T]} z_{j,i,t} &=& 1 \quad \forall j \in J \\
  \sum_{j \in J} z_{j,i,t} &\leq& 1 \quad \forall i \in [m] \;\; \forall t \in [T] \\
  \sum_{t' < t} \sum_{i \in [m]} z_{j_1,i,t'} &\geq& \sum_{t' \leq t} \sum_{i \in [m]} z_{j_2,i,t'} \quad \forall j_1 \prec j_2 \;\;\forall t \in [T] \\
   \sum_{t \in I_s} z_{j,i,t} &=& x_{j,i,s} \quad \forall j \in J \;\; \forall s \in \{ 0,\ldots,S-1\} \\
  0 \leq z_{j,i,t} &\leq& 1 \quad \forall j \in J, i \in [m], t \in [T] \\
\end{eqnarray*}

Similar to the makespan LP, let $\tilde{Q}(r)$ be the set of feasible solutions  $(x,y, z,C)$ to the following LP:
\begin{eqnarray*}
\text{Minimize}  &&\sum_{j \in J} w_j \cdot C_j \\
  y_{j_1,j_2} &=& \sum_{s \in \{ 0,\ldots,S-1\}} \sum_{i \in [m]} x_{(j_1,i,s),(j_2,i,s)} \\
  C_{j_2} &\geq& C_{j_1} + (1-y_{j_1,j_2}) \cdot c \quad \forall j_1 \prec j_2 \\
  C_j &=& \sum_{i \in [m]} \sum_{t \in [T]} z_{j,i,t} \cdot t \quad \forall j \in J \\
  (x,z) &\in& SA_r(\tilde{K})
\end{eqnarray*}

Note that the $C_j$ variables in this LP relaxation denote the actual completion time of $j$ unlike their role in the makespan result, where they were used to indicate the interval in which $j$ was scheduled.
The main technical result for this section is the following:
\begin{theorem} \label{thm:WeightedSumOfComplTimeMainTechResult}
Consider an instance for $\p \infty \mid \Prec, p_j=1, c \mid \sum_{j}w_j C_j$ and 
a solution $(x,y,z,C) \in \tilde{Q}(r)$ with $r \geq 5$. 
Then there is a randomized algorithm with expected polynomial running time that finds a feasible schedule so that
(i) $\E[C_j^A] \leq O(\log c \cdot \log n) \cdot C_j$ and (ii) $C_j^A \leq O(\log c \cdot \log n) \cdot C_j + O(\log n) \cdot c$ for all $j \in J$, where $C_j^A$ is the completion time of job $j$.
\end{theorem}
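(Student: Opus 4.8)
The plan is to adapt the CKR-clustering machinery of Section~\ref{sec:approximation_for_pinfty} to the time-indexed relaxation $\tilde{Q}(r)$, the new ingredient being a \emph{geometric bucketing} of the jobs by their LP completion time. First I would observe that the $x$-variables of any $(x,y,z,C)\in\tilde{Q}(r)$ satisfy exactly the constraints defining the polytope $K$ of Section~\ref{sec:approximation_for_pinfty}: the only non-obvious one, $\sum_{j}x_{j,i,s}\le c$, follows because $x_{j,i,s}=\sum_{t\in I_s}z_{j,i,t}$ and $\sum_j z_{j,i,t}\le 1$ while $|I_s|=c$. Consequently Lemmas~\ref{lem:metric}, \ref{lem:clustercapacity} and~\ref{lem:SmallNeighborhoodOfClosePoints} carry over verbatim, so $d(j_1,j_2):=1-y_{j_1,j_2}$ is a semimetric with $\sum_{j_2}y_{j_1,j_2}\le c$ and small $d$-neighborhoods. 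Two facts are particular to the new LP: (a) $C_{j_1}+c\cdot d(j_1,j_2)\le C_{j_2}$ for $j_1\prec j_2$ (rescaling the constraint $C_{j_2}\ge C_{j_1}+(1-y_{j_1,j_2})c$); and (b) the CDF-domination constraint $\sum_{t'<t}\sum_i z_{j_1,i,t'}\ge\sum_{t'\le t}\sum_i z_{j_2,i,t'}$ forces $C_{j_2}\ge C_{j_1}+1$ for $j_1\prec j_2$, hence $C_j$ is at least the number of jobs on any chain topped by $j$.

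Next I would partition the jobs into buckets $G_k:=\{j: 2^{k-1}<C_j\le 2^k\}$, $k=0,1,\dots$; there are $O(\log(nc))$ of them since $C_j\le T\le nc$, and each prefix $\bigcup_{k'\le k}G_{k'}=\{j:C_j\le 2^k\}$ is downward-closed under $\prec$, so the buckets can be scheduled in increasing order of $k$ with all predecessors of a bucket already placed. Inside $G_k$ I would further split by $C_j$ into $\lceil 2^{k-1}/(\delta c)\rceil$ sub-batches of width $\delta c$, with $\delta=\Theta(1/\log c)$; by fact (a), dependent jobs inside one sub-batch have $d$-distance $\le\delta$, which is exactly the hypothesis of Lemma~\ref{lem:SchedulingOneIntervalViaCKR}. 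Running CKR with $\Delta=\tfrac14$ and forming the sets $V_\ell'$ as in Section~\ref{subsec:singlebatch} yields clusters of $d$-diameter $\le\tfrac12$. The difference from the makespan case is how a cluster is then scheduled: we cannot afford $2c$ slots per cluster, or the very first bucket would already cost $\Theta(c)\gg 2^k$ and claim (i) would fail for small-$C_j$ jobs. Instead, within a cluster I would decompose its induced precedence DAG into the connected components of its comparability graph, place each component on its own machine (the sizes sum to $\le|V_\ell'|\le 2c\le m=n$) and run each component consecutively in topological order; since dependent jobs inside a cluster end up on the same machine this respects the communication delay. The crucial estimate is that such a component contains only $O(2^k)$ jobs, so one sub-batch of $G_k$ is scheduled in $O(2^k)$ time, a $\ge\tfrac12$ fraction of it succeeds, and $O(\log n)$ repetitions — with the same straggler clean-up and Markov retry as in Lemma~\ref{lem: FullSchedulingOneIntervalViaCKR} — place all of $G_k$ in $O(\log n)\cdot 2^k$ time; once $2^k\gtrsim c$ this reduces to $O(\log n)\cdot c$ per cluster and, after the $\delta$-splitting, $O(\log n\log c)\cdot 2^k$ per bucket.

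Summing over buckets: those with $2^k\lesssim c$ together cost $O(\log n)\cdot c$, and a bucket with $2^k\gtrsim c$ costs $O(\log n\log c)\cdot 2^k$, which telescopes to $O(\log n\log c)\cdot 2^k$. Hence any $j\in G_k$ completes by $O(\log n\log c)\cdot 2^k+O(\log n)\cdot c\le O(\log c\cdot\log n)\cdot C_j+O(\log n)\cdot c$, which is (ii). For (i), a job scheduled in repetition $\ell$ of its sub-batch finishes within $O(\ell\cdot 2^k)$ of the start of $G_k$'s processing while $\Pr[\ell>t]\le 2^{-t}$, so its expected completion time is at most the (deterministic) prefix cost before $G_k$ plus $O(2^k)$, which is $O(\log c\cdot\log n)\cdot C_j$ — the additive $c$-term vanishing because for $C_j\lesssim c$ the prefix cost is only $O(\log n)\cdot C_j$. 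Feasibility is checked exactly as in Section~\ref{sec:TheCompleteAlgorithm}: within a sub-batch by the $V_\ell'$-construction, and across sub-batches and buckets because we process them in order of increasing $C_j$, which is consistent with $\prec$.

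I expect the main obstacle to be precisely the component-size bound. The $x$-part of the LP only ``sees'' length-$c$ intervals, so to prove that a small-scale cluster is genuinely small one must descend to the $z$-variables and combine the CDF-domination constraint, the capacity constraint $\sum_{j'}y_{j,j'}\le c$, and the fact that a cluster lives ``mostly on one machine in one interval'' to conclude that its jobs, occupying distinct slots of a single length-$c$ interval, cannot all have completion time below the scale $2^k$ unless there are only $O(2^k)$ of them; making this rigorous through the Sherali--Adams lift (where a full cluster is too large to integralize all at once) is the delicate part. Everything else — extracting the metric, invoking CKR via Theorem~\ref{thm:ProbUSeperatedByClustering}, and the geometric telescoping — is a routine adaptation of Section~\ref{sec:approximation_for_pinfty}.
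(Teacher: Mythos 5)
Your high-level plan correctly isolates the hard case (jobs with $C_j\ll c$) and the need to descend to the $z$-variables, but the step you yourself flag as ``the delicate part'' is exactly the heart of the matter, and as sketched it does not go through. Your replacement for ``$2c$ slots per cluster'' is the claim that every precedence-connected component of a cluster inside bucket $G_k$ has $O(2^k)$ jobs (equivalently, that a cluster of small-$C_j$ jobs can be sequenced so each job finishes within $O(C_j)$). With your parameters the natural counting argument degenerates: cluster diameter $\le\tfrac12$ only gives pairwise $y_{j,j^*}\ge\tfrac12$, i.e.\ (after SA-conditioning on $j^*$'s machine--interval) mass $\ge\tfrac12$ on that machine--interval \emph{on average}, while $C_j\le 2^k$ only gives, via Markov, mass $\ge\tfrac12$ on slots $\le 2^{k+1}$; these two half-masses can be disjoint, so no slot-capacity contradiction follows, and the interval-level capacity is $c$, not $2^k$. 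The paper's fix is Lemma~\ref{l:ctsecondbatch}: define $\alpha$-points $t_j^*$ at level $1-\varepsilon$ with $\varepsilon=\tfrac1{12}$, shrink the clustering radius to $\Delta=\tfrac1{12}$, order each cluster by $t_j^*$ on one machine, and run the conditioning argument (integralizing a single job $j_N$, which is all SA can afford) to show $C^A_j\le 2t_j^*\le O(C_j/\varepsilon)$. Nothing in your proposal supplies an argument of this strength, and re-tuning your scheme to make it work essentially means rediscovering that lemma; note also that the component decomposition is then unnecessary, since the whole cluster fits on one machine.

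There is a second gap in the bucketing itself. If $j_1\prec j_2$ lie in different small buckets (say $2^{k}\ll c$), your schedule places them on unrelated machines inside time windows of total length $O(\log n\cdot 2^k)\ll c$, violating the communication delay; the LP does not let you drop such pairs, since it forces $y_{j_1,j_2}\ge 1-2^k/c$, i.e.\ it schedules them together. Inserting a $c$-gap after each of the $\Theta(\log c)$ small buckets restores feasibility but ruins (i): a job with $C_j\approx 2^k\ll c$ would wait behind $\Theta(k)\cdot c$ idle time. The paper avoids this tension by making the \emph{bottom} batch $J_0$ of width $\delta=\frac{c}{64\log(4c)}$ (comparable to $c$), handling all tiny-$C_j$ jobs in a single CKR clustering with one $c$-gap before the fallback set $J_0'$, and obtaining the expectation bound (i) not from per-repetition costs but from the refined separation probability $O(\log c)\cdot C_j/c$ of Lemma~\ref{l:ctsecondbatch1}, which balances the $O(\log n)\cdot c$ fallback cost of Lemma~\ref{l:ctsecondbatch2}; batches $J_k$, $k\ge 1$, are then handled exactly as in the makespan result (Lemmas~\ref{lem: FullSchedulingOneIntervalViaCKR} and~\ref{lem:comp5}), where $c$-gaps are affordable because $C_j\ge k\delta$. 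So while your reduction of the $x$-constraints to $K$ and the reuse of Lemmas~\ref{lem:metric}--\ref{lem:SmallNeighborhoodOfClosePoints} are fine, the two load-bearing ingredients of the paper's proof --- the $\alpha$-point cluster-sequencing lemma and the single-wide-bottom-batch design that reconciles feasibility with the expectation bound --- are missing from your argument.
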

We briefly describe how Theorem~\ref{thm:WeightedSumOfComplTimeMainTechResult} implies the approximation algorithm
promised in Theorem~\ref{thm:maincomp}:
\begin{proof}[Proof of Theorem~\ref{thm:maincomp}]
Note that strictly speaking $\tilde{Q}(r)$ is not actually a relaxation of  $\p \infty \mid \Prec, p_j=1, c \mid \sum_{j}w_j C_j$. However one can take an optimum integral schedule and insert $c$ idle time slots every $c$ time units and obtain a feasible solution for $\tilde{Q}(r)$. This increases the completion time of any job by at most a factor of 2.
Then we set $r := 5$ and $m := n$ and solve the LP $\tilde{Q}(r)$ in time polynomial in $n$.
Now consider the randomized schedule from Theorem~\ref{thm:WeightedSumOfComplTimeMainTechResult} with completion times $C_j^A$. Then the expected objective function is
 $\E[\sum_{j \in J} w_j \cdot C^A_j] \leq O(\log n \cdot \log c) \cdot \big( \sum_{j \in J} w_j \cdot C_j \big)$.
 Markov's inequality guarantees that we can find in expected polynomial time a schedule that satisfies this inequality if we increase the right hand side by a constant factor. This completes the proof.
\end{proof}

\subsection{The Rounding Algorithm}
Let $(x, y, z, C)$ be an optimal solution to the LP relaxation $\tilde{Q}(r)$ with $r \geq 5$.
It remains to show Theorem~\ref{thm:WeightedSumOfComplTimeMainTechResult}.
We partition the jobs based on their fractional completion times. 
For  $\delta = \frac{c}{64 \log(4c)}$
and  $k \geq 0$, let $J_k := \{j \in J : k \cdot \delta \leq C_j < k \cdot \delta\}$. 
 
We give a separate  algorithm for scheduling jobs in $J_0$ within an interval of length at most $O(\log n) \cdot c$.
Now consider the remaining jobs.
For $k = 1,2,...$, we schedule jobs in the set $J_k$  using the algorithm from Section~\ref{sec:TheCompleteAlgorithm},
inserting $c$ empty time slots between the schedule of jobs in the set $J_k$ and $J_{k+1}$.
Let $C^{A}_j$ denote the completion time of job $j$ in our algorithm.

\begin{lemma}
\label{lem:comp5}
For $k \geq 1$, consider a job $j \in J_k$. Then deterministically $C^{A}_j \leq O(\log n \cdot \log c) \cdot C_j $.
\end{lemma}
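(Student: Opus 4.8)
The plan is to unwind the definitions carefully and track how the scaling between time slots, intervals, and the parameter $\delta$ interacts with the guarantee from Section~\ref{sec:TheCompleteAlgorithm}. Fix $k \geq 1$ and a job $j \in J_k$. By definition of $J_k$ we have $C_j \geq k \delta = k \cdot \frac{c}{64 \log(4c)}$, so in particular $C_j \geq \frac{c}{64 \log(4c)}$, which will let us absorb additive $O(\log n) \cdot c$ terms into a multiplicative $O(\log n \cdot \log c) \cdot C_j$ bound. Since $C_j \geq c$ is not guaranteed, it is essential that $k \geq 1$ here — this is exactly the case the separate treatment of $J_0$ is designed to sidestep.

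First I would pin down what the algorithm from Section~\ref{sec:TheCompleteAlgorithm} actually delivers on the batch $J_k$. That algorithm, via Theorem~\ref{thm:LPRoundingTheorem} (applied to the restricted instance on the jobs of $J_k$ with the induced $\prec$ and an appropriate interval bound $S_k$), produces a schedule using $O(\log m \cdot \log c) \cdot S_k$ length-$c$ intervals, where $S_k$ can be taken to be roughly $\lceil (\text{spread of }C_j\text{ over }J_k)/c \rceil$ — but more usefully, the jobs of $J_k$ all have fractional completion time at most $(k+1)\delta$, so the whole batch ``lives'' by time $(k+1)\delta$ in the LP, which is $\Theta(k \cdot c / \log(4c))$. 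Then I would account for the prefix: our algorithm schedules $J_1, J_2, \ldots$ in order, inserting $c$ empty slots between consecutive batches, so $C_j^A$ is bounded by $c$ times the total number of intervals used for batches $J_1, \ldots, J_k$, plus the number of intervals used for $J_k$ itself. Summing the per-batch interval counts $O(\log m \cdot \log c) \cdot S_{k'}$ over $k' \leq k$ and using $\sum_{k' \leq k} S_{k'} = O(k / \log(4c)) + (\text{load terms})$ — together with $m \leq n$ and the load bound $|J| \leq S \cdot c m$ from the constraints defining $K$ — should give $C_j^A \leq O(\log n \cdot \log c) \cdot k \cdot \frac{c}{\log(4c)} \cdot \text{something}$, which I want to collapse to $O(\log n \cdot \log c) \cdot k\delta \leq O(\log n \cdot \log c) \cdot C_j$.

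The main obstacle I anticipate is bookkeeping the constant factors so that the $\log(4c)$ in the denominator of $\delta$ exactly cancels one of the $\log c$ factors coming from Theorem~\ref{thm:LPRoundingTheorem}, leaving the clean $O(\log n \cdot \log c) \cdot C_j$ bound rather than an extra $\log c$ or $\log n$. Concretely: the number of batches up to $J_k$ scales like $C_j / \delta = C_j \cdot \frac{64 \log(4c)}{c}$, each batch costs $O(\log m \cdot \log c)$ intervals (ignoring load), each interval costs $c$ time units, so the product is $C_j \cdot \frac{64 \log(4c)}{c} \cdot O(\log m \cdot \log c) \cdot c = O(\log(4c) \cdot \log m \cdot \log c) \cdot C_j$. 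This is $O(\log^2 c \cdot \log n) \cdot C_j$ naively, so to get down to $O(\log c \cdot \log n)$ one must use the sharper $O(\log m \cdot \log c)$ bound where in fact the relevant quantity per batch is $O(\log m)$ intervals per $\delta$-width sub-batch already (recall Lemma~\ref{lem: FullSchedulingOneIntervalViaCKR} gives $O(\log m) + |J_k|/(cm)$ intervals \emph{per} batch $J_k$ of width $\delta$, not $O(\log m \cdot \log c)$). So the correct accounting is: each of the $\approx C_j/\delta$ batches costs $O(\log m) + |J_k|/(cm)$ intervals, total intervals $\leq \frac{C_j}{\delta} \cdot O(\log m) + \frac{|J|}{cm}$, hence $C_j^A \leq c \cdot \big( \frac{C_j \cdot 64 \log(4c)}{c} \cdot O(\log m) + \frac{|J|}{cm} \big) = O(\log c \cdot \log m) \cdot C_j + \frac{|J|}{m}$; then bounding $\frac{|J|}{m} \leq S c = T$ and relating $T$ to $C_j$ is the remaining subtlety — but since we only need the \emph{deterministic} per-job bound and $|J|/m \leq O(\log c \cdot \log n) \cdot C_j$ should follow from $C_j \geq \delta$ together with $|J| \leq ncm$ wait that overshoots; more carefully, the $\frac{|J|}{cm}$ term is a \emph{global} additive term that I would instead fold into the makespan analysis rather than the per-job bound, or absorb it by noting jobs of $J_k$ don't incur the full global load. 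I would resolve this by being careful to only count, for the prefix before $J_k$, the load terms $\sum_{k' \leq k} |J_{k'}|/(cm) \leq |J|/(cm)$, and then observing $|J|/(cm) \leq S \leq O(\text{something} \cdot C_j / c)$ is \emph{not} available per-job — so the honest statement is that this lemma as phrased ($C_j^A \leq O(\log n \log c) C_j$ deterministically) must be using that after the prefix, $c \cdot (\text{intervals for } J_1 \cup \cdots \cup J_k) \leq O(\log c \log m) \cdot (k\delta) + c \cdot \frac{|J_1 \cup \cdots \cup J_k|}{cm}$ and the second term is $\leq \frac{|J|}{m} \leq \frac{ncm}{m} = nc$... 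I would double-check against the authors' intent, but the clean route is: bound the number of intervals for the prefix by $O(\log m) \cdot \frac{C_j}{\delta} + \frac{\sum_{k'\le k}|J_{k'}|}{cm}$, multiply by $c$, use $\delta = \Theta(c/\log c)$ to turn the first term into $O(\log c \log m) C_j$, and handle the load term by the same $|J| \le Scm$ argument that in the makespan proof made $\frac{|J|}{cm} \le S$, here giving an additive $c \cdot S$ which the weighted-sum reduction's interval structure (Theorem~\ref{thm:maincomp} proof) already pays for. I expect the final write-up to be a half-page of such arithmetic with the delicate point being exactly which terms are per-job versus global.
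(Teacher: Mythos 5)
There is a genuine gap, and you have in fact located it yourself without closing it: the additive load term. Your accounting via Lemma~\ref{lem: FullSchedulingOneIntervalViaCKR} charges each batch $O(\log m)$ intervals \emph{plus} $\frac{|J_{k'}|}{cm}$ intervals for the leftover jobs placed on a single machine, and the sum of these load terms over the prefix, multiplied by $c$, is a global quantity (up to $\frac{|J|}{m}$) that cannot be charged against the individual $C_j$ of a job in $J_k$. Your proposed escapes --- folding it into the makespan analysis, or claiming the ``weighted-sum reduction's interval structure already pays for it'' --- do not work, because the lemma asserts a per-job multiplicative bound $C^A_j \leq O(\log n \cdot \log c)\cdot C_j$ that must hold for the schedule actually output, not an aggregate or objective-level bound. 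So as written your argument does not prove the statement.

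The paper's proof removes the load term entirely rather than trying to charge it. For each batch it runs the CKR-based single-batch procedure for $2\log n$ iterations (not $2\log m$), so a fixed job survives unscheduled with probability at most $1/n^2$ and the expected number of leftover jobs in the batch is $|J_{k'}|/n^2 < 1$; repeating until a run with \emph{no} leftovers is found (expected polynomially many trials), the single-machine fallback --- and hence the $\frac{|J_{k'}|}{cm}$ term --- disappears. Each batch then occupies only $O(\log n)$ intervals of length $O(c)$, so a job $j \in J_k$ completes by time $O(\log n)\cdot c \cdot (k+1)$, and the arithmetic you already carried out ($C_j \geq k\delta$ with $\delta = \frac{c}{64\log(4c)}$ and $k \geq 1$, so $(k+1)c \leq O(\log c)\cdot C_j$) yields $C^A_j \leq O(\log n \cdot \log c)\cdot C_j$ deterministically for the schedule found. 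This iteration-count change (exploiting that we are in the $\p\infty$ setting, so there is no shortage of machines and no need to dump leftovers on one machine) is the missing idea in your proposal.
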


\begin{proof}
The claim follows from repeating the arguments in Lemma \ref{lem: FullSchedulingOneIntervalViaCKR}, so we only give a sketch here. 
Fix $k$ and consider scheduling the jobs in the set $J_k$ using the procedure described in Lemma \ref{lem: FullSchedulingOneIntervalViaCKR}, where we repeat the CKR clustering algorithm for $k = \{1,2, \ldots 2\log n\}$ iterations. 
Then the expected number of jobs that did not get scheduled in the first $2\log n$ iterations is at most $\frac{|J_k|}{n^2} < 1$. Therefore, in expected polynomial time we can find a schedule such that  $C^{A}_j  \in [2\log n \cdot O(c) \cdot k, 2\log n \cdot O(c) \cdot (k+1)]$.
From the definition of set $J_k$,  the fractional completion time $C_j$ of every job $j$ in $J_k$ is at least $k \cdot \frac{c}{64 \log (4c)}$ in the LP solution.
This completes the proof.
\end{proof}

The only new ingredient  for the completion time result is scheduling the jobs in the set $J_0$. 
For $j \in J_0$, let $t^*_j$ denote the earliest time instant $t$ at which the job is  scheduled to a fraction of at least $1-\varepsilon$ in the LP solution. Here $0<\varepsilon<1$ is a small constant that we determine later.
In scheduling theory this time is also called \emph{$\alpha$-point} with $\alpha = 1-\varepsilon$.
Formally 
\begin{equation} \label{eq:AlphaPoints}
  t^*_j := \min \left\{t' \in [T]:  \sum_{i=1}^m  \sum^{t'}_{t = 1} z_{j,i,t} \geq 1-\varepsilon   \right\}
\end{equation}

We use the same semimetric $d(j_1,j_2) := 1-y_{j_1,j_2}$ as in Section~\ref{sec:approximation_for_pinfty} and schedule jobs in $J_0$ using the following procedure.

\begin{center}
 \psframebox{
\begin{minipage}{14cm}
  \textsc{Schedule For $J_0$} \vspace{1mm} \hrule \vspace{1mm}
\begin{enumerate*}
  \item[(1)] Run a CKR clustering on the semimetric space $(J_0,d)$ with parameter $\Delta := \frac{1}{12}$ and let $V_1,\ldots,V_k$ be the clusters.
  \item[(2)] Let $V_{\ell}' := \{ j \in V_{\ell} \mid (\Gamma^-(j) \cap J_0) \subseteq V_{\ell}\}$ for $\ell = 1,\ldots,k$.
  \item[(3)] For all $\ell=1,\ldots,k$ assign jobs in $V_{\ell}'$ on  a single machine and schedule them in the increasing of order of $t^*_j$ values breaking ties in an arbitrary manner.
  \item[(4)] Insert a gap of $c$ time slots.
  \item[(5)] Let $J'_0 \subseteq J_0$ be the set of jobs that did not get scheduled in steps (1) - (3). Use Lemma \ref{lem: FullSchedulingOneIntervalViaCKR} to schedule $J'_0$.
  \end{enumerate*}
  \end{minipage}}
\end{center}

\begin{lemma}\label{l:ctsecondbatch1}
  For a job $j_1 \in J_0$, the probability that $j_1$ gets scheduled in step (5) of the algorithm, 
i.e., $j_1 \in J'_0$, is at most  $O(\log c) \cdot \frac{C_{j_1}}{c}$.
\end{lemma}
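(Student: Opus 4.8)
The plan is to observe that the event $\{j_1 \in J_0'\}$ is, by construction of Step~(2), exactly the event that $j_1$ is \emph{separated} from one of its ancestors in $J_0$ by the CKR clustering, and then to apply the CKR analysis of Theorem~\ref{thm:ProbUSeperatedByClustering}(b) to the set
\[
  U := \{ j_1 \} \cup \big( \Gamma^-(j_1) \cap J_0 \big).
\]
Indeed, clusters partition $J_0$, so $j_1$ lies in a unique $V_\ell$, and by Step~(2) $j_1$ is scheduled in Steps~(1)--(3) iff $\Gamma^-(j_1) \cap J_0 \subseteq V_\ell$. Since $\prec$ is transitive, every element of $U \setminus \{j_1\}$ is an ancestor of $j_1$, so $j_1 \in J_0'$ holds precisely when some element of $U$ lies outside $V_\ell$, i.e.\ precisely when $U$ is separated. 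Thus $\Pr[j_1 \in J_0'] = \Pr[U \text{ separated}]$, and it remains to control $\textrm{diam}(U)$ and $|N(U,\Delta/2)|$ for $\Delta = \tfrac{1}{12}$.

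First I would bound the diameter. For any $j' \in \Gamma^-(j_1) \cap J_0$ the precedence constraint of $\tilde{Q}(r)$ gives $C_{j_1} \geq C_{j'} + (1-y_{j',j_1}) \cdot c$, and since $C_{j'} \geq 0$ this yields $d(j',j_1) = 1 - y_{j',j_1} \leq \frac{C_{j_1} - C_{j'}}{c} \leq \frac{C_{j_1}}{c}$. By the triangle inequality for the semimetric $d$ (Lemma~\ref{lem:metric}), any two elements of $U$ are then at distance at most $\frac{2C_{j_1}}{c}$, so $\textrm{diam}(U) \leq \frac{2C_{j_1}}{c}$.

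Next I would bound the neighborhood. If $d(j,U) \leq \Delta/2 = \tfrac{1}{24}$ then $d(j,j_1) \leq \tfrac{1}{24} + \textrm{diam}(U) \leq \tfrac{1}{24} + \tfrac{2C_{j_1}}{c}$; since $j_1 \in J_0$ gives $C_{j_1} < \delta = \frac{c}{64\log(4c)}$ and $\log(4c) \geq 1$, the last term is below $\tfrac{1}{32}$, so $d(j,j_1) < \tfrac18$. Hence by Lemma~\ref{lem:SmallNeighborhoodOfClosePoints} (with $\beta = \tfrac18$) we get $|N(U,\Delta/2)| \leq \frac{c}{1 - 1/8} < 2c$, and therefore $\ln(2|N(U,\Delta/2)|) \leq \ln(4c) = O(\log c)$. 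Plugging both estimates into Theorem~\ref{thm:ProbUSeperatedByClustering}(b),
\[
  \Pr[j_1 \in J_0'] = \Pr[U \text{ separated}] \leq \ln\big(2|N(U,\Delta/2)|\big) \cdot \frac{4\,\textrm{diam}(U)}{\Delta} \leq \ln(4c) \cdot \frac{4 \cdot 2C_{j_1}/c}{1/12} = 96\ln(4c)\cdot \frac{C_{j_1}}{c},
\]
which is $O(\log c) \cdot \frac{C_{j_1}}{c}$ as claimed.

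The routine parts are the diameter estimate and the final substitution; the one point requiring a bit of care is the neighborhood bound, where one must verify that the choice $\Delta = \tfrac{1}{12}$ together with the width $\delta$ of the batch $J_0$ is arranged so that the $\tfrac{\Delta}{2}$-neighborhood of $U$ still lies inside a $d$-ball of radius $<1$ around $j_1$, which is what makes Lemma~\ref{lem:SmallNeighborhoodOfClosePoints} applicable and keeps $\ln(2|N(U,\Delta/2)|) = O(\log c)$ rather than $O(\log n)$. Everything else is a direct invocation of the CKR guarantee and the LP inequalities.
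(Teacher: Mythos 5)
Your proof is correct and follows essentially the same route as the paper's: bound $d(j',j_1)\leq C_{j_1}/c$ for ancestors via the LP constraint, conclude $\mathrm{diam}(U)\leq 2C_{j_1}/c$, and apply Theorem~\ref{thm:ProbUSeperatedByClustering}(b). The only difference is that you spell out the neighborhood bound $|N(U,\Delta/2)|\leq 2c$ (via Lemma~\ref{lem:SmallNeighborhoodOfClosePoints} and the width $\delta$ of $J_0$), which the paper leaves implicit by referring to Lemma~\ref{lem:SchedulingOneIntervalViaCKR}.
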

\begin{proof}
  The arguments are a slight refinement of Lemma~\ref{lem:SchedulingOneIntervalViaCKR}.
  Consider the set $U := \{ j_1 \} \cup (\Gamma^{-}(j_1) \cap J_0)$ of $j_1$ and its ancestors. If $j_0 \prec j_1$, then $0 \leq C_{j_0} + c \cdot d(j_0,j_1) \leq C_{j_1}$ by the LP constraints and so $d(j_0,j_1) \leq \frac{C_{j_1}}{c}$.
  Then the diameter of $U$ with respect to semimetric $d$ is bounded by $2C_{j_1}$ and hence by Theorem~\ref{thm:ProbUSeperatedByClustering}.(b) the probability that $U$ is separated is bounded by $\ln(2|N(U,\Delta/2)|) \cdot \frac{4 \textrm{diam}(U)}{\Delta} \leq O(\log c) \cdot \frac{C_{j_1}}{c}$.
\end{proof}

The next lemma follows from repeating the arguments in Lemma \ref{lem:comp5}. 
\begin{lemma}
\label{l:ctsecondbatch2}
For a job $j \in J_0$, condition on the event that $j \in J'_0$.  Then, $C^A_j|_{(j \in J'_0)} \leq O(\log n ) \cdot c$.
\end{lemma}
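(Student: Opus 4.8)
The plan is to bound the total length of the \textsc{Schedule For $J_0$} routine; since the jobs of $J'_0$ are the ones placed last (step~(5)) and $C^A_j$ never exceeds the length of the routine, it is enough to show that this routine finishes within $O(\log n)\cdot c$ time slots. The routine has three phases --- steps~(1)--(3), the length-$c$ gap of step~(4), and the recursive scheduling of $J'_0$ in step~(5) --- and I would bound each in turn.

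For phases one and two: by Theorem~\ref{thm:ProbUSeperatedByClustering} each cluster has $\textrm{diam}(V_{\ell}') \le \textrm{diam}(V_{\ell}) \le \Delta = \frac{1}{12} < \frac12$ with respect to $d$, so fixing $j^* \in V_{\ell}'$ and applying Lemma~\ref{lem:SmallNeighborhoodOfClosePoints} with $\beta = \frac12$ gives $|V_{\ell}'| \le \frac{c}{1-1/2} = 2c$; thus each $V_{\ell}'$ is a set of at most $2c$ unit jobs placed on its own machine and occupies at most $2c$ slots, so steps~(1)--(3) take at most $2c$ slots in total (the clusters run in parallel on distinct machines), and step~(4) adds $c$ more. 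For step~(5), $J'_0 \subseteq J_0 \subseteq \{ j : 0 \le C_j < \delta\}$ with $\delta = \frac{c}{64\log(4c)}$, so with $C^* := 0$ the hypotheses of Lemma~\ref{lem: FullSchedulingOneIntervalViaCKR} are in place, using that $\Gamma^-(J'_0)\setminus J'_0 \subseteq J_0 \setminus J'_0$ (every ancestor $j'$ of a job $j \in J'_0$ has $0 \le C_{j'} \le C_j < \delta$, hence lies in $J_0$) has already been scheduled in steps~(1)--(3), while the length-$c$ gap of step~(4) supplies the required communication delay between such ancestors and their descendants in $J'_0$, which are placed on fresh machines. With $m := n$, Lemma~\ref{lem: FullSchedulingOneIntervalViaCKR} schedules $J'_0$ in at most $O(\log n) + \frac{|J'_0|}{nc} \le O(\log n) + 1 = O(\log n)$ length-$c$ intervals (using $|J'_0| \le n$ and $c \ge 1$), i.e.\ $O(\log n)\cdot c$ slots. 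Adding the phases, $C^A_j \le 2c + c + O(\log n)\cdot c = O(\log n)\cdot c$; this is a deterministic bound, since the randomness of Lemma~\ref{lem: FullSchedulingOneIntervalViaCKR} affects only the running time, not the number of intervals produced.

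The step I expect to need the most care is the feasibility bookkeeping for step~(5): one must verify that appending the schedule of $J'_0$ after the first two phases introduces no precedence or communication violation with jobs placed earlier. This is exactly what the cluster trimming $V_{\ell}' = \{ j \in V_{\ell} : \Gamma^-(j)\cap J_0 \subseteq V_{\ell}\}$ ensures inside phase one (no cross-machine precedence edge, just as in the makespan argument of Section~\ref{sec:approximation_for_pinfty}), combined with the length-$c$ gap of step~(4); everything else is a routine repackaging of Lemma~\ref{lem: FullSchedulingOneIntervalViaCKR} in the spirit of Lemma~\ref{lem:comp5}. One minor caveat to flag: Lemma~\ref{lem: FullSchedulingOneIntervalViaCKR} is stated for $Q(r)$ rather than $\tilde{Q}(r)$, but its proof uses only the $x$- and $y$-coordinates and the semimetric $d$, all of which are inherited by the projection of any $\tilde{Q}(r)$ solution, so the argument transfers verbatim.
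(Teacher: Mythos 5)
Your proposal is correct and matches the paper's intended argument: the paper dispatches this lemma with a one-line remark that it ``follows from repeating the arguments in Lemma~\ref{lem:comp5}'', i.e.\ the whole \textsc{Schedule For $J_0$} routine occupies $O(c)+c+O(\log n)\cdot c$ slots because steps (1)--(3) fit in $O(c)$ slots per parallel cluster and step (5) is an instance of Lemma~\ref{lem: FullSchedulingOneIntervalViaCKR} with $m=n$, which is exactly the accounting you carry out. Your extra bookkeeping (ancestors of $J'_0$ lying in $J_0\setminus J'_0$, the rescaling of $\delta$ and $d$ by the factor $c$ when passing from $Q(r)$ to $\tilde{Q}(r)$, and the bound on intervals being deterministic with only the running time randomized) is a faithful filling-in of the details the paper leaves implicit.
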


We can now prove that every cluster $V_{\ell}'$ can be scheduled on one machine so that the completion time
of any job is at most twice the LP completion time.
\begin{lemma} \label{l:ctsecondbatch}
  For a small enough constant $\varepsilon > 0$ ($\varepsilon = \frac{1}{12}$ suffices) the following holds: 
  Let $U \subseteq J$ be a set of jobs 
  with $\textrm{diam}(U) \leq \varepsilon$ w.r.t. distance $d$. Define $t_j^*$ as in Eq~\eqref{eq:AlphaPoints} and
  schedule the jobs in $U$ in increasing order of $t_j^*$ on one machine and denote the completion time of $j$ by $C_j^A$.
  Then, in expectation $C_j^A \leq 2 t_j^*$ for every $j \in U$.
\end{lemma}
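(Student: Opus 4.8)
The plan is to reduce the lemma to a pure counting statement and then exploit the Sherali--Adams lift to handle it. Since the jobs of $U$ are placed consecutively on a single machine in nondecreasing order of their $t_j^*$ values, the completion time $C_j^A$ is at most the rank of $j$ in that order, i.e.\ at most $|A|$ where $A := \{ j' \in U \mid t_{j'}^* \le t_j^*\}$; writing $\tau := t_j^*$, it therefore suffices to show $|A| \le 2\tau$. Two simple facts are available for every $j' \in A$: (i) by the definition~\eqref{eq:AlphaPoints} of the $(1-\varepsilon)$-point together with $t_{j'}^* \le \tau$, at least a $1-\varepsilon$ fraction of the $z$-mass of $j'$ lies in time slots $\{1,\dots,\tau\}$, that is, $\sum_{i\in[m]}\sum_{t\le \tau} z_{j',i,t} \ge 1-\varepsilon$; and (ii) since $j,j' \in U$ and $\textrm{diam}(U) \le \varepsilon$, we have $y_{j,j'} = 1 - d(j,j') \ge 1-\varepsilon$.

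To combine (i) and (ii) I would condition on the LP-location of the reference job $j$. Applying Theorem~\ref{thm:PropertiesOfSA}.(f) to the single assignment constraint $\sum_{i,t} z_{j,i,t} = 1$ of $\tilde{K}$ --- exactly as in the proof of Lemma~\ref{lem:PropertiesOfSAforSchedulingWithCommDelaysLP}, so that only one round is spent and $r \ge 5$ is comfortably sufficient --- one obtains a distribution over $\tilde{z} \in \tilde{K}$ that is integral on all variables of $j$ and preserves the first two moments of $z$. In a sample, $j$ occupies a definite machine $i^*$ and interval $s^*$; consequently, for each $j' \in A$, the product structure of the conditioned solution gives $\tilde{y}_{j,j'} = \tilde{x}_{j',i^*,s^*}$ (the fraction of $j'$ scheduled on machine $i^*$ during interval $I_{s^*}$), whence $\E[\tilde{x}_{j',i^*,s^*}] = y_{j,j'} \ge 1-\varepsilon$, while still $\E\big[\sum_{i}\sum_{t\le \tau} \tilde{z}_{j',i,t}\big] \ge 1-\varepsilon$ by (i).

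The final step is an inclusion--exclusion argument carried out sample by sample. Since $j'$ is a unit job its total $z$-mass is $1$, so in any sample the mass of $j'$ lying simultaneously on machine $i^*$ in interval $I_{s^*}$ and in the slots $\{1,\dots,\tau\}$ is at least $\tilde{x}_{j',i^*,s^*} + \sum_i \sum_{t\le\tau}\tilde{z}_{j',i,t} - 1$; taking expectations, this quantity is at least $(1-\varepsilon)+(1-\varepsilon)-1 = 1-2\varepsilon$ for every $j' \in A$, so summing over $A$ the expected total of these intersection masses is at least $(1-2\varepsilon)\,|A|$. On the other hand, in any fixed sample this total equals $\sum_{t \in I_{s^*} \cap \{1,\dots,\tau\}} \sum_{j'\in A} \tilde{z}_{j',i^*,t}$, where the inner sum is at most $1$ by the load constraint of $\tilde{K}$, so the total is at most $|I_{s^*} \cap \{1,\dots,\tau\}| \le \tau$. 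Combining, $(1-2\varepsilon)\,|A| \le \tau$, i.e.\ $|A| \le \tau/(1-2\varepsilon) \le 2\tau$ whenever $\varepsilon \le \tfrac14$ (in particular for $\varepsilon = \tfrac1{12}$); this in fact bounds $C_j^A$ deterministically, which is stronger than claimed.

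The step I expect to be the crux is getting past the naive estimate $|U| \le c/(1-\varepsilon)$ from Lemma~\ref{lem:SmallNeighborhoodOfClosePoints}, which is far too weak when $\tau \ll c$ --- precisely the regime of the batch $J_0$. What rescues the argument is that the jobs of $A$ must compete not merely for one machine but for the at most $\tau$ time slots lying \emph{both} inside $j$'s interval \emph{and} before time $\tau$, and it is the Sherali--Adams lift that lets us speak of ``$j$'s interval'' while still retaining the moment equalities used above. Beyond that, the remaining points are routine to verify: that the $\tilde{K}$-analogue of Lemma~\ref{lem:PropertiesOfSAforSchedulingWithCommDelaysLP} goes through verbatim (making one job integral costs one round, $\tilde{z} \in \tilde{K}$ so the load and interval-aggregation constraints survive, and the degree-$2$ moments of $z$ --- hence the derived $x$- and $y$-values --- are preserved), and that $C_j^A$ is indeed bounded by the number of jobs of $U$ whose $t^*$-value is at most $t_j^*$.
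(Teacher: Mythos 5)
Your proof is correct, and its skeleton coincides with the paper's: reduce to the counting claim $|\{j' \in U : t_{j'}^* \le t_j^*\}| \le 2 t_j^*$, condition via the Sherali--Adams lift on the machine/interval of the reference job (the $\tilde{K}$-analogue of Lemma~\ref{lem:PropertiesOfSAforSchedulingWithCommDelaysLP}, using the product structure $\tilde{y}_{j,j'} = \tilde{x}_{j',i^*,s^*}$ and moment preservation), then play the $(1-\varepsilon)$-point information against the unit load constraint $\sum_{j'} \tilde z_{j',i^*,t}\le 1$ on the at most $\tau$ relevant slots. Where you genuinely diverge is in the probabilistic bookkeeping: the paper argues by contradiction, uses Markov's inequality plus a union bound to exhibit a single sample in which three approximate events ($A'$, $B'$, $C'$ with $3\varepsilon$ slack) hold simultaneously, and splits the machine-$i_N$ mass into slots $\le \theta^*$ and $> \theta^*$; you instead work directly in expectation, applying a per-job inclusion--exclusion to lower-bound the mass lying simultaneously on $j$'s machine-interval and in slots $\{1,\dots,\tau\}$, and only then compare with the per-sample upper bound $\tau$. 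Your accounting is arguably cleaner: it avoids the contradiction and the union bound, yields the deterministic bound $C_j^A \le \tau/(1-2\varepsilon)$ (the paper gets $\theta^*/(1-6\varepsilon)$), and hence works for all $\varepsilon \le 1/4$ rather than $\varepsilon \le 1/12$ --- though since Lemma~\ref{l:jnot} ultimately needs $\varepsilon \le \Delta = 1/12$ anyway, the gain in the constant is cosmetic for the overall result.
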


\begin{proof}
  Let us index the jobs in $U = \{ j_1,\ldots,j_{|U|}\}$ so that $t_{j_1}^* \leq \ldots \leq t_{j_{|U|}}^*$.
  Suppose for the sake of contradiction that there is some job $j_N$ with $C_{j_N}^A > 2 t_{j_N}^*$.
  Abbreviate $U^* := \{ j_1,\ldots,j_N\}$ and $\theta^* := t_{j_N}^*$ so that  $1 \leq t_j^* \leq \theta^*$ for $j \in U^*$.
  We observe that $|U^*| = \sum_{j \in U^*} p_j > 2\theta^*$. Then we have
  \[
(A) \;\; \sum_{j \in U^*} \sum_{i \in [m]} \sum_{t=1}^{\theta^*} z_{j,i,t} \geq (1-\varepsilon)|U^*|, \quad (B) \;\; \sum_{j \in U^*} y_{j,j_N} \geq (1-\varepsilon)|U^*|, \quad  (C) \;\; \sum_{i \in [m]} \sum_{t=1}^{\theta^*} z_{j_N,i,t} \geq 1-\varepsilon
\]
where $(A)$ and $(C)$ are by definition of $t_j^*$ and $(B)$ follows from $\textrm{diam}(U^*) \leq \textrm{diam}(U) \leq \varepsilon$.
Intuitively, this means that we have $|U^*| > 2\theta^*$ many jobs that the LP schedules almost fully
on slots $\{1,\ldots,\theta^*\}$ while (B) means that the jobs are almost fully scheduled on the same machine.
As before, we will use the properties of the Sherali-Adams hierarchy to formally derive a contradiction.
We know by Lemma~\ref{lem:PropertiesOfSAforSchedulingWithCommDelaysLP}\footnote{Strictly speaking, Lemma~\ref{lem:PropertiesOfSAforSchedulingWithCommDelaysLP} describes the SA properties for LP $Q(r)$, but an absolutely analogous statement holds for $\tilde{Q}(r)$.} that there is a distribution  $(\tilde{x},\tilde{z},\tilde{y}) \sim \pazocal{D}(j_N)$ so that $\E[\tilde{x}_{j,i,s}] = x_{j,i,s}$, $\E[\tilde{z}_{j,i,t}]=z_{j,i,t}$ and $\E[\tilde{y}_{j_1,j_2}]=y_{j_1,j_2}$ while the variables involving job $j_N$ are integral, i.e.
$\tilde{x}_{j_N,i,s},\tilde{z}_{j_N,i,t} \in \{ 0,1\}$.
Consider the three events
  \[
(A') \sum_{j \in U^*} \sum_{i \in [m]} \sum_{t=1}^{\theta^*} \tilde{z}_{j,i,t} \geq (1-3\varepsilon)|U^*|, \quad (B')  \sum_{j \in U^*} \tilde{y}_{j,j_N} \geq (1-3\varepsilon)|U^*|, \quad   (C') \sum_{i \in [m]} \sum_{t=1}^{\theta^*} \tilde{z}_{j_N,i,t} = 1.
\]
Then by Markov inequality $\Pr[A'] \geq \frac{2}{3}$, $\Pr[B'] \geq \frac{2}{3}$ and $\Pr[C'] \geq 1-\varepsilon$, and so by the union bound
$\Pr[A' \wedge B' \wedge C'] > 0$, assuming $\varepsilon < \frac{1}{3}$.
Fix an outcome for $(\tilde{x},\tilde{y},\tilde{z})$ where the events $A',B',C'$ happen and let $i_N \in [m],t_N \in \{ 1,\ldots,\theta^*\}$ be the indices with $\tilde{z}_{j_N,i_N,t_N}=1$. Then the interval index with $t_N \in I_{s_N}$ satisfies $\tilde{x}_{j_N,i_N,s_N}=1$. Hence
\begin{eqnarray*}
  (1-3\varepsilon) |U^*| &\stackrel{(B')}{\leq}& \sum_{j \in U^*} \tilde{y}_{j,j_N} \stackrel{LP}{=} \sum_{j \in U^*} \sum_{i \in [m]} \sum_{s \in \{ 0,\ldots,S-1\}} \tilde{x}_{(j,i,s),(j_N,i,s)} \stackrel{\tilde{x}_{j_N,i_N,s_N}=1}{=} \sum_{j \in U^*} \tilde{x}_{j,i_N,s_N} \\
  &\stackrel{LP}{\leq}& \underbrace{\sum_{j \in U^*} \sum_{t=1}^{\theta^*} \tilde{z}_{j,i_N,t}}_{\leq \theta^*\textrm{ by LP}} + \underbrace{\sum_{j \in U^*} \sum_{t > \theta^*} \tilde{z}_{j,i_N,t}}_{\leq 3\varepsilon |U^*|\textrm{ by }(A')} \leq \theta^* + 3\varepsilon |U^*|
\end{eqnarray*}
Rearranging gives $|U^*| \leq \frac{1}{1-6\varepsilon} \theta^*$, which is a contradiction for $\varepsilon \leq \frac{1}{12}$.
\end{proof}

\begin{lemma}
\label{l:jnot}
For a job $j \in J_0$, condition on the event that it got scheduled in the step (3) of the algorithm. 
Then, $C^A_j |_{(j \not \in J'_0)} \leq O(C_{j})$.
\end{lemma}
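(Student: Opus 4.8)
The plan is to reduce the statement to Lemma~\ref{l:ctsecondbatch} together with a short comparison between the $\alpha$-point $t_j^*$ and the fractional completion time $C_j$. Conditioning on the event $j\notin J_0'$ means exactly that $j$ was placed into some set $V_\ell'$ in step~(3), i.e.\ that $j$ was not separated from its $J_0$-ancestors by the CKR clustering of step~(1). By Theorem~\ref{thm:ProbUSeperatedByClustering}.(a), every cluster of this clustering satisfies $\textrm{diam}(V_\ell)\leq\Delta=\tfrac1{12}$ with respect to $d$, and hence $\textrm{diam}(V_\ell')\leq\tfrac1{12}$ as well since $V_\ell'\subseteq V_\ell$; this holds no matter which cluster $j$ ends up in, so no residual randomness from the clustering has to be dealt with.

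First I would observe that, because $J_0$ is scheduled before $J_1,J_2,\dots$ and step~(3) precedes the gap of step~(4) and the fallback of step~(5), the completion time $C^A_j$ of a job $j$ placed in step~(3) equals its completion time in the single-machine schedule of $V_\ell'$ ordered by increasing $t^*$-value. Applying Lemma~\ref{l:ctsecondbatch} with $U:=V_\ell'$ and $\varepsilon=\tfrac1{12}$, whose diameter hypothesis $\textrm{diam}(U)\leq\varepsilon$ is exactly the bound from the previous paragraph, then gives $C^A_j\mid_{(j\notin J_0')}\leq 2\,t_j^*$.

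It remains to bound $t_j^*$ by $C_j$. By the definition~\eqref{eq:AlphaPoints} of $t_j^*$, at time $t_j^*-1$ strictly less than a $(1-\varepsilon)$-fraction of $j$ is scheduled by the LP, so strictly more than an $\varepsilon$-fraction of $j$ lies in slots $\geq t_j^*$, and therefore
\[
  C_j \;=\; \sum_{i\in[m]}\sum_{t\in[T]} z_{j,i,t}\cdot t \;\geq\; t_j^*\cdot\sum_{i\in[m]}\sum_{t\geq t_j^*} z_{j,i,t} \;>\; \varepsilon\,t_j^*,
\]
so that $t_j^*\leq\tfrac1\varepsilon C_j=12\,C_j$, and combining with the previous step yields $C^A_j\mid_{(j\notin J_0')}\leq 2 t_j^*\leq 24\,C_j=O(C_j)$. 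I do not expect a genuine obstacle; the two points that need a little care are matching the clustering parameter $\Delta=\tfrac1{12}$ to the constant $\varepsilon=\tfrac1{12}$ required by Lemma~\ref{l:ctsecondbatch}, and verifying that the step-(3) completion time is already the global one, both of which are immediate from the algorithm. The substantive part, namely bounding the single-machine load by $2t_j^*$ via a Sherali--Adams volume argument, has already been carried out in Lemma~\ref{l:ctsecondbatch}.
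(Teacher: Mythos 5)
Your proposal is correct and follows essentially the same route as the paper's proof: invoke the diameter bound $\Delta = \tfrac{1}{12}$ on the cluster $V_\ell'$, apply Lemma~\ref{l:ctsecondbatch} with $\varepsilon = \tfrac{1}{12}$ to get $C^A_j \leq 2t_j^*$, and then bound $t_j^* \leq \tfrac{1}{\varepsilon}C_j$ from the definition of the $\alpha$-point. The only (harmless) difference is that you spell out the strict-inequality version of the $\alpha$-point argument and the reduction to the single-machine schedule slightly more explicitly than the paper does.
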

\begin{proof}
  Consider a job $j \in J_0 \setminus J_0'$. Then $j \in V_{\ell}'$ and by construction, the
  set $V_{\ell}'$ has diameter at most $\Delta = \frac{1}{12}$ w.r.t. $d$.
  Then Lemma~\ref{l:ctsecondbatch} guarantees that the completion time is $C_j^A \leq 2t_j^*$ where
  we set $\varepsilon = \frac{1}{12}$.
  Finally note that an $\varepsilon$-fraction of $j$ was finished at time $t_j^*$ or later and hence
  $C_j = \sum_{i \in [m]} \sum_{t \in [T]} z_{j,i,t} \cdot t \geq \varepsilon \cdot t_j^*$. 
 Putting everything together we obtain $C_j^A \leq \frac{2}{\varepsilon} C_j$.
\end{proof}

We have everything to finish the proof of the completion time result.

\begin{proof}[ Proof of Theorem \ref{thm:WeightedSumOfComplTimeMainTechResult}]
From Lemma \ref{lem:comp5}, for $k \geq 1$ and  $ j \in J_k$,  we have deterministically $C^{A}_j \leq O(\log n \cdot \log c) \cdot C_j$.
Now consider a job $j \in J_0$. Then,
\begin{eqnarray*}
\E[C^A_j]  &=& \E[C^A_j | (j \not \in J'_0)] \cdot \text{Pr}[ (j \not \in J'_0)] + \E[C^A_j | (j  \in J'_0)] \cdot \text{Pr}[ (j \in J'_0)]  \\
&\leq& O(C_j) + O(\log c) \cdot \frac{C_j}{c}  \cdot O(\log n) \cdot c \quad (\text{from Lemmas } \ref{l:ctsecondbatch1}, \ref{l:ctsecondbatch2}, \ref{l:jnot} ) \\
&\leq& O(\log n \cdot \log c) \cdot O(C_j)
\end{eqnarray*}
Finally note that the completion time of a job $j \in J_0$ is always bounded by $C_j^A \leq O(\log n) \cdot c$. The claim follows.
\end{proof}

\section*{Discussion and Open Problems}
We gave a new framework for scheduling jobs with precedence constraints and communication delays based on metric space clustering.
Our results take the first step towards resolving several important problems in this area. 
One immediate open question is to understand whether our approach can yield a constant-factor approximation for $\p \mid \Prec, c \mid \Cmax$.
A more challenging problem is to handle {\em non-uniform} communication delays in 
the problem $\p \mid \Prec, c_{jk} \mid \Cmax$, where $c_{jk}$ is the communication delay between jobs $j \prec k$.

\bibliographystyle{alpha}
\bibliography{prec-scheduling}

\appendix

\newpage

\section*{Appendix: The analysis of the CKR clustering}

In this section we reprove the statement of Theorem~\ref{thm:ProbUSeperatedByClustering}.
The claim from Theorem~\ref{thm:ProbUSeperatedByClustering}.(a) is easy to show as
\[
\textrm{diam}(V_i) = \max_{u,v \in V_i} d(u,v) \leq 2\max_{u \in V_i} \underbrace{d(u,c_i)}_{\leq \beta \Delta} \leq \Delta
\]
The tricky part is to show Theorem~\ref{thm:ProbUSeperatedByClustering}.(b).
The following definition and lemma are needed.
\begin{definition}
  Let us say that a node $w$ is a \emph{separator} for $U$, if
  \begin{enumerate*}
  \item[(A)] $\sigma(u) = w$ for at least one $u \in U$
  \item[(B)] $\sigma(u) \neq w$ for at least one $u \in U$
  \end{enumerate*}
  Moreover, if the set of separators of $U$ is non-empty, then we call the separator
  that comes first in the order $\pi$ the \emph{first separator}.
\end{definition}

Next, we show that nodes that are closer to the set $U$ are the most likely to be the first separator: 
\begin{lemma} \label{lem:CKR-ProbWsIsFirstSeparator}
  Let $w_1,\ldots,w_n$ be the nodes sorted so that $d(w_1,U) \leq \ldots \leq d(w_n,U)$.
  Then \\$\Pr[w_s\textrm{ is the first separator for }U] \leq \frac{4}{s} \cdot \frac{\textrm{diam}(U)}{\Delta}$.
\end{lemma}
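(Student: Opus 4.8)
The plan is to condition on the random parameter $\beta$ and then argue over the random ordering $\pi$, using that these two sources of randomness are independent. First I would reinterpret the assignment rule as a ``ball-growing'' process: going through the nodes in the order prescribed by $\pi$, each node $w$ claims every not-yet-claimed point within distance $\beta\Delta$ of $w$, and $\sigma(v)$ is exactly the node that claims $v$. In this language, if $w_s$ is the first separator for $U$, then no node preceding $w_s$ in $\pi$ can have claimed any point of $U$: such a node would either capture all of $U$ (impossible, as then $\sigma$ would be constant on $U$ and $U$ would not be separated) or capture some but not all of $U$, making it a separator and contradicting minimality of $w_s$. Hence, when $w_s$ is processed all of $U$ is still unclaimed, so $w_s$ claims exactly the points of $U$ lying within distance $\beta\Delta$ of it. For $w_s$ to be a (the first) separator this forces two conditions: (A) $d(w_s,U)\le\beta\Delta<\max_{u\in U}d(w_s,u)$, and (B) $w_s$ is the $\pi$-earliest node with $d(\cdot,U)\le\beta\Delta$.

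I would then bound the probabilities of $A$ and $B$ separately. For $A$, the triangle inequality gives $\max_{u\in U}d(w_s,u)\le d(w_s,U)+\textrm{diam}(U)$, so $A$ forces $\beta\Delta$ into a half-open interval of length at most $\textrm{diam}(U)$; since $\beta$ is uniform on $[\frac{1}{4},\frac{1}{2}]$, the quantity $\beta\Delta$ is uniform on an interval of length $\Delta/4$, whence $\Pr[A]\le\frac{4\,\textrm{diam}(U)}{\Delta}$. For $B$, observe that on the event $A$ every node $w_t$ with $t\le s$ satisfies $d(w_t,U)\le d(w_s,U)\le\beta\Delta$, so the set of nodes within distance $\beta\Delta$ of $U$ has size at least $s$; conditioned on $\beta$, the probability that $w_s$ is the $\pi$-minimum of a set of size at least $s$ is at most $\frac{1}{s}$, a bound independent of $\beta$.

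Putting these together, $\Pr[w_s\textrm{ is the first separator}]\le\Pr[A\cap B]=\E_\beta\big[\mathbf{1}_A\cdot\Pr_\pi[B\mid\beta]\big]\le\frac{1}{s}\,\Pr[A]\le\frac{4}{s}\cdot\frac{\textrm{diam}(U)}{\Delta}$, where we used the independence of $\beta$ and $\pi$. The step I expect to require the most care is the structural claim in the first paragraph: verifying precisely that every node preceding the first separator claims \emph{none} of $U$ (rather than all of it), since it is exactly this fact that upgrades ``$w_s$ separates $U$'' to ``$w_s$ is $\pi$-minimal among at least $s$ candidates'' and thereby produces the $\frac{1}{s}$ factor. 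One should also check that ties in the ordering $d(w_1,U)\le\cdots\le d(w_n,U)$ cause no difficulty, as they only enlarge the candidate set and make the bound stronger.
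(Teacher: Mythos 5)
Your proof is correct and takes essentially the same route as the paper's: both reduce the event to two independent necessary conditions, the $\beta$-window condition $d(w_s,U)\le\beta\Delta<\max_{u\in U}d(w_s,u)$ (probability at most $4\,\textrm{diam}(U)/\Delta$ since $\beta\Delta$ is uniform on an interval of length $\Delta/4$) and a $\pi$-minimality condition worth a factor $1/s$, and then multiply using independence of $\beta$ and $\pi$. Your ball-growing reformulation and your slightly stronger ordering event ($\pi$-earliest among all nodes within $\beta\Delta$ of $U$, rather than among $w_1,\ldots,w_s$ as in the paper) are only presentational differences.
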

\begin{proof}
  Let $u_{\min} := \textrm{argmin}\{ d(u,w_s) : u \in U\}$ and $u_{\max} := \textrm{argmax}\{ d(u,w_s) : u \in U\}$
  be the closest and furthest point from $w_s$.
  \begin{figure}
  \begin{center}
  \ifrenderfigures
    \begin{pspicture}(0,0)(6,3)
      \psellipse[linestyle=dashed](4.8,1.4)(1.45,0.8) \rput[r](3.6,2){$U$}
     \cnode*(0,0){2.5pt}{w5} \nput{-90}{w5}{$w_s$}
     \cnode*(1,0){2.5pt}{w4} \nput{-90}{w4}{$w_{s-1}$}
     \cnode*(3,0){2.5pt}{w3}
     \cnode*(4,0){2.5pt}{w2} \nput{-90}{w2}{$w_2$}
     \cnode*(5,0){2.5pt}{w1} \nput{-90}{w1}{$w_1$}
     \rput[c](2,0){$\ldots$}
     \cnode*(5,0.8){2.5pt}{u1}
     \cnode*(3.8,1){2.5pt}{u2} 
     \cnode*(6,1.5){2.5pt}{u3}
     \cnode*(5,2){2.5pt}{u4}
     \ncline{<->}{u2}{u3} \naput[labelsep=0pt]{$\leq \textrm{diam}(U)$}
     \nput{180}{u2}{$u_{\min}$}
     \nput{0}{u3}{$u_{\max}$}
   \end{pspicture}
   \fi
 \end{center}
 \caption{Visualization of CKR analysis}
\end{figure}
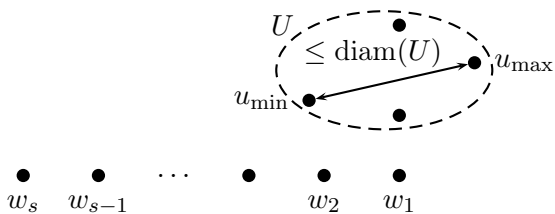
  We claim that in order for $w_s$ to be the first separator, both of the following
  conditions must hold:
\begin{enumerate*}
\item[(i)] $d(w_s,u_{\min}) \leq \beta \cdot \Delta < d(w_s,u_{\max})$
\item[(ii)] The order selects $w_s$ as the first node among $w_1,\ldots,w_s$.
\end{enumerate*}
We assume that $w_s$ is the first separator, and
suppose for the sake of contradiction that either (i) or (ii) (or both) are not satisfied.
We verify the cases:
\begin{itemize}
\item \emph{Case: $\beta \Delta < d(w_s,u_{\min})$.} Then no point will be assigned to $w_s$ and $w_s$ is not a separator at all.
\item \emph{Case: $\beta \Delta \geq d(w_s,u_{\max})$.} As $w_s$ is a separator, there are nodes $u_1,u_2 \in U$
  with $\sigma(u_1) = w_s$ and $\sigma(u_2) \neq w_s$. Then $\sigma(u_2)$ has to come earlier in the order $\pi$ as $d(w_s,u_2) \leq \beta \Delta$. Hence $w_s$ is not the first separator.
\item \emph{Case: $w_s$ is not first among $w_1,\ldots,w_s$ with respect to $\pi$.} By assumption there
  is an index $1 \leq s_2 <s$ such that $\pi(w_{s_2}) < \pi(w_s)$. As $w_s$ is a separator, there
  is a $u_1 \in U$ with $\sigma(u_1) = w_s$.
  Let $u_2 := \textrm{argmin}\{ d(u,w_{s_2}) : u \in U\}$ be the point in the set $U$ that is closest to $w_{s_2}$.
  Then $d(u_2,w_{s_2}) = d(w_{s_2},U) \leq d(w_s,U) \leq d(w_s,u_1) \leq \beta \Delta$. 
  Hence $u_2$ would be assigned to a point of order at most $\pi(w_{s'}) < \pi(w_s)$, and therefore $w_s$ is not the first separator.
\end{itemize}
Now we estimate the probability that $w_s$ is the first separator.
 The parameter $\beta$ and the permutation are chosen independently, so $(i)$ and $(ii)$ are
independent events. Clearly $\Pr[(ii)] = \frac{1}{s}$. Moreover
\[
  \Pr[(i)] = \frac{|[d(w_s,u_{\min}),d(w_s,u_{\max})] \cap [\frac{\Delta}{4},\frac{\Delta}{2}]|}{\Delta/4}
  \leq \frac{4d(u_{\min},u_{\max})}{\Delta} \leq \frac{4\textrm{diam}(U)}{\Delta},
\]
where we have used the triangle inequality and the notation  $|[a,b]| = b-a$ for the length of an interval.
\end{proof}
Now we can finish the proof of Theorem~\ref{thm:ProbUSeperatedByClustering}:
\begin{proof}[Proof of Theorem~\ref{thm:ProbUSeperatedByClustering}]
  As in Lemma~\ref{lem:CKR-ProbWsIsFirstSeparator}, let $w_1,\ldots,w_n$ be an order of nodes such that $d(w_1,U) \leq \ldots \leq d(w_n,U)$.
  Note that $L := |N(U,\frac{\Delta}{2})| \leq n$ is the maximal index with $d(w_L,U) \leq \frac{\Delta}{2}$.
  If $U$ is separated, then there has to be a first separator.
Therefore, the following holds:
  \begin{eqnarray*}
    \Pr[U\textrm{ is separated}] &\leq& \sum_{s=1}^L \Pr[w_s\textrm{ is first separator for }U] \\
    &\stackrel{\textrm{Lem~\ref{lem:CKR-ProbWsIsFirstSeparator}}}{\leq}& \underbrace{\sum_{s=1}^L \frac{1}{s} \cdot}_{\leq \ln(2L)} \frac{4\textrm{diam}(U)}{\Delta} \leq \ln(2L) \cdot \frac{4\textrm{diam}(U)}{\Delta}
  \end{eqnarray*}
  Here we use that $\Pr[w_s\textrm{ is first separator for }U] = 0$ for $s>L$ since a node $w_s$ that has a distance bigger than $\frac{\Delta}{2}$ to $U$ will never be a separator.
\end{proof}

\end{document}